\tikzset{arc/.style = {->,> = latex', line width=.75pt}}
\title{Balancing graph Voronoi diagrams with one more vertex}
\author[1,2]{Guillaume Ducoffe}
\affil[1]{\small National Institute for Research and Development in Informatics, Romania}
\affil[2]{\small University of Bucharest, Romania}
\date{}
\newtheorem{lemma}{Lemma}[section]
\newtheorem{proposition}[lemma]{Proposition}
\newtheorem{theorem}[lemma]{Theorem}
\newtheorem{corollary}[lemma]{Corollary}
\newtheorem{problem}{Problem}
\newsavebox{\mybox}
\theoremstyle{definition}
\newtheorem{conjecture}{Conjecture}
\newtheorem{remark}{Remark}
\begin{document}

\maketitle

\begin{abstract}
    Let $G=(V,E)$ be a graph with unit-length edges and nonnegative costs assigned to its vertices. Being given a list of pairwise different vertices $S=(s_1,s_2,\ldots,s_p)$, the {\em prioritized Voronoi diagram} of $G$ with respect to $S$ is the partition of $G$ in $p$ subsets $V_1,V_2,\ldots,V_p$ so that, for every $i$ with $1 \leq i \leq p$, a vertex $v$ is in $V_i$ if and only if $s_i$ is a closest vertex to $v$ in $S$ and there is no closest vertex to $v$ in $S$ within the subset $\{s_1,s_2,\ldots,s_{i-1}\}$. For every $i$ with $1 \leq i \leq p$, the {\em load} of vertex $s_i$ equals the sum of the costs of all vertices in $V_i$. The load of $S$ equals the maximum load of a vertex in $S$. We study the problem of adding one more vertex $v$ at the end of $S$ in order to minimize the load. This problem occurs in the context of optimally locating a new service facility ({\it e.g.}, a school or a hospital) while taking into account already existing facilities, and with the goal of minimizing the maximum congestion at a site. There is a brute-force algorithm for solving this problem in ${\cal O}(nm)$ time on $n$-vertex $m$-edge graphs. We prove a matching time lower bound for the special case where $m=n^{1+o(1)}$ and $p=1$, assuming the so called Hitting Set Conjecture of Abboud et al. On the positive side, we present simple linear-time algorithms for this problem on cliques, paths and cycles, and almost linear-time algorithms for trees, proper interval graphs and (assuming $p$ to be a constant) bounded-treewidth graphs.
    
    \smallskip
    {\bf Keywords}: graph Voronoi diagrams; facility location problems; HS-hardness; graph algorithms.
\end{abstract}

\section{Introduction}\label{sec:intro}

For undefined graph terminology, see~\cite{BoM08}. All graphs considered are finite, simple (they are loopless and have no multiple edges), connected and with unit-length edges.
Let $G=(V,E)$ be a graph.
Throughout the paper, let $n = |V|$ and $m = |E|$.
The distance between two vertices $u$ and $v$ is the minimum number of edges on a $uv$-path.
We denote this distance by $d_G(u,v)$, or simply $d(u,v)$ if $G$ is clear from the context.
Being given a vertex $v$ and a vertex subset $S=\{s_1,s_2,\ldots,s_p\}$, let also $d(v,S) = \min_{1 \leq i \leq p} d(v,s_i)$ denote the distance between $v$ and $S$.
A {\em Voronoi diagram} of $G$ with respect to $S$ is any partition $V_1,V_2,\ldots,V_p$ of $V$ such that, for every $i$ with $1 \leq i \leq p$, for every vertex $v$ with $v \in V_i$, $d(v,s_i) = d(v,S)$. 
This terminology originates from Computational Geometry, where the Voronoi diagram is a well-studied data structure~\cite{Aur91}.
An example of graph Voronoi diagram is given in Fig.~\ref{fig:voronoi}. 
We stress that in general, a Voronoi diagram is not uniquely defined, due to the possible existence of vertices $v$ such that $d(v,s_i) = d(v,s_j) = d(v,S)$, for some $i$ and $j$ with $1 \leq i < j \leq p$.

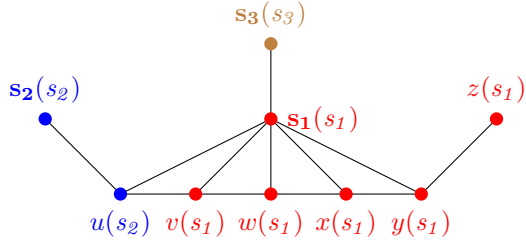
\begin{figure}[ht!]
    \centering
    \begin{tikzpicture}
        \draw (-3,4) -- (-2,3) -- (2,3) -- (3,4);
        \draw (-2,3) -- (0,4) -- (2,3);
        \draw (-1,3) -- (0,4) -- (1,3);
        \draw (0,5) -- (0,3);

        \node[circle,fill=brown,inner sep=0pt,minimum size=5pt,label=above:{$\color{brown}\mathbf{s_3} \mathit{(s_3)}$}] at (0,5) {};
        \node[circle,fill=red,inner sep=0pt,minimum size=5pt,label=right:{$\color{red}\mathbf{s_1} \mathit{(s_1)}$}] at (0,4) {};
    
        \node[circle,fill=blue,inner sep=0pt,minimum size=5pt,label=below:{$\color{blue}u \mathit{(s_2)}$}] at (-2,3) {};
        \node[circle,fill=red,inner sep=0pt,minimum size=5pt,label=below:{$\color{red}v \mathit{(s_1)}$}] at (-1,3) {};
        \node[circle,fill=red,inner sep=0pt,minimum size=5pt,label=below:{$\color{red}w \mathit{(s_1)}$}] at (0,3) {};
        \node[circle,fill=red,inner sep=0pt,minimum size=5pt,label=below:{$\color{red}x \mathit{(s_1)}$}] at (1,3) {};
        \node[circle,fill=red,inner sep=0pt,minimum size=5pt,label=below:{$\color{red}y \mathit{(s_1)}$}] at (2,3) {};
    
        \node[circle,fill=blue,inner sep=0pt,minimum size=5pt,label=above:{$\color{blue}\mathbf{s_2} \mathit{(s_2)}$}] at (-3,4) {};
        \node[circle,fill=red,inner sep=0pt,minimum size=5pt,label=above:{$\color{red}z \mathit{(s_1)}$}] at (3,4) {};
    \end{tikzpicture}
    \caption{A graph Voronoi diagram with respect to three vertices $s_1,s_2,s_3$. For each vertex of the graph, the corresponding vertex of $S$ is indicated in brackets. Since $d(s_1,u) = d(s_2,u)$, this diagram is not unique.}
    \label{fig:voronoi}
\end{figure}

To our best knowledge, Mehlorn was amongst the first to introduce the concept of graph Voronoi diagrams, which he used in the design of an approximation algorithm for the {\sc Steiner Tree} problem~\cite{Meh88}. Erwig further pursued the study of these objects, of which he presented some new algorithmic applications~\cite{Erw00}. Since then, graph Voronoi diagrams have appeared as a building block of even more algorithms~\cite{Cab18,CDW17,GKHM+18}. In~\cite{OSFS+08}, Okabe et al. introduced {\em generalized network Voronoi diagrams}, of which graph Voronoi diagrams are special cases. Other variations such as furthest color Voronoi diagrams were considered in~\cite{HKLS04}. We only consider graph Voronoi diagrams in what follows.

\medskip
Hakimi and Labb\'e introduced graph Voronoi diagrams, which they called Voronoi partitions, as a location-theoretic concept~\cite{HaL88}.
We build on their previous work and terminology.
Specifically, being given a vertex subset $S=\{s_1,s_2,\ldots,s_p\}$, and a Voronoi diagram $V_1,V_2,\ldots,V_p$ of $G$ with respect to $S$, we call {\em sites} the vertices in $S$, and we call {\em territories} the subsets $V_i$, with $1 \leq i \leq p$, of this partition. 
For concreteness, we may regard the sites in $S$ as locations of service facilities such as schools or hospitals.
With this interpretation in mind, every territory is the service area of some facility.
Several optimization criteria can be considered for these territories.
Being given some measure $\mu : V \times 2^V \to \mathbb{R}_{\geq 0}$ and a nonnegative integer $p$, the $\mu$-{\sc Voronoi} $p$-{\sc Center} problem is the problem of minimizing $\max_{1\leq i \leq p} \mu(s_i,V_i)$ amongst all possible graph Voronoi diagrams with respect to $p$ sites.
For instance, if $\mu(s_i,V_i) = \max_{v \in V_i} d(v,s_i)$, then this is the classical $p$-{\sc Center} problem.
%If $\mu(s_i,V_i) = \sum_{v \in V_i} d(v,s_i)$, then this is the classical $p$-{\sc Median} problem.
%
Throughout the paper, let some function $\pi : V \to \mathbb{R}_{\geq 0}$ assign nonnegative costs to the vertices.
We call $p$-{\sc Balance} the $\mu_\pi$-{\sc Voronoi} $p$-{\sc Center} problem for the measure $\mu_\pi$ such that $\mu_\pi(s_i,V_i) = \sum_{v \in V_i}\pi(v)$.
For concreteness, we may regard $\mu_\pi(s_i,V_i)$, which we call the {\em load} of site $s_i$, as the population size in a service area.
We call our objective function $\max_{1 \leq i \leq p}\mu_\pi(s_i,V_i)$ the load of the Voronoi diagram.
The $p$-{\sc Balance} problem is NP-hard if $p$ is part of the input~\cite{HLS92}.
Interestingly, the NP-hardness reduction in~\cite{HLS92} outputs instances of the problem such that if there exists a graph Voronoi diagram with load at most some value $q$, then there always exist $p$ sites minimizing the load such that every vertex is at minimal distance of exactly one site.
This implies that the difficulty of this problem does not come from the non-uniqueness of a Voronoi diagram.

In practice, we must account for the location of existing service facilities in order to optimally locate a new one.
As a starter, let us consider the following problem (which is not quite yet what we study in the paper).
We are given a subset $S$ of sites, and the goal is to compute a new site $v$, with $v \notin S$, and a Voronoi diagram with respect to $S \cup \{v\}$ such that the load is minimized.
This problem is NP-hard even if $S$ is a singleton and $G$ is a complete graph.
Indeed, in this situation we can select any vertex $v \notin S$ as the new site, but the difficulty consists in computing an optimal graph Voronoi diagram.
The load of any such diagram must be at least $\pi(G)/2$, where $\pi(G)$ is the sum of the cost of all vertices.
Moreover, deciding whether there exists a solution with load $\pi(G)/2$ is equivalent to the well-known {\sc Partition} problem~\cite{GaJ79}.
In order to circumvent this negative result, we show that it is enough to force the Voronoi diagram to be uniquely defined.
More precisely, motivated by the practical need to minimally reorganize the diagram after inserting a new site, we follow Erwig's suggestion to prioritize the sites~\cite{Erw00}.
That is, $S=(s_1,s_2,\ldots,s_p)$ is a list, and every vertex $v$ must be assigned to its first closest site in the list.
Formally, for every $i$ with $1 \leq i \leq p$, we have $v \in V_i$ if and only if $d(v,s_i) = d(v,S)$ {\em and} $d(v,s_j) > d(v,S)$ for every $j$ with $1 \leq j < i$.
We call it the {\em prioritized Voronoi diagram}, or simply the Voronoi diagram of $G$ with respect to $S$, and we denote it in what follows by $\mathbf{Vor}(G,S)$.
Other conditions could be also considered.
For instance, one could add every vertex $v$ to the territories of {\em all} its closest sites in $S$, which is actually what Hakimi et al. did in~\cite{HLS92}.
However, having a vertex partition is important in some practical scenarios ({\it e.g.}, a student attends to only one school).
Furthermore, we stress that most results in the paper do not really depend on which condition we choose (a notable exception being the hardness result of Theorem~\ref{thm:hs-hard}).

\medskip
For every $i$ with $1 \leq i \leq p$, we denote by $\ell_\pi(s_i,S)$ the load of site $s_i$ in $\mathbf{Vor}(G,S)$.
The load of $\mathbf{Vor}(G,S)$ is denoted by $L_\pi(S)$. 
To our best knowledge, the following problem has not been considered before:

\begin{center}
	\fbox{
		\begin{minipage}{.95\linewidth}
			\begin{problem}[\textsc{Balanced Vertex}]\
				\label{prob:balanced-vertex} 
					\begin{description}
					\item[Input:] A graph $G=(V,E)$; A cost function $\pi : V \to \mathbb{R}_{\geq 0}$; A list $S=(s_1,s_2,\ldots,s_p)$ of pairwise different vertices.
					\item[Output:] A vertex $v \in V \setminus S$ such that $L_\pi(S+v)$ is minimized, where $S+v=(s_1,s_2,\ldots,s_p,v)$.
				\end{description}
			\end{problem}     
		\end{minipage}
	}
\end{center}

Closest to our work is the one-round Voronoi game on graphs~\cite{BBDS15}.
In the latter two-player game, Player $1$ first selects a subset $S$ of sites with some fixed cardinality.
Then, Player $2$ must select a $k$-set $S'$ of new sites, with $S' \cap S = \emptyset$, in order to {\em maximize} the cumulative load of all $k$ sites in $S'$ in the graph Voronoi diagram with respect to $S' + S$ (the new sites in $S'$ have higher priority than the former sites in $S$). 
By contrast, in our problem we aim at minimizing the load at any site, and we also account for the load of all former sites in $S$ and not just for the load of the new site.
Furthermore, the new site is given the least priority.

\paragraph{Our contributions.}
We initiate the complexity study of the {\sc Balanced Vertex} problem.

On general $n$-vertex $m$-edge graphs, the problem can be solved in ${\cal O}(nm)$ time by brute force, simply by considering each vertex $v \notin S$ (Theorem~\ref{thm:alg-gal}).
We prove a matching time lower bound assuming the so-called {\em Hitting Set Conjecture} (Theorem~\ref{thm:hs-hard}).
The latter was introduced in~\cite{AVW16} by Abboud et al. as a way to explain our lack of progress toward computing faster either the radius or the median of a graph. 
Since then, to our best knowledge, broader implications of the {Hitting Set Conjecture} have not been investigated.
Our work makes a step in this direction.
A better-studied conjecture, that is implied by the Hitting Set Conjecture and by the Strong Exponential Time Hypothesis, is the so called {\em Orthogonal Vectors Conjecture}~\cite{Wil05}.
We leave as an open problem whether we can prove a quadratic time lower bound for the {\sc Balanced Vertex} problem assuming this conjecture.

In the second and longest part of the paper, we break the quadratic barrier for {\sc Balanced Vertex} on several graph classes, with either some tree-like or path-like underlying structure. 
For that, we adapt tools for fast distance computation, one of them being the orthogonal range query framework of Cabello and Knauer~\cite{CaK09}, in order to compute the load $\ell_\pi(v,S+v)$ of every potential new site $v \notin S$.
If there are constantly many sites in $S$, then the latter is sufficient in order to solve {\sc Balanced Vertex}.
Indeed, by computing $\ell_{\pi'}(v,S'+v)$ for all sublists $S'$ of size one, and for some actualized cost functions $\pi'$, we can also compute the load $\ell_\pi(s,S+v)$ of every former site $s \in S$, and therefore we can compute $L_\pi(S+v)$. 
We detail this approach in Sec.~\ref{sec:tw}, where it is applied to bounded treewidth graphs (Theorem~\ref{thm:tw}).
However, solving {\sc Balanced Vertex} for an arbitrary (possibly nonconstant) number of sites looks more challenging.
For $n$-node trees, we combine a standard but rather intricate dynamic programming approach with centroid decomposition, so as to derive an ${\cal O}(n\log{n})$-time algorithm (Theorem~\ref{thm:tree}). The running time can be improved to ${\cal O}(n\log{|S|})$ (see Remark~\ref{rk:tree}).
Our second main contribution in this part is for proper interval graphs, for which we get an ${\cal O}(m+n\log{n})$-time algorithm (Theorem~\ref{thm:proper-int}).
This result is derived from prior works on distance labeling schemes for this class of graphs~\cite{GaP08}.

\paragraph{Organization of the paper.}
Basic results about the {\sc Balanced Vertex} problem, along with a polynomial-time algorithm for solving it and a matching conditional time lower bound, are presented in Sec.~\ref{sec:gal}.
Faster algorithms for special graph classes are presented in Sec.~\ref{sec:lin}.
We conclude the paper in Sec.~\ref{sec:ccl} with some open questions and perspectives.

\paragraph{Additional notations and terminology.}
Let $G=(V,E)$ be a graph.
The open neighbourhood of a vertex $v$ is denoted in what follows by $N_G(v) = \{u \in V \mid uv \in E\}$.
We denote by $N_G[v] = N_G(v) \cup \{v\}$ its closed neighbourhood.
The ball of center $v$ and radius $k$ is denoted by $N_G^k[v] = \{u \in V \mid d_G(u,v) \leq k\}$.
For every vertices $u$ and $v$, let $I_G(u,v) = \{ w \in V \mid d_G(u,v) = d_G(u,w) + d_G(w,v) \}$ denote the metric interval between $u$ and $v$.
Let $W_G(u,v) = \{ w \in V \mid d_G(u,w) < d_G(v,w) \}$ contain all vertices closer to $u$ than to $v$.
In the same way, for a vertex subset $S$ (a vertex list, resp.), let $N_G(S) = \bigcup_{s \in S}N_G(s) \setminus S$ and $N_G[S] = N_G(S) \cup S$ denote its open and closed neighbourhoods, respectively.
The subgraph induced by a subset $S$ is denoted by $G[S]$. 
For every vertex subset $S$ (vertex list, resp.), let $Proj_G(v,S) = \{ s \in S \mid d_G(v,s) = d_G(v,S) \}$ be the metric projection of a vertex $v$ on $S$.
We sometimes omit the subscript if $G$ is clear from the context.
A vertex list $S=(s_1,s_2,\ldots,s_p)$ such that $s_i \neq s_j$ for every $i,j$ with $1 \leq i < j \leq p$ is called in what follows a proper vertex list.
We denote the territories $V_1,V_2,\ldots,V_p$ of $\mathbf{Vor}(G,S)$ by $\mathbb{T}(s_1,S),\ldots,\mathbb{T}(s_p,S)$.
Finally, being given a cost function $\pi : V \to \mathbb{R}_{\geq 0}$, the cost of a vertex subset $X$ is defined as $\pi(X) = \sum_{x \in X} \pi(x)$.

\section{The general case}\label{sec:gal}

We here consider the {\sc Balanced Vertex} problem on general graphs.
In Sec.~\ref{sec:basic}, we prove a few simple properties of prioritized Voronoi diagrams.
We derive from the latter a quadratic-time algorithm, in the number of edges, for solving {\sc Balanced Vertex}.
This algorithm is presented in Sec.~\ref{sec:alg}.
In Sec.~\ref{sec:seth}, our quadratic running time is proved to be optimal assuming the Hitting Set Conjecture.

\subsection{Basic properties}\label{sec:basic}

We start the section with some easy results on the territories in a Voronoi diagram.
In particular, we prove the following relation between territories and metric intervals.

\begin{lemma}\label{lem:metric-interval}
Let $G=(V,E)$ be a graph and let $S=(s_1,s_2,\ldots,s_p)$ be a proper vertex list.
For every vertex $v \in \mathbb{T}(s_i,S)$, with $1 \leq i \leq p$, we have $I(v,s_i) \subseteq \mathbb{T}(s_i,S)$.
\end{lemma}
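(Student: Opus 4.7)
The plan is to fix an arbitrary $w \in I(v,s_i)$ and verify directly that $w$ satisfies both defining conditions of the prioritized territory $\mathbb{T}(s_i,S)$: namely, that $s_i$ achieves the minimum distance from $w$ to $S$, and that no site $s_j$ with $j<i$ also achieves this minimum. The whole argument will rest on the interval identity $d(v,s_i)=d(v,w)+d(w,s_i)$ combined with the triangle inequality through $w$.

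For the first condition, I would argue by contradiction: assume some site $s_j$ is strictly closer to $w$ than $s_i$, i.e.\ $d(w,s_j)<d(w,s_i)$. Concatenating a shortest $vw$-path with a shortest $ws_j$-path gives $d(v,s_j)\le d(v,w)+d(w,s_j)<d(v,w)+d(w,s_i)=d(v,s_i)$, making $s_j$ strictly closer to $v$ than $s_i$. This contradicts $v\in\mathbb{T}(s_i,S)$, which forces $d(v,s_i)=d(v,S)$.

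For the second (prioritization) condition, I would repeat the same routine but with a weak inequality: assuming $d(w,s_j)\le d(w,s_i)$ for some $j<i$ would yield $d(v,s_j)\le d(v,w)+d(w,s_i)=d(v,s_i)=d(v,S)$, hence $d(v,s_j)=d(v,S)$ with $j<i$, again contradicting $v\in\mathbb{T}(s_i,S)$.

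There is no real obstacle; the only subtlety is to keep the strict/weak inequalities aligned with the asymmetric definition of the prioritized Voronoi diagram (strict in the first part, weak in the second), so that both clauses of the definition of $\mathbb{T}(s_i,S)$ are established against the correct negation.
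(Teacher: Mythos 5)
Your argument is correct and rests on exactly the same mechanism as the paper's proof: the interval identity $d(v,s_i)=d(v,w)+d(w,s_i)$ combined with the triangle inequality through $w$, which transfers any violation of the territory conditions at $w$ back to a violation at $v$. The only difference is organizational—you verify the two clauses of the definition of $\mathbb{T}(s_i,S)$ separately, while the paper runs a single contradiction by naming the $j$ with $u\in\mathbb{T}(s_j,S)$ and deducing all the forced equalities at once—but the underlying reasoning is the same.
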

\begin{proof}
Suppose by contradiction the existence of some vertex $u \in I(v,s_i) \setminus \mathbb{T}(s_i,S)$.
Let $j$, with $1 \leq j \leq p$, such that $u \in \mathbb{T}(s_j,S)$.
In particular, $d(u,S) = d(u,s_j) \leq d(u,s_i)$.
Note that $d(v,s_j) \leq d(v,u) + d(u,s_j) \leq d(v,u) + d(u,s_i) = d(v,s_i) = d(v,S)$.
Therefore, we must have $d(u,s_j) = d(u,s_i) = d(u,S)$ and $d(v,s_i) = d(v,s_j) = d(v,S)$.
However, because $v \in \mathbb{T}(s_i,S)$, $j > i$, thus contradicting that $u \in \mathbb{T}(s_j,S)$.
\end{proof}

We immediately deduce from Lemma~\ref{lem:metric-interval} the following result about territories:

\begin{corollary}\label{cor:connected-cells}
Let $G=(V,E)$ be a graph and let $S=(s_1,s_2,\ldots,s_p)$ be a proper vertex list.
For every $i$ with $1 \leq i \leq p$, $\mathbb{T}(s_i,S)$ is a connected subset.
\end{corollary}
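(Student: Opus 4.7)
The plan is to derive connectivity directly from Lemma~\ref{lem:metric-interval}. The key observation is that any shortest path in $G$ from $v$ to $s_i$ is a sequence of vertices all lying in the metric interval $I(v,s_i)$. Hence, if $v \in \mathbb{T}(s_i,S)$, then Lemma~\ref{lem:metric-interval} gives us a $vs_i$-shortest path whose every vertex lies inside $\mathbb{T}(s_i,S)$.

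From here, connectivity of $\mathbb{T}(s_i,S)$ follows by a standard ``star'' argument: I would fix $s_i$ as a ``hub'' vertex, and for each $v \in \mathbb{T}(s_i,S)$ exhibit a path in $G[\mathbb{T}(s_i,S)]$ from $v$ to $s_i$, obtained from any shortest $vs_i$-path in $G$ (which exists since $G$ is connected). Any two vertices of $\mathbb{T}(s_i,S)$ can then be joined in $G[\mathbb{T}(s_i,S)]$ by concatenating their respective paths to $s_i$.

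There is no real obstacle here; the only thing to be careful about is the degenerate case $\mathbb{T}(s_i,S) = \emptyset$, which is vacuously connected, and the case $v = s_i$, where the path is trivial. Both are handled without extra work.
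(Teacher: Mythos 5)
Your argument is correct and is exactly the implicit reasoning behind the paper's one-line remark that the corollary is ``immediately deduced'' from Lemma~\ref{lem:metric-interval}: shortest $vs_i$-paths lie in $I(v,s_i) \subseteq \mathbb{T}(s_i,S)$, so every vertex of the territory reaches $s_i$ inside $G[\mathbb{T}(s_i,S)]$. One small simplification: since $s_i \in \mathbb{T}(s_i,S)$ always holds (as $d(s_i,S)=0$ and $S$ is a proper list), the empty case you guard against never actually arises.
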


We often use in our proofs the following structural description of the last territory in a diagram.
Similar but more complicated descriptions of the other territories can be derived from this result.

\begin{lemma}\label{lem:struct-territory}
Let $G=(V,E)$ be a graph and let $S=(s_1,s_2,\ldots,s_p)$ be a proper vertex list.
For every vertex $v \notin S$, we have that $\mathbb{T}(v,S+v) = \bigcup_{1 \leq j \leq p} \left( W(v,s_j) \cap \mathbb{T}(s_j,S) \right)$
\end{lemma}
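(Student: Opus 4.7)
The plan is to prove the equality by double inclusion, relying on two observations. First, because $v$ is appended at the end of the list $S+v$, the lowest priority of $v$ forces every vertex $u \in \mathbb{T}(v,S+v)$ to satisfy $d(u,v) < d(u,s_j)$ for every $j \in \{1,\ldots,p\}$, i.e.\ $u$ lies in $W(v,s_j)$ simultaneously for all $j$. Second, since the prioritized Voronoi diagram $\mathbf{Vor}(G,S)$ is a partition of $V$, every vertex $u$ belongs to exactly one territory $\mathbb{T}(s_j,S)$, and in this case $d(u,s_j) = d(u,S)$ by definition.

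For the forward inclusion, I would take $u \in \mathbb{T}(v,S+v)$ and pick the (unique) index $j$ with $u \in \mathbb{T}(s_j,S)$. By the first observation $u \in W(v,s_j)$, so $u$ belongs to the right-hand side.

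For the reverse inclusion, I would take $u \in W(v,s_j) \cap \mathbb{T}(s_j,S)$ for some $j$. Then $d(u,v) < d(u,s_j) = d(u,S)$, so for every $k \in \{1,\ldots,p\}$ we have $d(u,s_k) \geq d(u,S) > d(u,v)$. Hence $v$ is strictly closer to $u$ than any site of $S$, so $v$ is the unique closest site in $S+v$ regardless of priorities, and $u \in \mathbb{T}(v,S+v)$.

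There is no real obstacle here: both inclusions follow immediately from unfolding the definition of a prioritized Voronoi diagram together with the fact that the newly inserted vertex $v$ occupies the last position of the list. The only mild care required is to check that the argument also handles the vertex $u = v$ itself, which trivially lies in some $\mathbb{T}(s_j,S)$ and satisfies $d(v,v) = 0 < d(v,s_j)$, hence belongs to $W(v,s_j) \cap \mathbb{T}(s_j,S)$.
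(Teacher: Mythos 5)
Your proof is correct and follows essentially the same route as the paper's: both arguments unfold the definition of the prioritized Voronoi diagram to observe that, since $v$ has lowest priority, $u \in \mathbb{T}(v,S+v)$ iff $d(u,v) < d(u,s_j)$ for the (unique) $j$ with $u \in \mathbb{T}(s_j,S)$, which is exactly the condition $u \in W(v,s_j)$. The paper states this more compactly as a single equivalence for each fixed $j$, whereas you spell out the two inclusions explicitly; the content is the same.
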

\begin{proof}
Let $j$ with $1 \leq j \leq p$ be fixed.
For every vertex $u \in \mathbb{T}(s_j,S)$, we have that $u \in \mathbb{T}(v,S+v)$ if and only if $d(u,v) < d(u,S) = d(u,s_j)$.
The vertices that satisfy this inequality are exactly those of $W(v,s_j)$. 
\end{proof}

\subsection{A quadratic-time algorithm}\label{sec:alg}

We present our quadratic-time algorithm for solving {\sc Balanced Vertex}.
It is based on the following observation that for every proper vertex list, we can compute the prioritized Voronoi diagram in linear time.

\begin{proposition}\label{prop:compute-voronoi}
Let $G=(V,E)$ be a graph and let $S=(s_1,s_2,\ldots,s_p)$ be a proper vertex list.
We can compute $\mathbf{Vor}(G,S)$ in ${\cal O}(n+m)$ time.
\end{proposition}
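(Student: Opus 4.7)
The plan is to run a multi-source breadth-first search from all sites simultaneously, with the initial enqueueing performed in the order $s_1, s_2, \ldots, s_p$. Specifically, I initialize a FIFO queue $Q$ by pushing $s_1, s_2, \ldots, s_p$ in that exact order, setting $\text{dist}[s_i] \gets 0$ and $\text{label}[s_i] \gets i$ for every $i$; all other vertices start with $\text{dist}[v] \gets +\infty$ and $\text{label}[v]$ undefined. Then, whenever I pop a vertex $w$ from $Q$, I scan $N_G(w)$ and, for every unvisited neighbor $u$, I set $\text{dist}[u] \gets \text{dist}[w]+1$, $\text{label}[u] \gets \text{label}[w]$, and push $u$ onto $Q$. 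At the end, the territories are read off directly: $\mathbb{T}(s_i,S) = \{\,v \in V : \text{label}[v]=i\,\}$. Clearly this uses $\mathcal{O}(n+m)$ time and space, since each vertex is enqueued exactly once and each edge is scanned at most twice.

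The correctness boils down to showing that the $\text{label}$ array actually realizes the prioritized assignment, i.e.\ that $\text{label}[v]$ equals the smallest index $i$ such that $d(v,s_i)=d(v,S)$. I would prove by induction on $d=\text{dist}[v]$ two invariants jointly: (i) $\text{dist}[v]=d(v,S)$ for every $v$, and (ii) among all vertices at a given distance $d$ from $S$, they are popped from $Q$ in non-decreasing order of $\text{label}$, and each one receives the smallest-indexed closest site. The base case $d=0$ is immediate because sites are pushed in index order. For the inductive step, fix a vertex $v$ at distance $d+1$ from $S$, and let $i^*$ be the smallest index with $d(v,s_{i^*})=d+1$. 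Among the neighbors of $v$ at distance $d$ from $S$, consider those lying on a shortest $v$-$s_{i^*}$ path; by Lemma~\ref{lem:metric-interval} applied to a diagram which coincides with our targeted $\mathbf{Vor}(G,S)$, at least one such neighbor $w^*$ has label $i^*$. By the inductive hypothesis, $w^*$ is popped no later than any neighbor of $v$ whose label exceeds $i^*$, and strictly before any neighbor at distance $d$ with a larger label. Hence $v$ is first discovered through some neighbor of label $\leq i^*$; combined with the fact that no label smaller than $i^*$ can appear at a distance-$d$ neighbor (else $d(v,S)\leq d$), the label assigned to $v$ is exactly $i^*$, and the FIFO order then extends invariant (ii) to level $d+1$.

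The only subtle point in this plan is checking that the FIFO discipline really enforces the priority among equidistant sites; the key lemma to formalize it is that throughout the execution the sequence of popped labels is lexicographically ordered by $(\text{dist},\text{label})$, which follows from the fact that BFS enqueues vertices in order of distance together with the initial ordering $s_1,\ldots,s_p$ at distance $0$. Given that, the argument above shows that $\text{label}[v]$ coincides with the prioritized assignment for every vertex $v$, and therefore the partition output by the algorithm equals $\mathbf{Vor}(G,S)$. The total running time is $\mathcal{O}(n+m)$ as claimed.
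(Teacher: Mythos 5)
Your proposal is correct, and it is close in spirit to the paper's proof: both compute distances to $S$ via BFS, both assign to each vertex $v$ the minimum-priority label occurring among its predecessors (neighbours at distance $d(v,S)-1$), and both justify this assignment through Lemma~\ref{lem:metric-interval}, which guarantees that the second vertex on a shortest $v$-$s_{i^*}$ path already carries the label $i^*$. The implementation, however, differs in a way worth noting. The paper first runs a BFS from the supervertex $S$ to obtain the distance layers, and then in a second sweep explicitly computes $r(v)$ as the priority-minimum of $\{r(u): u \in N(v),\ d(u,S)=d(v,S)-1\}$; correctness is immediate because the minimum is taken explicitly. You instead fuse this into a single multi-source FIFO BFS where $s_1,\ldots,s_p$ are enqueued in priority order, and you let the first vertex to discover $v$ determine its label. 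This trick only works because the FIFO discipline preserves a lexicographic ordering of dequeues by $(\mathrm{dist},\mathrm{label})$, and proving that invariant is the real extra content of your argument (it does hold, and your sketch of it is sound). A small slip: when you rule out a distance-$d$ neighbour carrying a label $j<i^*$, the contradiction you get is with the minimality of $i^*$, not with ``$d(v,S)\leq d$'' as your parenthetical claims --- such a neighbour $w$ would give $d(v,s_j)\leq 1+d = d(v,S)$, hence $d(v,s_j)=d(v,S)$ with $j<i^*$. The conclusion is unchanged. Overall both arguments are correct and $\mathcal{O}(n+m)$; yours buys a one-pass algorithm at the cost of a more delicate ordering invariant that the paper's explicit-min formulation avoids entirely.
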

\begin{proof}
We run a BFS on $S$ ({\it i.e.}, we consider all vertices in $S$ as being one supervertex, on which we run a BFS).
In doing so, we order the vertices $v \in V \setminus S$ by nondecreasing distance to $S$.
Then, we compute $r(v)$, for every vertex $v$, so that:
$$r(v) = \begin{cases} v\text{, if }v \in S \\ 
\min\{r(u) \mid u \in N(v) \ \text{and} \ d(u,S) = d(v,S)-1\}\text{ otherwise.}\end{cases}$$
We prove by induction on $d(v,S)$ that $r(v)$ is the unique site of $S$ such that $v \in \mathbb{T}(r(v),S)$.
This is straightforward if $v \in S$.
Thus, from now on we assume that $d(v,S) \geq 1$.
For every $u \in N(v)$, if $d(u,S) < d(v,S)$, then by the induction hypothesis we have $r(u) \in Proj(v,S)$.
By Lemma~\ref{lem:metric-interval}, the unique site $s_i \in S$ such that $v \in \mathbb{T}(s_i,S)$ is contained in $\{r(u) \mid u \in N(v) \ \text{and} \ d(u,S) = d(v,S)-1\}$.
Therefore, $r(v) = s_i$.
\end{proof}

In order to solve {\sc Balanced Vertex}, it is sufficient to compute all prioritized Voronoi diagrams with one more site.
We summarize our observations as follows.

\begin{theorem}\label{thm:alg-gal}
%Let $G=(V,E)$ be a graph, let $\pi : V \to \mathbb{R}_{\geq 0}$ be any cost function and let $S=(s_1,s_2,\ldots,s_p)$ be a proper vertex list.
We can solve {\sc Balanced Vertex} in ${\cal O}(nm)$ time.
\end{theorem}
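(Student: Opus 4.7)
The plan is a direct brute-force enumeration over candidate new sites, combined with Proposition~\ref{prop:compute-voronoi}. For every vertex $v \in V \setminus S$, the list $S + v = (s_1, s_2, \ldots, s_p, v)$ is a proper vertex list, so Proposition~\ref{prop:compute-voronoi} applies and yields $\mathbf{Vor}(G, S+v)$ in ${\cal O}(n+m)$ time. Given this partition, a single pass over $V$ lets us accumulate $\ell_\pi(s_i, S+v) = \pi(\mathbb{T}(s_i, S+v))$ for each $i \leq p$ and $\ell_\pi(v, S+v) = \pi(\mathbb{T}(v, S+v))$, all in ${\cal O}(n)$ additional time. Taking the maximum of these $p+1$ loads produces $L_\pi(S+v)$.

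First I would perform this routine for each of the at most $n$ candidates $v \notin S$, keeping track of the current best candidate together with the value $L_\pi(S+v)$ realized so far. At the end, I return the stored best vertex. Correctness is immediate from the definition of the \textsc{Balanced Vertex} problem, since we have explicitly considered every eligible candidate and computed its exact objective value.

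For the running time, each iteration costs ${\cal O}(n+m)$: ${\cal O}(n+m)$ for Proposition~\ref{prop:compute-voronoi} and ${\cal O}(n)$ for the load accumulation. Summing over the at most $n$ candidates gives ${\cal O}(n(n+m))$. Since $G$ is connected we have $m \geq n-1$, so this simplifies to ${\cal O}(nm)$, matching the claimed bound.

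There is no real obstacle here: the only mildly delicate point is to confirm that Proposition~\ref{prop:compute-voronoi} still applies when a new vertex is appended, but this is automatic since $v \notin S$ ensures that $S+v$ remains a proper vertex list. Everything else is bookkeeping.
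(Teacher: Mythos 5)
Your proposal is correct and follows essentially the same brute-force approach as the paper: for each candidate $v \notin S$, compute $\mathbf{Vor}(G,S+v)$ via Proposition~\ref{prop:compute-voronoi} and scan the territories to obtain $L_\pi(S+v)$. The extra remarks about $S+v$ remaining a proper list and $m \geq n-1$ are sound but inessential.
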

\begin{proof}
For every vertex $v \in V \setminus S$, we compute $\mathbf{Vor}(G,S+v)$.
By Proposition~\ref{prop:compute-voronoi}, this computation takes ${\cal O}(n+m)$ time.
Furthermore, we can compute $L_\pi(S+v)$ in additional ${\cal O}(n)$ time if we scan each territory of $\mathbf{Vor}(G,S+v)$ sequentially.
The running time follows since there are at most $n$ vertices $v$ to be considered.
\end{proof}

\subsection{HS-hardness}\label{sec:seth}

Our main technical contribution in this section is a conditional time lower bound for {\sc Balanced Vertex}, which matches the quadratic running time of the simple algorithm presented in Theorem~\ref{thm:alg-gal}.
Our hardness result is based on the following hypothesis.

\begin{conjecture}[The Hitting Set Conjecture~\cite{AVW16}]\label{conj:hs}
There is no $\varepsilon > 0$ such that for all $c \geq 1$, there is an algorithm that given two lists $A,B$ of $n$ subsets of a universe $U$ of size at most $c \log{n}$, can decide in ${\cal O}(n^{2-\varepsilon})$ time if there is a set in the first list that intersects every set in the second list, {\it i.e.}, a ``hitting set''.
\end{conjecture}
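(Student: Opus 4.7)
The statement in question is the \emph{Hitting Set Conjecture} itself, a foundational hardness hypothesis of fine-grained complexity introduced by Abboud, Vassilevska Williams, and Wang. As a conjecture, it has no proof in the classical sense; the working hypothesis is precisely that no truly subquadratic algorithm exists. Thus any ``proof proposal'' must take one of two forms: (i) a reduction from an even more established hardness assumption, yielding a conditional proof; or (ii) an unconditional super-linear lower bound in a general model of computation, which is well beyond current techniques. I will aim for route (i).

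For the reduction-based plan, the natural starting point is the Strong Exponential Time Hypothesis (SETH). I would follow the standard ``split and list'' template: given a $k$-SAT instance on $N$ variables, split the variables into two halves and enumerate the $2^{N/2}$ partial assignments on each half. The universe $U$ is taken to be the set of clauses, so $|U| = \mathrm{poly}(N) = \mathcal{O}(\log n)$ with $n = 2^{N/2}$. For each partial assignment $\alpha$ in the first (resp. second) half, put into $A$ (resp. $B$) the set of clauses \emph{not} yet satisfied by $\alpha$. Then an $\alpha \in A$ hits every $\beta \in B$ precisely when, for every second-half assignment, their union satisfies every clause --- which would witness a tautology-like condition rather than SAT. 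One would then dualize or modify the encoding so that the ``hits every'' quantifier aligns with the $\exists$ quantifier of SAT, which is the delicate combinatorial step.

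The main obstacle is exactly this quantifier mismatch. Hitting Set has a $\Sigma_2$-flavored structure $(\exists a \in A)(\forall b \in B)\, a \cap b \neq \emptyset$, whereas SAT and Orthogonal Vectors are naturally $\Sigma_1$. A direct reduction from OVC produces a hitting-set instance with a single quantifier layer and therefore falls short. Bridging this gap requires either a $\Sigma_2$-variant of SETH (which is itself conjectural) or a clever auxiliary gadget that internalizes the inner universal quantifier as membership in a log-sized family --- the latter is precisely the kind of technique for which no general recipe is known.

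Accordingly, I expect no fully satisfactory proof to emerge from this plan: the genuinely hard step is handling the $\forall$-layer over $n$ log-sized sets without blowing up the universe, and any argument avoiding this issue likely proves only a weaker statement (e.g., an OV-style lower bound). The honest conclusion is that the Hitting Set Conjecture should be treated as an axiom of fine-grained complexity, invoked --- as in Theorem~\ref{thm:hs-hard} of this paper --- to derive further conditional lower bounds, rather than established from more primitive assumptions.
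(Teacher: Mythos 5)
You are right that this statement is a conjecture imported from Abboud, Vassilevska Williams, and Wang, not a result the paper proves: the paper simply adopts it as a hardness hypothesis (and indeed notes that the Orthogonal Vectors Conjecture is \emph{implied by} the Hitting Set Conjecture, not the reverse, consistent with your quantifier-mismatch observation). Your treatment of it as an axiom to be invoked rather than established matches exactly how the paper uses it, so there is nothing to compare against.
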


In what follows, we call the triple $(A,B,U)$ a HS-instance.
Two HS-instances $(A,B,U)$ and $(A',B',U')$ are called equivalent if either both contain a hitting set or none of them does.

Insofar, the Hitting Set Conjecture has been understudied.
We next present reductions from arbitrary HS-instances to equivalent ones with some more structure (Lemmas~\ref{lem:hs-red-1} and~\ref{lem:hs-red-2}).
The running times of these reductions are roughly linear in the cumulative cardinalities of all subsets in the input, which (since there are $2n$ such subsets and the universe has size ${\cal O}(\log{n})$) is in ${\cal O}(n\log{n})$.

\begin{lemma}\label{lem:hs-red-1}
Every HS-instance $(A,B,U)$ can be reduced in ${\cal O}(n\log{n})$ time to some equivalent HS-instance $(A',B',U')$ with the following additional property: every set in $B'$ intersects at most half of the sets in $B'$. 
\end{lemma}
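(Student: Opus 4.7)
The plan is to pad $(A,B,U)$ by a single fresh universe element $z$ together with a block of dummy singletons $\{z\}$ attached to $B$, so that the two halves of $B'$ (original sets of $B$ and appended singletons) become mutually disjoint. Concretely I would set $U' = U \cup \{z\}$, define $A'$ to be two consecutive copies of the list $(\,a \cup \{z\} : a \in A\,)$, and define $B'$ to be the concatenation of $B$ with $n$ copies of the singleton $\{z\}$. Both lists then have length $2n$, and since each operation only appends a constant-size token to an input set, the construction runs in $\mathcal{O}(n\log n)$ time, which is essentially the time needed just to read $A$ and $B$.

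The equivalence of the two instances should fall out immediately from the freshness of $z$. In one direction, any $a \in A$ hitting $B$ yields $a \cup \{z\} \in A'$, which hits every $b \in B \subseteq B'$ via $a \cap b$ (because $z \notin b$) and every appended $\{z\}$ via $z$ itself. In the other direction, any hitting set $a' \in A'$ has the form $a' = a \cup \{z\}$ for some $a \in A$, and restricting to $b \in B$ gives $a \cap b = a' \cap b \neq \emptyset$, so $a$ already hits $B$.

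The intersection property would follow by a quick case analysis on $b \in B'$: an original $b \in B$ is disjoint from every appended copy of $\{z\}$ (because $z$ is fresh), so it intersects at most the $n = |B'|/2$ sets of $B'$ coming from $B$; while an appended $\{z\}$ is disjoint from every original $b' \in B$ and intersects only the $n = |B'|/2$ singleton copies. Both cases are tight at $|B'|/2$, and the one step that needs some care is the choice of exactly $n$ singleton copies: any other number would overflow the half-bound on one side of the case analysis (too few singletons and the original sets of $B$ can intersect more than half of $B'$; too many and the singletons intersect more than half among themselves). Finally $|U'| \leq c\log n + 1 \leq (c+1)\log n'$ with $n' = 2n$, so $(A',B',U')$ stays within the class of HS-instances to which Conjecture~\ref{conj:hs} applies.
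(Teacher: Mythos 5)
Your construction is correct, and it takes a genuinely different route from the paper. You introduce a single fresh element $z$, append $z$ to every set of $A$ (duplicating $A$ only to keep $|A'|=|B'|$), and pad $B$ with $n$ dummy singletons $\{z\}$; the two halves of $B'$ are then mutually disjoint because $z$ is fresh and $B \subseteq 2^U$, and the equivalence follows because $z$ cannot witness a hit on any original $b \in B$. The paper instead makes two disjoint copies $(A_i,B_i,U_i)$ of the whole instance, tags each $B_i$ with a fresh marker $x_i$ (to make the two halves of $B'$ disjoint), and cross-pads each $A_i$ with the \emph{entire} opposite universe $U_{1-i}$ (so that a hitting set from one copy trivially hits the other copy's $B$-sets). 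Both arguments are sound, but yours is more economical: you grow the universe by $1$ element rather than to $2|U|+2$, so the constant $c$ in the bound $|U'| \le c\log n'$ need not increase (for $c \ge 1$), whereas the paper's doubling forces the constant to roughly double. The running-time bound ${\cal O}(n\log n)$ is the same either way, dominated by rewriting the $2n$ sets of size ${\cal O}(\log n)$. Your observation that exactly $n$ dummy singletons is the unique count that keeps both sides of the case analysis at or below $|B'|/2$ is correct and worth keeping, since it explains why the construction cannot be trimmed further.
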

\begin{proof}
Let $(A_0,B_0,U_0)$ and $(A_1,B_1,U_1)$ be disjoint copies of $(A,B,U)$.
Let also $x_0,x_1 \notin U_0 \cup U_1$.
We set:
\begin{align*}
    A' &= \{a_0 \cup U_1 \mid a_0 \in A_0\} \cup \{a_1 \cup U_0 \mid a_1 \in A_1\}, \\
    B' &= \{b_0 \cup \{x_0\} \mid b_0 \in B_0\} \cup \{b_1 \cup \{x_1\} \mid b_1 \in B_1\}, \\
    U' &= U_0 \cup U_1 \cup \{x_0,x_1\}.
\end{align*}
The construction of $(A',B',U')$ takes ${\cal O}(n\log{n})$ time.

We can bipartition $B'$ in two sublists $B_0',B_1'$ such that, for every $i \in \{0,1\}$, all sets in $B_i'$ contain $x_i$ and are contained in $U_i \cup \{x_i\}$.
Since $U_0 \cup \{x_0\}$ and $U_1 \cup \{x_1\}$ are disjoint, every set of $B'$ intersects exactly half of the sets of $B'$.
Furthermore, if $a \in A$ is a hitting set for $(A,B,U)$, then let $a_0,a_1$ be its copies in $A_0,A_1$.
By construction, $a_0 \cup U_1$ and $a_1 \cup U_0$ are hitting sets for $(A',B',U')$.
Conversely, let $a' \in A'$ be a hitting set for $(A',B',U')$.
By symmetry, we may assume to have $a' = a_0 \cup U_1$, for some set $a_0 \in A_0$.
Since $x_0 \notin a'$ and, for every set $b_0 \in B_0$, $b_0 \cap U_1 = \emptyset$, we must have $a_0$ is a hitting set for $(A_0,B_0,U_0)$.
In particular, there is also a hitting set for $(A,B,U)$.
\end{proof}

We only need to apply the following reduction with $\alpha = 2, \ \beta = -1$ for proving Theorem~\ref{thm:hs-hard}.
However, the general case with $\alpha,\beta$ arbitrary might be useful for future works on the Hitting Set Conjecture.

\begin{lemma}\label{lem:hs-red-2}
Let $\alpha, \beta$ be arbitrary integers so that $\alpha \geq 2$.
Every HS-instance $(A,B,U)$ can be reduced in ${\cal O}(n\log{n})$ time to some equivalent HS-instance $(A',B,U')$ with the following additional properties: for some $t={\cal O}(|U|)$, the universe $U'$ has cardinality $\alpha \cdot t + \beta$, and every set in $A'$ has cardinality $t$. 
\end{lemma}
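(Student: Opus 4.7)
The plan is to inflate each set of $A$ to a common cardinality $t$ using a shared pool of ``padding'' elements that lie outside every set of $B$, and then to further enlarge the universe with additional ``dummy'' elements (appearing in no set of $A$ or $B$) so that $|U'|$ attains the prescribed value $\alpha t + \beta$. Since padding and dummies never hit any $b \in B$, the reduction is trivially load-preserving.

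Setting $u := |U|$, I would first pick $t$ as the smallest positive integer satisfying both $t \geq u$ and $(\alpha - 1) t + \beta \geq u$. Because $\alpha \geq 2$ and $\beta$ is a fixed integer, such a $t$ exists and is $\Theta(u) = {\cal O}(|U|)$. Next, I would introduce a set $P = \{p_1, \ldots, p_t\}$ of $t$ fresh padding elements disjoint from $U$, together with a set $D$ of exactly $d := (\alpha - 1) t + \beta - u \geq 0$ fresh dummy elements disjoint from $U \cup P$, and set $U' := U \cup P \cup D$. Then $|U'| = u + t + d = \alpha t + \beta$, as required.

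To build $A'$, for each $a_i \in A$ I would define $a_i' := a_i \cup \{p_1, \ldots, p_{t - |a_i|}\}$; this is well defined because $|a_i| \leq u \leq t$, and by construction $|a_i'| = t$. Equivalence of $(A,B,U)$ and $(A',B,U')$ is immediate: the padding and dummy elements all lie outside $U \supseteq \bigcup_{b \in B} b$, so $a_i' \cap b = a_i \cap b$ for every $b \in B$, and hence $a_i$ hits $B$ if and only if $a_i'$ does. Writing out $A'$ costs ${\cal O}(n \cdot t) = {\cal O}(n \log n)$, which dominates the ${\cal O}(|U'|) = {\cal O}(\log n)$ cost of listing $U'$ and subsumes the linear read of the input.

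The only real obstacle is the first step: simultaneously requiring $t$ to be large enough to pad every $a_i$, to be $\Theta(|U|)$, and to leave a nonnegative slack $(\alpha - 1) t + \beta - u$ for the dummies. The hypothesis $\alpha \geq 2$ is exactly what provides the flexibility needed to absorb $\beta$ (in particular a negative $\beta$) inside a $t$ of size ${\cal O}(|U|)$; for $\alpha = 1$ the slack reduces to $\beta - u$, which is eventually negative and offers no compensation.
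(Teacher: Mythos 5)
Your proof is correct and follows essentially the same idea as the paper's (padding $A$-sets to a common cardinality with fresh elements that lie outside every set of $B$, then adding further fresh dummies to the universe to force $|U'| = \alpha t + \beta$). The paper's version reaches the same place in two stages — first equalizing cardinalities at $\Delta = \max_a |a|$, then a case analysis on the sign of $\alpha\Delta + \beta - |U_{\text{tmp}}|$ — whereas you choose the single parameter $t$ up front so that $(\alpha-1)t + \beta - |U| \geq 0$ and $t \geq |U|$, which eliminates the case split; both are valid and the running-time analysis matches.
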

\begin{proof}
First, we reduce $(A,B,U)$ to some equivalent instance $(A_{\text{tmp}},B,U_{\text{tmp}})$ so that all sets in $A_{\text{tmp}}$ have equal cardinality.
Then, we reduce $(A_{\text{tmp}},B,U_{\text{tmp}})$ to our final HS-instance $(A',B,U')$.

In order to construct $(A_{\text{tmp}},B,U_{\text{tmp}})$, we start computing the extreme cardinalities $\Delta = \max\{|a| \mid a \in A\}$ and $\delta = \min\{ |a| \mid a \in A \}$ of sets in $A$.
Let $U_{\text{dummy}} = \{u_1,u_2,\ldots,u_{\Delta-\delta}\}$ be a universe of cardinality $\Delta-\delta$ that is disjoint from $U$.
We set $U_{\text{tmp}} = U \cup U_{\text{dummy}}$.
For every set $a \in A$, we add in $A_{\text{tmp}}$ the set $a \cup \{u_1,u_2,\ldots,u_{\Delta-|a|}\}$.

Let every set in $A_{\text{tmp}}$ have cardinality $\Delta$.
In order to construct $(A',B',U')$, there are two different cases to be considered.
\begin{itemize}
    \item \underline{Case $\alpha \cdot \Delta + \beta > |U_{\text{tmp}}|$}. We complete $U_{\text{tmp}}$ with $\alpha \cdot \Delta + \beta -|U_{\text{tmp}}|$ new elements in order to create $U'$. The two lists are unchanged ({\it i.e.}, we set $A' = A_{\text{tmp}}$, $B' = B$).
    \item \underline{Case $\alpha \cdot \Delta + \beta < |U_{\text{tmp}}|$}. Let us write $|U_{\text{tmp}}|-\alpha \cdot \Delta - \beta = q \cdot (\alpha-1) + r$ for some unique $q \geq 0$ and $r \in \{0,1,\ldots,\alpha-2\}$. Let $X,Y$ be disjoint universes of respective cardinalities $\alpha-1-r$ and $q+1$, both disjoint from $U_{\text{tmp}}$. We set $U' = U_{\text{tmp}} \cup X \cup Y$, $A' = \{ a \cup Y \mid a \in A_{\text{tmp}}\}$, $B'=B$. Doing so, every set in $A'$ has cardinality equal to $t = \Delta+q+1$, while the universe $U'$ has cardinality:
    \begin{align*}
        |U_{\text{tmp}}| + |X| + |Y| &= \alpha \cdot \Delta + \beta + (|U_{\text{tmp}}|-\alpha \cdot \Delta - \beta) + (q + 1) + (\alpha - 1 - r) \\
        &= \alpha \cdot \Delta + \beta + (q \cdot (\alpha-1) + r) + q + \alpha - r \\
        &= \alpha \cdot \Delta + \beta + q \cdot \alpha + \alpha \\
        &= \alpha \cdot (\Delta+q+1) + \beta \\
        &= \alpha \cdot t + \beta
    \end{align*}
\end{itemize}
In both cases, our construction ensures that every set in $A'$ has cardinality $\frac{|U'|-\beta}{\alpha}$.
\end{proof}

We are now ready to combine both reductions in order to prove our main result in this part.

\begin{theorem}\label{thm:hs-hard}
Assuming Conjecture~\ref{conj:hs}, there is no $\varepsilon > 0$ such that for all $c \geq 1$, there is an algorithm that solves {\sc Balanced Vertex} in ${\cal O}(n^{2-\varepsilon})$ time on the $n$-vertex graphs with at most $cn\log{n}$ edges.
\end{theorem}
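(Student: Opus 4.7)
The plan is to reduce Hitting Set to \textsc{Balanced Vertex} (with $p = 1$) in order to apply Conjecture~\ref{conj:hs}. Given an HS-instance $(A, B, U)$ with $|U| = {\cal O}(\log n)$, I would first apply Lemma~\ref{lem:hs-red-1} and then Lemma~\ref{lem:hs-red-2} (with $\alpha = 2$, $\beta = -1$) in ${\cal O}(n \log n)$ time, followed by a trivial cleanup: restrict $U$ to $\bigcup_{a \in A} a$ (which only shrinks intersections and so preserves the property of Lemma~\ref{lem:hs-red-1}) and return ``no HS'' if any $b \in B$ is made empty. The outcome is an equivalent HS-instance in which $\bigcup_{a \in A} a = U$, every $b \in B$ is nonempty, and every $b \in B$ intersects at most half of the sets of $B$.

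From such an instance, I would build the graph $G$ on vertex set $\{s_1\} \cup A \cup B \cup U \cup Z$, where $Z = \{z_1, \ldots, z_{|B|}\}$ is a set of $|B|$ pendants, with edges $s_1 a$ for every $a \in A$, $s_1 z$ for every $z \in Z$, $au$ whenever $u \in a$, and $bu$ whenever $u \in b$. The cost function is $\pi(b) = \pi(z) = 1$ for every $b \in B$ and $z \in Z$, and $\pi(\cdot) = 0$ otherwise, so $\pi(V) = 2|B|$; the initial site list is $S = (s_1)$. A direct inspection gives $d(s_1, a) = d(s_1, z) = 1$, $d(s_1, u) = 2$, and $d(s_1, b) = 3$. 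By Lemma~\ref{lem:struct-territory}, for any candidate new site $v$ we have $\mathbb{T}(v, s_1 + v) = W(v, s_1)$, so writing $x_v = \pi(W(v, s_1))$ the load is $L_\pi(s_1 + v) = \max(x_v,\ 2|B| - x_v) \geq |B|$, with equality iff $x_v = |B|$.

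The core of the argument is a case analysis of $v$ using the above distances. For $v = a \in A$ one computes $x_v = |\{b \in B : a \cap b \neq \emptyset\}|$, which equals $|B|$ iff $a$ is a hitting set. For $v = u \in U$ one gets $x_v = |\{b : u \in b\}|$, which equals $|B|$ iff $u$ is universal in $B$; but then any $a \in A$ containing $u$ (which exists since $\bigcup A = U$) is a hitting set. For $v = b \in B$ one gets $x_v = |\{b' \in B : b \cap b' \neq \emptyset\}|$, and by Lemma~\ref{lem:hs-red-1} this is at most $|B|/2$, so $L_\pi(s_1 + v) \geq 3|B|/2 > |B|$. For $v = z \in Z$ one has $x_v = 1$ (since $z$ is a pendant of $s_1$), so $L_\pi(s_1 + v) = 2|B| - 1 > |B|$ as long as $|B| \geq 2$ (which may be assumed w.l.o.g.). Hence the optimum of \textsc{Balanced Vertex} on $(G, \pi, s_1)$ equals $|B|$ if and only if the HS-instance admits a solution.

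The graph has $N = \Theta(n)$ vertices and $M = {\cal O}(n \log n) = n^{1 + o(1)}$ edges, matching the setting of the theorem, so an ${\cal O}(N^{2 - \varepsilon})$-time algorithm for \textsc{Balanced Vertex} would decide Hitting Set in ${\cal O}(n^{2 - \varepsilon})$ time, contradicting Conjecture~\ref{conj:hs}. The main obstacle will be the analysis of $v = b$: without the preprocessing of Lemma~\ref{lem:hs-red-1}, a ``popular'' $b \in B$ could attract more than half of the weight of $B$ to itself and achieve the threshold load $|B|$ even in the absence of a hitting set, which would break the reduction.
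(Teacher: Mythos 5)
Your reduction is correct, and it takes a genuinely different route from the paper's. The paper keeps the all-ones cost function and encodes the calibration into the \emph{structure} of the HS-instance (via Lemma~\ref{lem:hs-red-2}, it forces every $a \in A$ to have cardinality $\frac{|U|+1}{2}$, and it attaches a clique on $U \cup \{x,y\}$ to $s$, so that the two relevant loads come out as $k+t+1$ and $2n-k+t+1$, which are balanced exactly when $k = n$). You instead leave the set sizes alone and encode the calibration into the \emph{cost function} and a padding gadget: costs are supported only on $B \cup Z$, the pendant set $Z$ pads the total cost to $2|B|$, and the optimum is $|B| = \max(x_v, 2|B|-x_v)$ exactly when some $v$ captures all of $B$. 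Both approaches genuinely need Lemma~\ref{lem:hs-red-1} (to kill the $v \in B$ candidates), but your approach makes Lemma~\ref{lem:hs-red-2} superfluous: you apply it, but your analysis never uses the equal-cardinality property (and in fact your cleanup step, which shrinks $U$ to $\bigcup_{a \in A} a$, would in general destroy that property anyway). So you can drop that lemma entirely. Your cleanup step (restricting $U$ to $\bigcup_A a$ so that $d(s_1,u)=2$ holds for every $u$) is a small but necessary additional observation that the paper avoids by a different graph construction. The trade-off: the paper's reduction shows hardness even for \emph{unit} costs (a fact it records in Remark~\ref{rk:hardness}), which your weighted construction does not directly give; on the other hand, yours is somewhat more elementary and self-contained, needing one fewer preprocessing lemma.
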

\begin{proof}
Let $(A,B,U)$ be any HS-instance with the following properties:
\begin{enumerate}
    \item every set of $B$ intersects at most half of the sets in $B$;
    \item every set of $A$ has cardinality $\frac{|U|+1}{2}$.
\end{enumerate}
We stress that every HS-instance can be reduced to an equivalent one with these properties by applying the reductions of Lemma~\ref{lem:hs-red-1} and~\ref{lem:hs-red-2}, in this order.
Intuitively, we identify the subsets of $A \cup B$ and the elements of $U$ with vertices in some graph $G$ (to be defined later).
We further define a unique site $s$, which is close to every vertex in $A$ but distant to every vertex in $B$.
Therefore, in order to minimize the load, one should pick a new site which is closer than $s$ to every vertex in $B$.
At first glance, one may think about picking some new site in $U$. However, the vertices in $U$ are closer than $s$ to far too many vertices, thus making the load increase.
The first property above ensures that no vertex of $B$ should be picked either.
Therefore, we are left picking (if any) a vertex of $A$ that is close to every vertex of $B$, {\it a.k.a.}, a hitting set.
Note that in general, the exact load should also depend on the size of $U$ and the cardinality of the hitting set.
The second property above ensures that it is not the case for our reduction.

For convenience, in what follows let $t=\frac{|U|+1}{2}$ denote the cardinality of every set in $A$.
We may further assume that $t >2$ (otherwise, we can detect a hitting set in ${\cal O}(n)$ time because there are ${\cal O}(1)$ distinct sets in $A$).
We construct a graph $G=(V,E)$ with $2n+|U|+3 = 2(n+t+1)$ vertices, as follows:
\begin{itemize}
    \item $V = A \cup B \cup U \cup \{s,x,y\}$;
    \item $U \cup \{x,y\}$ is a clique;
    \item $N_G(s) = \{x,y\}$;
    \item for every set $a \in A$, $N_G(a) = \{y\} \cup \{ u \in U \mid u \in a \}$;
    \item for every set $b \in B$, $N_G(b) = \{u \in U \mid u \in b\}$.
\end{itemize}

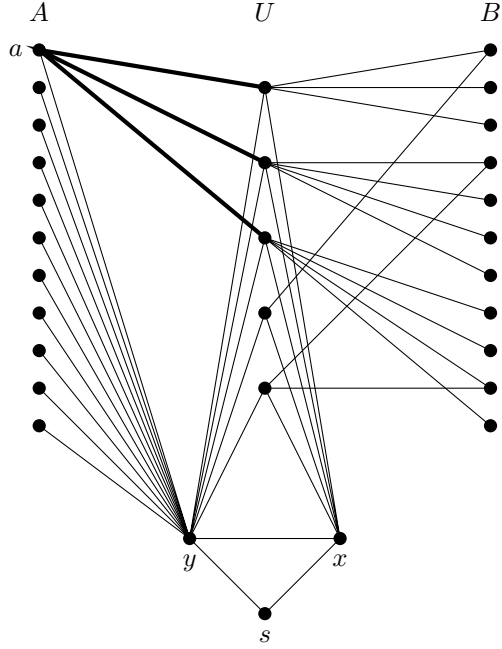
\begin{figure}[!ht]
    \centering
    \begin{tikzpicture}
        
        \node[circle,fill=black,inner sep=0pt,minimum size=5pt,label=above:{}] at (0,2) {};
        \node[circle,fill=black,inner sep=0pt,minimum size=5pt,label=above:{}] at (0,1) {};
        \node[circle,fill=black,inner sep=0pt,minimum size=5pt,label=above:{}] at (0,0) {};
        \node[circle,fill=black,inner sep=0pt,minimum size=5pt,label=above:{}] at (0,-1) {};
        \node[circle,fill=black,inner sep=0pt,minimum size=5pt,label=above:{}] at (0,-2) {};
        
        \node[circle,fill=black,inner sep=0pt,minimum size=5pt,label=left:{$a$}] at (-3,2.5) {};
        \node[circle,fill=black,inner sep=0pt,minimum size=5pt,label=above:{}] at (-3,2) {};
        \node[circle,fill=black,inner sep=0pt,minimum size=5pt,label=above:{}] at (-3,1.5) {};
        \node[circle,fill=black,inner sep=0pt,minimum size=5pt,label=above:{}] at (-3,1) {};
        \node[circle,fill=black,inner sep=0pt,minimum size=5pt,label=above:{}] at (-3,.5) {};
        \node[circle,fill=black,inner sep=0pt,minimum size=5pt,label=above:{}] at (-3,0) {};
        \node[circle,fill=black,inner sep=0pt,minimum size=5pt,label=above:{}] at (-3,-.5) {};
        \node[circle,fill=black,inner sep=0pt,minimum size=5pt,label=above:{}] at (-3,-1) {};
        \node[circle,fill=black,inner sep=0pt,minimum size=5pt,label=above:{}] at (-3,-1.5) {};
        \node[circle,fill=black,inner sep=0pt,minimum size=5pt,label=above:{}] at (-3,-2) {};
        \node[circle,fill=black,inner sep=0pt,minimum size=5pt,label=above:{}] at (-3,-2.5) {};
        
        \node[circle,fill=black,inner sep=0pt,minimum size=5pt,label=above:{}] at (3,2.5) {};
        \node[circle,fill=black,inner sep=0pt,minimum size=5pt,label=above:{}] at (3,2) {};
        \node[circle,fill=black,inner sep=0pt,minimum size=5pt,label=above:{}] at (3,1.5) {};
        \node[circle,fill=black,inner sep=0pt,minimum size=5pt,label=above:{}] at (3,1) {};
        \node[circle,fill=black,inner sep=0pt,minimum size=5pt,label=above:{}] at (3,.5) {};
        \node[circle,fill=black,inner sep=0pt,minimum size=5pt,label=above:{}] at (3,0) {};
        \node[circle,fill=black,inner sep=0pt,minimum size=5pt,label=above:{}] at (3,-.5) {};
        \node[circle,fill=black,inner sep=0pt,minimum size=5pt,label=above:{}] at (3,-1) {};
        \node[circle,fill=black,inner sep=0pt,minimum size=5pt,label=above:{}] at (3,-1.5) {};
        \node[circle,fill=black,inner sep=0pt,minimum size=5pt,label=above:{}] at (3,-2) {};
        \node[circle,fill=black,inner sep=0pt,minimum size=5pt,label=above:{}] at (3,-2.5) {};

        \node[circle,fill=black,inner sep=0pt,minimum size=5pt,label=below:{$s$}] at (0,-5) {};
        \node[circle,fill=black,inner sep=0pt,minimum size=5pt,label=below:{$x$}] at (1,-4) {};
        \node[circle,fill=black,inner sep=0pt,minimum size=5pt,label=below:{$y$}] at (-1,-4) {};

        \draw (1,-4) -- (-1,-4);

        \node at (-3,3) {$A$};
        \node at (0,3) {$U$};
        \node at (3,3) {$B$};
        
        \draw (1,-4) -- (0,-5) -- (-1,-4);
        
        \draw (-3,2.5) -- (-1,-4) -- (-3,2);
        \draw (-3,1.5) -- (-1,-4) -- (-3,1);
        \draw (-3,.5) -- (-1,-4) -- (-3,0);
        \draw (-3,-.5) -- (-1,-4) -- (-3,-1);
        \draw (-3,-1.5) -- (-1,-4) -- (-3,-2);
        \draw (-3,-2.5) -- (-1,-4);
        
        \draw (0,2) -- (-1,-4) -- (0,1); \draw (0,0) -- (-1,-4) -- (0,-1); \draw (0,-2) -- (-1,-4);
        \draw (0,2) -- (1,-4) -- (0,1); \draw (0,0) -- (1,-4) -- (0,-1); \draw (0,-2) -- (1,-4);
        
        \draw[ultra thick] (0,2) -- (-3,2.5) -- (0,1); \draw[ultra thick] (-3,2.5) -- (0,0);
        
        \draw (3,2.5) -- (0,2) -- (3,2); \draw (3,1.5) -- (0,2);
        \draw (3,1) -- (0,1) -- (3,.5); \draw (3,0) -- (0,1) -- (3,-.5);
        \draw (3,-1) -- (0,0) -- (3,-1.5); \draw (3,-2) -- (0,0) -- (3,-2.5);
        \draw (3,2.5) -- (0,-1);
        \draw (3,1) -- (0,-2) -- (3,-2);
        
    \end{tikzpicture}
    \caption{The reduction of Theorem~\ref{thm:hs-hard}. For clarity of the picture, we omitted the edges between vertices of $U$ and the edges between $U$ and $A \setminus \{a\}$, with $a$ some hitting set.}
    \label{fig:red}
\end{figure}

See Fig.~\ref{fig:red} for an illustration.
Let $S=(s)$ and let $\pi : V \to \mathbb{R}_{\geq 0}$ be the all-one function ({\it i.e.}, $\pi(v) = 1$ for every vertex $v$).
Being given a vertex $v \neq s$, let $S_v = (s,v)$.
By Lemma~\ref{lem:struct-territory}, $\mathbb{T}(v,S_v) = W_G(v,s)$.
We now compute the load of $S_v$ for every vertex $v \neq s$:
\begin{itemize}
    \item {\it Case $v = x$}. We have $W(x,s) = U \cup \{x\} \cup B$. Then, $L_\pi(S_x) = \ell_\pi(x,S_x) = n+1+|U| = n+2t$.
    \item {\it Case $v = y$}. We have $W(y,s) = V \setminus \{s,x\}$. Then, $L_\pi(S_y) = \ell_\pi(y,S_y) = 2n+|U|+1 = 2(n+t)$.
    \item {\it Case $v \in U$}. We have $U \cup B \subseteq W(v,s)$. Then, $L_\pi(S_v) = \ell_\pi(v,S_v) \geq n+|U|=n+2t-1$.
    \item {\it Case $v \in B$}. Since every vertex of $A$ is at distance at least two from vertex $v$, $A \cap W(v,s) = \emptyset$. Furthermore, $|W(v,s) \cap B| \leq |B|/2$. We also know that $s,x,y \notin W(v,s)$. As a result, we have $L_\pi(S_v) = \ell_\pi(s,S_v) \geq |A|+|B|/2+3 = 3n/2+3$.
    \item {\it Case $v \in A$}. We have $W(v,s) = \{v\} \cup \left(N_G(v) \cap U\right) \cup \left(N_G^2[v] \cap B\right)$. Let us write $k = |N_G^2[v] \cap B|$. In this situation, $\ell_\pi(v,S_v) = k+t+1$ while $\ell_\pi(s,S_v) = 2n-k +t + 1$. In particular, $L_\pi(S_v)$ is minimal if and only if $k=n$, that is if and only if $v$ is a hitting set.
\end{itemize}
As a result, the existence of some vertex $v$ such that $L_{\pi}(S_v) \leq n+t+1$ is equivalent to that of a hitting set for $(A,B,U)$.
\end{proof}

\begin{remark}\label{rk:hardness}
The hardness result of Theorem~\ref{thm:hs-hard} holds for unit costs and $|S|=1$.
\end{remark}

\section{Almost linear-time algorithms}\label{sec:lin}

In this section, we break the quadratic barrier for {\sc Balanced Vertex} on various graph classes.
We obtain a linear-time algorithm only for simple topologies, see Sec.~\ref{sec:easy}.
Nevertheless, we achieve an almost linear-time computation on trees (Sec.~\ref{sec:tree}), proper interval graphs (Sec.~\ref{sec:interval}) and bounded-treewidth graphs with a constant number of sites (Sec.~\ref{sec:tw}).

\subsection{Simple cases}\label{sec:easy}

First, we present a complete dichotomy of the parameterized complexity of {\sc Balanced Vertex} if the parameter is the diameter of the graph.
The graphs of unit diameter are exactly the complete graphs.
Being given a proper vertex list $S$ in such graph, every vertex of $V \setminus S$ must belong to the territory of the site in $S$ with the highest priority. 
We deduce from this observation a simple linear-time algorithm for {\sc Balanced Vertex} on complete graphs.

\begin{lemma}\label{lem:clique}
We can solve {\sc Balanced Vertex} in ${\cal O}(n)$ time if $G$ is a clique.
\end{lemma}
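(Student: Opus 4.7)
The plan is to show that in a clique, the Voronoi diagram has an essentially trivial structure, reducing the problem to picking a vertex whose cost is closest to a target threshold, which is an $O(n)$-time numerical search.

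First I would observe that in a complete graph $G$, every two distinct vertices are at distance exactly $1$. Hence for a proper vertex list $S=(s_1,\ldots,s_p)$, any $v \notin S$ is at distance $1$ from every site, so by priority $v \in \mathbb{T}(s_1,S)$. This gives $\mathbb{T}(s_1,S) = V \setminus \{s_2,\ldots,s_p\}$ and $\mathbb{T}(s_i,S) = \{s_i\}$ for every $i \geq 2$. Adding a new site $v$ at the end of the list leaves the territories of $s_2,\ldots,s_p$ untouched (by the same priority argument) and only moves $v$ itself out of $\mathbb{T}(s_1,S)$ into $\mathbb{T}(v,S+v)=\{v\}$.

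Setting $C=\pi(V)-\pi(\{s_2,\ldots,s_p\})$ and $M=\max_{2 \leq i \leq p}\pi(s_i)$ (both computable in $O(n)$ time in a single scan), I would then write
\[
L_\pi(S+v) \;=\; \max\bigl(\,C-\pi(v),\; \pi(v),\; M\,\bigr).
\]
Since $M$ does not depend on $v$, minimizing $L_\pi(S+v)$ amounts to minimizing $\max(C-\pi(v),\pi(v))$ over all $v \in V \setminus S$. This max is a piecewise linear function of $\pi(v)$ that decreases on $[0,C/2]$ and increases on $[C/2,+\infty)$, so an optimum is achieved by the vertex whose cost is closest to $C/2$ from below or from above.

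The remaining step is purely numerical: in one linear pass over $V\setminus S$, maintain the vertex $v^-$ maximizing $\pi(v^-)$ subject to $\pi(v^-)\leq C/2$ and the vertex $v^+$ minimizing $\pi(v^+)$ subject to $\pi(v^+)> C/2$, then output whichever of $v^-,v^+$ yields the smaller value of $\max(C-\pi(v),\pi(v))$. There is no real obstacle here; the only thing to be careful about is the edge case where $p=n$, i.e. $V\setminus S=\emptyset$, in which the problem has no feasible output and can be discarded in $O(n)$ time. The total running time is $O(n)$ as claimed.
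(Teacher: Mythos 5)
Your proof is correct and follows essentially the same approach as the paper: both observe that in a clique every $v \notin S$ lands in $\mathbb{T}(s_1,S)$ by priority, that adding $v$ only moves $v$ itself, and both reduce the problem to minimizing $\max(C-\pi(v),\pi(v))$ over $v \notin S$ (your $C$ is the paper's $\Lambda$). The paper leaves the final linear scan implicit, whereas you spell it out, but the argument is the same.
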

\begin{proof}
Let $\Lambda = \pi(G) - \sum_{i=2}^p \pi(s_i)$.
It suffices to output a vertex $v \notin S$ such that $\max\{\pi(v),\Lambda-\pi(v)\}$ is minimized.
Indeed, since $G$ is a clique we have the following for every vertex $v \notin S$: $\mathbb{T}(s_1,S+v) = V \setminus \{s_2,s_3,\ldots,s_p,v\} = \mathbb{T}(s_1,S) \setminus \{v\}$; $\mathbb{T}(s_i,S+v) = \{s_i\} = \mathbb{T}(s_i,S)$ for every $i$ with $2 \leq i \leq p$; $\mathbb{T}(v,S+v) = \{v\}$. 
In particular, we have $L_\pi(S+v) = \max\{\pi(s_i) \mid 2 \leq i \leq p\} \cup \{\pi(v),\Lambda-\pi(v)\}$.
\end{proof}

We now generalize the simple idea behind Lemma~\ref{lem:clique} to the graphs of diameter two.

\begin{lemma}\label{lem:diam-two}
We can solve {\sc Balanced Vertex} in ${\cal O}(n+m)$ time if $G$ has diameter two.
\end{lemma}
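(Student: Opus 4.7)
The plan generalizes the clique argument via the following structural fact about diameter-two graphs: for every $u \notin S$ one has $d(u, S) \in \{1, 2\}$, and when $d(u, S) = 2$ all sites are equidistant from $u$, so by the priority rule $u \in \mathbb{T}(s_1, S)$. Setting $D := V \setminus N[S]$, this yields $D \subseteq \mathbb{T}(s_1, S)$, while every other territory $\mathbb{T}(s_i, S)$ with $i \geq 2$ is contained in $N[s_i] \setminus S$.

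Next I apply Lemma~\ref{lem:struct-territory} to a candidate $v \notin S$. A vertex $u \neq v$ with $u \in \mathbb{T}(s_j, S)$ belongs to $\mathbb{T}(v, S+v)$ iff $d(u, v) < d(u, s_j)$; since both distances lie in $\{1, 2\}$, this forces $d(u, s_j) = 2$ (hence $j = 1$ and $u \in D$) together with $d(u, v) = 1$ (hence $u \in N(v)$). Thus $\mathbb{T}(v, S+v) = \{v\} \cup (N(v) \cap D)$. When $v$ is appended to $S$, only two site loads can therefore change: $s_1$ loses exactly the weight $x_v := \pi(N(v) \cap D)$, and the unique site $s_{j(v)}$ that originally contained $v$ loses $\pi(v)$. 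All remaining site loads are unchanged. Consequently $L_\pi(S+v)$ is the maximum of $\ell_\pi(v, S+v) = \pi(v) + x_v$, of the two modified loads of $s_1$ and $s_{j(v)}$, and of $\max\{\ell_\pi(s_i, S) : i \notin \{1, j(v)\}\}$.

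The algorithm then runs in $O(n+m)$ time as follows. First, invoke Proposition~\ref{prop:compute-voronoi} to compute $\mathbf{Vor}(G, S)$, the site loads $\ell_\pi(s_i, S)$, and the map $v \mapsto j(v)$; simultaneously identify $D$ from the BFS layers. Second, precompute the two largest values in $\{\ell_\pi(s_i, S) : i \geq 2\}$, together with their indices, so that for any $j(v)$ the quantity $\max\{\ell_\pi(s_i, S) : i \neq 1,\ i \neq j(v)\}$ is returned in $O(1)$. Third, initialize $x_v \gets 0$ for every $v \notin S$ and, for each $u \in D$, add $\pi(u)$ to $x_v$ for every $v \in N(u)$; since the total work is $\sum_{u \in D} \deg(u) \leq 2m$, this step is linear. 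Fourth, sweep over $v \in V \setminus S$, evaluate $L_\pi(S+v)$ in constant time per candidate via the closed form above, and return an argmin.

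\textbf{Main obstacle.} Once the structural collapse $D \subseteq \mathbb{T}(s_1, S)$ is observed, each candidate $v$ affects only a constant number of site loads, and the bulk of the work reduces to a single edge scan anchored at $D$. I do not foresee any genuine difficulty; the only piece of bookkeeping worth flagging is that $v$ itself migrates out of its original territory $\mathbb{T}(s_{j(v)}, S)$, which is why the precomputation of the top two loads among $\{\ell_\pi(s_i, S) : i \geq 2\}$ is needed rather than a single global maximum.
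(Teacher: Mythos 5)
Your proposal is correct and follows essentially the same approach as the paper: it uses the same structural observations ($D \subseteq \mathbb{T}(s_1,S)$, $\mathbb{T}(s_j,S) \subseteq N[s_j]$ for $j \geq 2$, and $\mathbb{T}(v,S+v) = \{v\} \cup (N(v) \cap D)$ derived from Lemma~\ref{lem:struct-territory}), leading to the identical conclusion that only $s_1$ and $s_{j(v)}$ can lose load. The only cosmetic difference is the top-two-maximum trick in place of the paper's precomputed array $\mu_i$; both support the same $O(1)$ per-candidate query.
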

\begin{proof}
We compute $\mathbf{Vor}(G,S)$, which by Proposition~\ref{prop:compute-voronoi} we can do in ${\cal O}(n+m)$ time.
Doing so, we can compute in ${\cal O}(n)$ time the values $\ell_\pi(s_i,S)$, for every $i$ with $1 \leq i \leq p$.
We can also compute in ${\cal O}(n)$ time, for every $i$ with $1 \leq i \leq p$, the value $\mu_i$ such that: $$\mu_i = \begin{cases}
\max\{ \ell_\pi(s_2,S), \ell_\pi(s_3,S), \ldots, \ell_\pi(s_p,S) \} \ \text{if} \ i=1 \\
\max\{ \ell_\pi(s_j,S) \mid j \in \{2,3,\ldots,p\} \setminus \{i\} \} \ \text{otherwise.} 
\end{cases}$$

For every vertex $v \notin S$, let $s_i$ be the unique site of $S$ such that $v \in \mathbb{T}(s_i,S)$. Note that we can compute $s_i$ from $\mathbf{Vor}(G,S)$ in ${\cal O}(1)$ time.
We compute the value $\lambda(v)$ such that:
$$\lambda(v) = \begin{cases}
\max\left\{\pi\left(N(v) \setminus N[S]\right)+\pi(v), \ell_\pi(s_1,S) - \pi\left(N(v) \setminus N[S]\right) - \pi(v), \mu_1\right\} \ \text{if} \ i=1 \\
\max\left\{ \pi\left(N(v) \setminus N[S]\right)+\pi(v), \ell_\pi(s_1,S) - \pi\left(N(v) \setminus N[S]\right), \ell_\pi(s_i,S) - \pi(v), \mu_i \right\} \ \text{otherwise.}
\end{cases}$$
Since we precomputed all values $\mu_1,\mu_2,\ldots,\mu_p$, we can compute the values $\lambda(v)$, for every $v \notin S$, in ${\cal O}(n+m)$ time, simply by scanning the neighbourhoods of every vertex.
Finally, we output a vertex $v \notin S$ such that $\lambda(v)$ is minimized.
Indeed, we claim that $\lambda(v) = L_\pi(S+v)$.
This is because we assume the diameter of $G$ to be two, and therefore $\mathbb{T}(s_j,S) \subseteq N[s_j]$ for every $j$ with $2 \leq j \leq p$.
In particular, we have: $$W_G(v,s_j) \cap \mathbb{T}(s_j,S) = \begin{cases}
\{v\} \ \text{if} \ v \in \mathbb{T}(s_j,S) \\
\emptyset \ \text{otherwise.}
\end{cases}$$
In the former case, $\ell_\pi(s_j,S+v) = \ell_\pi(s_j,S) - \pi(v)$, while in the latter case $\ell_\pi(s_j,S+v) = \ell_\pi(s_j,S)$.
Furthermore, $\mathbb{T}(v,S+v) = \{v\} \cup \left(N(v) \setminus N[S] \right)$, and so, $\ell_\pi(v,S+v) = \pi\left(N(v) \setminus N[S]\right) + \pi(v)$.
Finally,
$$W_G(v,s_1) \cap \mathbb{T}(s_1,S) = \begin{cases}
\{v\} \cup \left(N(v) \setminus N[S] \right) \ \text{if} \ v \in \mathbb{T}(s_1,S) \\
N(v) \setminus N[S] \ \text{otherwise.}
\end{cases}$$
Altogether combined, we obtain as claimed that $\lambda(v) = L_\pi(S+v)$.
\end{proof}

We cannot extend Lemmas~\ref{lem:clique} and~\ref{lem:diam-two} any further because the graph constructed in the proof of Theorem~\ref{thm:hs-hard} has diameter three.
Therefore, we can solve {\sc Balanced Vertex} in linear time if the diameter is at most two, while we cannot solve this problem in truly subquadratic time if the diameter is at least three.

\medskip
Next, we consider paths and cycles.
The problem becomes local, in the following sense.
Let us subdivide a path $G$ in maximal subpaths whose every end is either a site of $S$ or one of the two ends of $G$.
See Fig.~\ref{fig:path} for an illustration.
Doing so, we can prove that there are at most two sites of $S$ whose territory may be modified upon inserting a new site $v$, namely, those at the ends of the maximal subpath containing $v$. 
In particular, we can compute the load $L_\pi(S+v)$, for every node $v \notin S$, by applying to every maximal subpath a classical partial sum trick for vector problems.

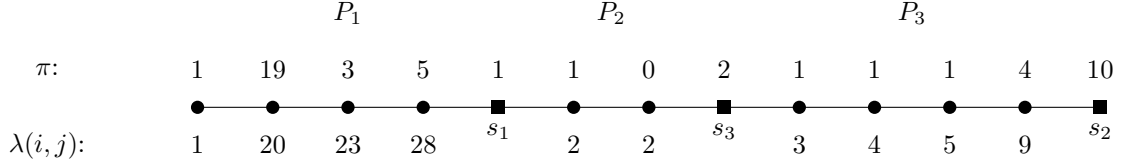
\begin{figure}[!ht]
    \centering
    \begin{tikzpicture}
        
        \node at (-8,.5) {$\pi$:}; \node at (-8,-.5) {$\lambda(i,j)$:};
        
        \node at (-6,.5) {$1$};
        \node at (-5,.5) {$19$};
        \node at (-4,.5) {$3$};
        \node at (-3,.5) {$5$};
        \node at (-2,.5) {$1$};
        \node at (-1,.5) {$1$};
        \node at (0,.5) {$0$};
        \node at (1,.5) {$2$};
        \node at (2,.5) {$1$};
        \node at (3,.5) {$1$};
        \node at (4,.5) {$1$};
        \node at (5,.5) {$4$};
        \node at (6,.5) {$10$};
        
        \node at (-6,-.5) {$1$};
        \node at (-5,-.5) {$20$};
        \node at (-4,-.5) {$23$};
        \node at (-3,-.5) {$28$};
        %\node at (-2,-.5) {$29$};
        \node at (-1,-.5) {$2$};
        \node at (0,-.5) {$2$};
        %\node at (1,-.5) {$2$};
        \node at (2,-.5) {$3$};
        \node at (3,-.5) {$4$};
        \node at (4,-.5) {$5$};
        \node at (5,-.5) {$9$};
        %\node at (6,-.5) {$10$};
        
        \node[circle,fill=black,inner sep=0pt,minimum size=5pt,label=below:{}] at (-6,0) {};
        \node[circle,fill=black,inner sep=0pt,minimum size=5pt,label=below:{}] at (-5,0) {};
        \node[circle,fill=black,inner sep=0pt,minimum size=5pt,label=below:{}] at (-4,0) {};
        \node[circle,fill=black,inner sep=0pt,minimum size=5pt,label=below:{}] at (-3,0) {};
        \node[rectangle,fill=black,inner sep=0pt,minimum size=5pt,label=below:{$s_1$}] at (-2,0) {};
        \node[circle,fill=black,inner sep=0pt,minimum size=5pt,label=below:{}] at (-1,0) {};
        \node[circle,fill=black,inner sep=0pt,minimum size=5pt,label=below:{}] at (0,0) {};
        \node[rectangle,fill=black,inner sep=0pt,minimum size=5pt,label=below:{$s_3$}] at (1,0) {};
        \node[circle,fill=black,inner sep=0pt,minimum size=5pt,label=below:{}] at (2,0) {};
        \node[circle,fill=black,inner sep=0pt,minimum size=5pt,label=below:{}] at (3,0) {};
        \node[circle,fill=black,inner sep=0pt,minimum size=5pt,label=below:{}] at (4,0) {};
        \node[circle,fill=black,inner sep=0pt,minimum size=5pt,label=below:{}] at (5,0) {};
        \node[rectangle,fill=black,inner sep=0pt,minimum size=5pt,label=below:{$s_2$}] at (6,0) {};
        
        \draw (-6,0) -- (6,0);
        
        \node at (-4,1.25) {$P_1$};
        \node at (-.5,1.25) {$P_2$};
        \node at (3.5,1.25) {$P_3$};
        
    \end{tikzpicture}
    \caption{Subdivision of a path $G$ in maximal subpaths with respect to the sites in $S$.}
    \label{fig:path}
\end{figure}

\begin{lemma}\label{lem:path}
We can solve {\sc Balanced Vertex} in ${\cal O}(n)$ time if $G$ is a path.
\end{lemma}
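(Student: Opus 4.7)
The plan is to exploit the locality of inserting a new vertex $v$ in a path: if we partition $G$ at its sites into maximal subpaths $P_1,\ldots,P_r$, then the vertex $v$ belongs to a unique $P_k$, and by Corollary~\ref{cor:connected-cells} combined with Lemma~\ref{lem:struct-territory}, only the territories of the at most two sites $\sigma_k,\tau_k$ bounding $P_k$ can be modified. All other sites' loads are unchanged. So the problem reduces to efficiently handling each $P_k$ in $O(|P_k|)$ time after an $O(n)$ preprocessing.

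My first step would be to call Proposition~\ref{prop:compute-voronoi} to compute $\mathbf{Vor}(G,S)$ and all loads $\ell_\pi(s_i,S)$, then identify the subpaths $P_1,\ldots,P_r$ and, for each, precompute the prefix sums of $\pi$ along $P_k$. To handle the sites \emph{outside} $\{\sigma_k,\tau_k\}$ uniformly, I would precompute for every $k$ the value $M_k = \max\{\ell_\pi(s,S) \mid s \in S \setminus \{\sigma_k,\tau_k\}\}$; this can be done in total $O(n+p)$ time by keeping the top three loads in $S$ globally (analogous to the $\mu_i$ trick of Lemma~\ref{lem:diam-two}). Then $L_\pi(S+v)$ will simply be the maximum of $M_k$, the two modified loads $\ell_\pi^{\text{new}}(\sigma_k,S+v)$ and $\ell_\pi^{\text{new}}(\tau_k,S+v)$, and the load $\ell_\pi(v,S+v)$ of the new site.

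Second, fix $P_k$ with internal vertices $v_1,\ldots,v_q$ (indexed from $\sigma_k$ at position $0$ to $\tau_k$ at position $q+1$, with the obvious modifications when $P_k$ abuts an endpoint of $G$). For each candidate insertion position $\ell$, a vertex $v_a$ joins $\mathbb{T}(v_\ell,S+v_\ell)$ iff $d(v_a,v_\ell)=|a-\ell|$ is strictly less than $d(v_a,S)=\min(a,q+1-a)$; a straightforward case analysis shows that the set of vertices of $P_k$ that get captured by $v_\ell$ is always an \emph{interval} $[L(\ell),R(\ell)]$, whose endpoints are explicit functions of $\ell$ (essentially $\lfloor \ell/2\rfloor+1$ and $\lceil (q+\ell)/2\rceil$, up to tie-breaking). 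Using the prefix sums of $\pi$ on $P_k$, we can then read off in $O(1)$ the load $\ell_\pi(v_\ell,S+v_\ell)$, and subtract the respective pieces from $\ell_\pi(\sigma_k,S)$ and $\ell_\pi(\tau_k,S)$ to obtain the two modified loads. Taking the maximum with $M_k$ gives $L_\pi(S+v_\ell)$ in $O(1)$ per candidate, and then we minimize over all $\ell$ in $O(|P_k|)$ time; summing over $k$ yields the claimed $O(n)$ bound.

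The main obstacle is the case analysis that guarantees the captured vertices form a single interval, especially when $P_k$ lies at an end of $G$ (so only one of $\sigma_k,\tau_k$ exists) and when ties between $d(v_a,\sigma_k)$ and $d(v_a,\tau_k)$ occur inside $P_k$: because $v$ has the lowest priority in $S+v$, the inequality in the Lemma~\ref{lem:struct-territory} characterization is strict, so $v_\ell$ captures $v_a$ only when \emph{strictly} closer, which must be checked carefully against the tie-breaking between $\sigma_k$ and $\tau_k$ that was already made by priorities in $\mathbf{Vor}(G,S)$. Once these boundary cases are dispatched, everything else is a routine prefix-sum computation.
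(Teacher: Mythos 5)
Your proposal follows essentially the same approach as the paper's proof: decompose the path into maximal subpaths between consecutive sites, precompute per-subpath prefix sums, maintain the maximum load among sites not bounding the current subpath (your $M_k$ plays the role of the paper's $\Lambda_i$), and for each candidate insertion compute in $O(1)$ the interval of captured vertices and the resulting three modified loads. The tie-breaking and endpoint boundary cases you flag are exactly the ones the paper handles via the index $j_{\lim}^i$ and the separate case $|S_i|=1$.
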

\begin{proof}
We first describe our algorithm.
See Fig.~\ref{fig:path} for an illustration.
\begin{enumerate}
    \item We compute $\mathbf{Vor}(G,S)$ and the loads $\ell_\pi(s_j,S)$, for every $j$ with $1 \leq j \leq p$.
    \item Let $C_1,C_2,\ldots,C_q$ be the connected components of $G \setminus S$.
    Note that we have $q \leq p+1$.
    For every $i$ with $1 \leq i \leq q$, let $S_i = N(C_i)$, and let $P_i$ be the subpath induced by $N[C_i] = C_i \cup S_i$.
    We stress that $|S_i|=2$, except maybe if $i \in \{1,q\}$ for which we may have $|S_i| = 1$.
    \item We consider each subpath $P_i$, with $1 \leq i \leq q$, sequentially.
    In what follows, let $P_i=(w_0^i,w_1^i,\ldots,w_{t_i-1}^i)$.
    \begin{itemize}
        \item We compute $\Lambda_i = \max_{s \in S \setminus S_i} \ell_\pi(s,S)$.
        \item We compute $\lambda(i,j) = \sum_{k=0}^j \pi(w_k^i)$, for every $k$ with $0 \leq k < t_i$.
        %We compute $\lambda^L(i,j) = \sum_{k=0}^j \pi(w_k^i)$, resp. $\lambda^R(i,j) = \sum_{k=j}^{t_i-1} \pi(w_k^i)$, for every $k$ with $0 \leq k < t_i$.
    \end{itemize}
    Then, there are two different cases to be considered.
    \begin{enumerate}
        \item {\it Case $|S_i| = 1$}. Either $i=1, \ w_0^1 \notin S_1$ or $i=q, \ w_{t_q-1}^q \notin S_q$. 
        %We consider both subcases in what follows.
        %\begin{itemize}
        %    \item {\it Subcase $i=1, \ w_0^1 \notin S_1$}. In this situation, $w_0^1$ is the leftmost end of path $G$, and $S_1 = \{w_{t_1-1}^1\}$.
        %For every $j$ with $0 \leq j < t_1-1$, let $j_R = j + \left\lfloor \frac{t_1-j}{2}\right\rfloor - 1$. We compute:
        %$$\Lambda(v) = \max\{\Lambda_1,\lambda^L(1,j_R),\ell_\pi(w_{t_1-1}^1,S)-\lambda^L(1,j_R)\}$$
        %    \item {\it Subcase $i=q, \ w_{t_q-1}^q \notin S_q$}.  In this situation, $w_{t_q-1}^q$ is the rightmost end of path $G$, and $S_q = \{w_{0}^q\}$.
        %For every $j$ with $0 < j \leq t_q-1$, let $j_L = j - \left\lfloor \frac{j+1}{2}\right\rfloor + 1$. We compute:
        %$$\Lambda(v) = \max\{\Lambda_q,\lambda^R(q,j_L),\ell_\pi(w_{0}^q,S)-\lambda^R(q,j_L)\}$$
        %\end{itemize}
        We observe that both subcases are symmetric up to reverting the path $G$.
        Therefore, we assume without loss of generality that $i=1, \ w_0^1 \notin S_1$.
        In this situation, $w_0^1$ is the leftmost end of path $G$, and $S_1 = \{w_{t_1-1}^1\}$.
        For every $j$ with $0 \leq j < t_1-1$, let $j_R = \left\lceil \frac{t_1-1+j}{2}\right\rceil - 1$. We compute:
        $$\Lambda(v) = \max\{\Lambda_1,\lambda(1,j_R),\ell_\pi(w_{t_1-1}^1,S)-\lambda(1,j_R)\}$$
        \item {\it Case $|S_i| = 2$}. We first compute $j^i_{\lim} = \max\{ j \mid 0 \leq j < \ell_i, \ w_j^i \in \mathbb{T}(w_0^i,S) \}$.
        Then, we consider each node of $C_i$ sequentially.
        In what follows, let $j$ with $0 < j < t_i-1$.
        Let $j_L = \left\lfloor \frac{j}{2}\right\rfloor + 1$, and let $j_R = \left\lceil \frac{t_1-1+j}{2}\right\rceil - 1$.
        We compute:
        \begin{align*}
            \Lambda(v) = \max\{&\Lambda_i, \lambda(i,j_R) - \lambda(i,j_L-1), \\
            &\ell_\pi(w_0^i,S) - \lambda(i,j_{\lim}^i) + \lambda(i,j_L-1), \\
            &\ell_\pi(w_{t_i-1}^i,S) - \lambda(i,j_R) + \lambda(i,j_{\lim}^i) \}
        \end{align*}
    \end{enumerate}
    \item Finally, we output a node $v \notin S$ such that $\Lambda(v)$ is minimized.
\end{enumerate}
{\bf Complexity.}
By Proposition~\ref{prop:compute-voronoi}, Step $1$ of the algorithm can be done in ${\cal O}(n)$ time.
Step $2$ can be also done in ${\cal O}(n)$ time.
Furthermore, since we have $|S_i| \leq 2$ for every $i$ with $1 \leq i \leq q$, we can compute all values $\Lambda_1,\Lambda_2,\ldots,\Lambda_q$ in ${\cal O}(n)$ time, simply by keeping track of the $\max\{3,p\}$ nodes of $S$ with maximum load. 
Let $i$ with $1 \leq i \leq q$ be fixed.
By dynamic programming, we compute in ${\cal O}(t_i)$ time the values $\lambda(i,j)$, for every $j$ with $0 \leq j \leq t_i-1$.
Then, we can compute in ${\cal O}(t_i)$ time the values $\Lambda(v)$, for every $v \in C_i$, simply by scanning once the path $P_i$ and performing ${\cal O}(1)$ operations for each node.
Since we have $\sum_{i=1}^q t_i = {\cal O}(n)$, the total running time for Step $3$ is also in ${\cal O}(n)$.

\smallskip
\noindent
{\bf Correctness.}
Let $i$ be such that $1 \leq i \leq q$, and let $v \in C_i$ be arbitrary.
We prove in what follows that $\Lambda(v) = L_\pi(S+v)$.
First we claim that $C_i \cup S_i \subseteq \bigcup\{\mathbb{T}(s,S) \mid s \in S_i\}$.
Indeed, suppose by contradiction that some node $u \in C_i$ belongs to $\mathbb{T}(s',S)$, for some $s' \in S \setminus S_i$.
The unique $us'$-path intersects $S_i$, which by Lemma~\ref{lem:metric-interval} implies that $\mathbb{T}(s',S) \cap S_i \neq \emptyset$, a contradiction.
In the same way, we have $\mathbb{T}(v,S+v) \subseteq C_i \subseteq \bigcup\{\mathbb{T}(x,S+v) \mid x \in S_i \cup \{v\}\}$.
This implies that for every $s \in S_i$ we have $\mathbb{T}(s,S+v) \setminus C_i = \mathbb{T}(s,S) \setminus C_i$, while for every $s' \in S \setminus S_i$, we have $\mathbb{T}(s',S) = \mathbb{T}(s',S+v)$.
Hence, we obtain: 
\begin{align*}
    L_\pi(S+v) &= \max\{\ell_\pi(x,S+v) \mid x \in S_i \cup \{v\}\} \cup \{ \ell_\pi(s',S+v) \mid s' \in S \setminus S_i \} \\
    &= \max\{\Lambda_i\} \cup \{\ell_\pi(x,S+v) \mid x \in S_i \cup \{v\}\}
\end{align*}
Clearly, $\mathbb{T}(v,S+v) = \{u \in C_i \mid d(u,v) < d(u,S_i)\}$.
Let $j$ with $0 \leq j \leq t_1-1$ and $v=w_j^i$.
There are two cases to be considered:
\begin{itemize}
    \item {\it Case $|S_i| = 1$}. By symmetry, we may further assume $i=1, \ w_0^1 \notin S_1$. Then, $C_1 \subseteq \mathbb{T}(w_{t_1-1}^1,S)$, and by Lemma~\ref{lem:struct-territory} we have $\mathbb{T}(v,S+v) = W(v,w_{t_1-1}^1) = \{w_0^1,w_1^1,\ldots,w_{j_R}^1\}$. In particular, $\ell_\pi(v,S+v) = \sum_{k=0}^{j_R}\pi(w_k^1) = \lambda(1,j_R)$, and so $\ell_\pi(w_{t_1-1}^1,S+v) = \ell_\pi(w_{t_1-1}^1,S) - \lambda(1,j_R)$.
    \item {\it Case $|S_i| = 2$}. In this situation, $\mathbb{T}(v,S+v) = \{w_{j_L}^i,w_{j_L+1}^i,\ldots,w_{j_R}^i\}$. Therefore, $\ell_\pi(v,S+v) = \sum_{k=j_L}^{j_R}\pi(w_k^i) = \lambda(i,j_R) - \sum_{k=0}^{j_L-1}\pi(w_k^i) = \lambda(i,j_R) - \lambda(i,j_L-1)$. By the definition of $j_{\lim}^i$ we have $\mathbb{T}(w_0^i,S) \cap C_i = \{w_1^i,\ldots,w_{j_{\lim}^i}^i\}$ while $\mathbb{T}(w_{t_i-1}^i,S) \cap C_i = \{w_{j_{\lim}^i+1}^i,\ldots,w_{t_i-2}^i\}$. 
    
    Furthermore, we claim that $j_L-1 \leq j_{\lim}^i \leq j_R$. 
    Indeed, let us first assume that $j \leq j_{\lim}^i$. In particular, $j_L-1 \leq j \leq j_{\lim}^i$. For every $k$ with $j \leq k \leq j_{\lim}^i$, we have $d(v,w_k^i) < d(w_{t_i-1}^i,w_k^i)$, which follows from $d(w_0^i,w_k^i) \leq d(w_{t_i-1}^i,w_k^i)$ and $v$ is on the $w_0^iw_k^i$-path. Therefore, $j_{\lim}^i \leq j_R$. 
    From now on, we assume that $j_{\lim}^i < j$. In particular, $j_{\lim}^i < j_R$. For every $k$ with $j_{\lim}^i < k \leq j$, we have $d(v,w_k^i) < d(w_{0}^i,w_k^i)$, which follows from $d(w_0^i,w_k^i) \geq d(w_{t_i-1}^i,w_k^i)$ and $v$ is on the $w_{t_i-1}^iw_k^i$-path. Therefore, $j_L \leq j_{\lim}^i +1$.
    
    As a result,
    \begin{align*}
        \ell_\pi(w_0^i,S+v) &= \ell_\pi(w_0^i,S) - \sum_{k=j_L}^{j_{\lim}^i} \pi(w_k^i) \\
        &= \ell_\pi(w_0^i,S) - \left(\lambda(i,j_{\lim}^i) - \lambda(i,j_L-1)\right)
    \end{align*}
    and
    \begin{align*}
        \ell_\pi(w_{t_i-1}^i,S+v) &= \ell_\pi(w_{t_i-1}^i,S) - \sum_{k=j_{\lim}^i+1}^{j_R} \pi(w_k^i) \\
        &= \ell_\pi(w_{t_i-1}^i,S) - \left(\lambda(i,j_R) - \lambda(i,j_{\lim}^i) \right)
    \end{align*}
\end{itemize}
Overall, we proved that in both cases $\Lambda(v) = L_\pi(S+v)$.
Therefore, the algorithm is correct.
\end{proof}

Up to slight adjustments ({\it e.g.}, in the special case where $|S|=1$), we can directly apply to cycles our techniques for paths.
We prove it below.

\begin{lemma}\label{lem:cycle}
We can solve {\sc Balanced Vertex} in ${\cal O}(n)$ time if $G$ is a cycle.
\end{lemma}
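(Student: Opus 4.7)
The plan is to reuse the scheme of Lemma~\ref{lem:path} almost verbatim once we separate out the anomalous case $|S|=1$.

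For $|S|\ge 2$, I would first observe that the sites partition the cycle into $|S|$ maximal arcs (the connected components of $G\setminus S$), and each such arc has exactly two endpoint-neighbours in $S$ which are distinct sites. For an internal vertex $u$ of an arc with endpoint-sites $s$ and $s'$, any shortest path from $u$ to a site outside $\{s,s'\}$ must cross $s$ or $s'$, so $d(u,S)=\min(d(u,s),d(u,s'))$ and $u$ lies in $\mathbb{T}(s,S)\cup\mathbb{T}(s',S)$. This is exactly the local structure exploited in the path proof. Hence I would compute $\mathbf{Vor}(G,S)$ and the loads $\ell_\pi(s_j,S)$ via Proposition~\ref{prop:compute-voronoi}, identify the arcs $P_i$ with their pairs of endpoint-sites $S_i=\{s,s'\}$, precompute $\Lambda_i=\max_{s''\in S\setminus S_i}\ell_\pi(s'',S)$ (in aggregate $O(n)$ time, by keeping the top three loads so that removing the two sites in $S_i$ always leaves a valid candidate), prefix sums $\lambda(i,j)$ along each arc, and the boundary index $j_{\lim}^i$ separating $\mathbb{T}(s,S)\cap P_i$ from $\mathbb{T}(s',S)\cap P_i$. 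Then, for every candidate $v=w_j^i\in C_i$, I would compute $\Lambda(v)$ by the very same formula as in the $|S_i|=2$ branch of Lemma~\ref{lem:path}, and return a minimizer. The correctness argument copies that of Lemma~\ref{lem:path} line by line, and the time is $O(n)$.

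For $|S|=1$ with $S=(s_1)$, we have $\mathbb{T}(s_1,S)=V$, so by Lemma~\ref{lem:struct-territory} any candidate $v\ne s_1$ satisfies $\mathbb{T}(v,S+v)=W(v,s_1)$ and $\mathbb{T}(s_1,S+v)=V\setminus W(v,s_1)$; hence $L_\pi(S+v)=\max\{\pi(W(v,s_1)),\,\pi(V)-\pi(W(v,s_1))\}$. Writing the cycle cyclically as $s_1=v_0,v_1,\ldots,v_{n-1}$, the set $W(v_j,s_1)$ is a contiguous arc centred at $v_j$ whose two boundaries are the midpoints between $s_1$ and $v_j$ on the two $s_1v_j$-arcs (ties broken in favour of $s_1$ because it has higher priority). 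These two boundary indices are closed-form functions of $j$ and the cycle length $n$, so a single pass over $(v_0,v_1,\ldots,v_{n-1})$ to build prefix sums of $\pi$, followed by one $O(1)$-time evaluation per $v_j$, yields all values $L_\pi(S+v)$ in $O(n)$ total.

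The main obstacle is the $|S|=1$ case: unlike on a path, the unique component of $G\setminus S$ has both cyclic endpoint-neighbours equal to $s_1$, so the ``two-endpoint-sites'' machinery from the path proof cannot be invoked directly and the wrap-around indexing together with the parity-driven tie-breaking has to be set up carefully. Once this is handled, the $|S|\ge 2$ case is a cosmetic rewrite of Lemma~\ref{lem:path}.
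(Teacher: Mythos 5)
Your proposal is correct and follows essentially the same route as the paper's proof: it isolates the $|S|=1$ case and handles it with prefix sums over a cyclic ordering (observing that $\mathbb{T}(v,S+v)=W(v,s_1)$ is a contiguous arc whose endpoints are closed-form in $v$), and for $|S|\ge 2$ it reuses the $|S_i|=2$ branch of the path algorithm on the arcs cut out by $S$, noting that the $|S_i|=1$ boundary case can no longer occur.
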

\begin{proof}
Let us first assume that $p=|S|=1$.
Let $(v_0,v_1,\ldots,v_{n-1})$ be a cyclic ordering of the vertices, and let us assume without loss of generality that $S=(v_0)$.
We compute by dynamic programming the values $\lambda(j) = \sum_{k=0}^j \pi(v_k)$, for every $j$ with $1 \leq j < n$. It takes ${\cal O}(n)$ time.
By symmetry, we only consider in what follows the vertices $v_i$ with $0 < i \leq n/2$.
In this situation, by Lemma~\ref{lem:struct-territory}, 
$$\mathbb{T}(v_i,S+v_i) = W(v_i,v_0) = \{v_{\left\lfloor \frac i 2 \right\rfloor + 1},v_{\left\lfloor \frac i 2 \right\rfloor + 2}, \ldots,v_{\left\lceil\frac{n-i}{2}\right\rceil-1}\}.$$
Let $\ell_i = \lambda(\left\lceil\frac{n-i}{2}\right\rceil-1) - \lambda(\left\lfloor \frac i 2 \right\rfloor)$.
Note that $L_{\pi}(S+v_i) = \max\{\ell_i,\pi(G)-\ell_i\}$.
Therefore, it suffices to output a vertex $v_i$ with $1 \leq i \leq n/2$ and such that $ \max\{\ell_i,\pi(G)-\ell_i\}$ is minimized.
The running time is in ${\cal O}(n)$.

From now on, we assume that $p=|S| >1$.
We compute $\mathbf{Vor}(G,S)$ and the loads $\ell_\pi(s_j,S)$, for every $j$ with $1 \leq j \leq p$.
By Proposition~\ref{prop:compute-voronoi}, this can be done in ${\cal O}(n)$ time.
Let $C_1,C_2,\ldots,C_q$ be the connected components of $G \setminus S$.
For every $i$ with $1 \leq i \leq q$, let $S_i = N(C_i)$, and let $P_i$ be the subpath induced by $N[C_i] = C_i \cup S_i$.
Note that the two vertices of $S_i$ are the two ends of $P_i$.
We are done applying the exact same procedure for each subpath $P_i$ as in Lemma~\ref{lem:path} (Step $3$ of the algorithm for paths).
The running time is in ${\cal O}(n)$.
Correctness follows from the exact same arguments as in the proof of Lemma~\ref{lem:path}, and it is even simpler because the case $|S_i|=1$ cannot occur.
\end{proof}

\subsection{Trees}\label{sec:tree}

We now address the case of general trees.
Unlike for paths, the number of former sites in $S$ whose territory is modified after insertion of a new site may be arbitrarily large.
See Fig.~\ref{fig:tree} for an illustration.
Fortunately, we can use {\em centroid decomposition} in order to compute, for every $v \notin S$, a set of ${\cal O}(\log{n})$ nodes that are on the paths between $v$ and the territories of all but ${\cal O}(\log{n})$ former sites in $S$.
More precisely, being given a tree $T$, we can summarize our strategy as follows:
\begin{enumerate}
    \item We compute a centroid, {\it i.e.} a node $c$ such that every connected component of $T \setminus \{c\}$ contains at most $n/2$ nodes. 
    \item We process the unique site $s_i \in S$ such that $c \in \mathbb{T}(s_i,S)$ (see Lemma~\ref{lem:tree-1});
    \item We process all sites $s_j$ whose territories are on different connected components of $T \setminus \{c\}$ than $v$ (see Lemmas~\ref{lem:tree-2} and~\ref{lem:tree-3});
    \item Finally, we recursively apply our strategy in order to process all remaining sites whose territories are on the same connected component of $T \setminus \{c\}$ as $v$.
\end{enumerate}

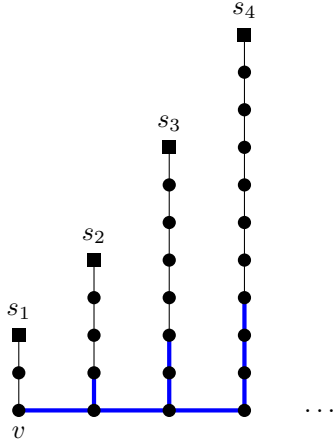
\begin{figure}[!ht]
    \centering
    \begin{tikzpicture}
        
       \draw (0,0) -- (0,1);
       \draw (1,0) -- (1,2);
       \draw (2,0) -- (2,3.5);
       \draw (3,0) -- (3,5);
       
       \draw[ultra thick, blue] (0,0) -- (3,0);
       \draw[ultra thick, blue] (1,0) -- (1,.5);
       \draw[ultra thick, blue] (2,0) -- (2,1);
       \draw[ultra thick, blue] (3,0) -- (3,1.5);
        
       \node[circle,fill=black,inner sep=0pt,minimum size=5pt,label=below:{$v$}] at (0,0) {};
       \node[circle,fill=black,inner sep=0pt,minimum size=5pt,label=below:{}] at (1,0) {};
       \node[circle,fill=black,inner sep=0pt,minimum size=5pt,label=below:{}] at (2,0) {};
       \node[circle,fill=black,inner sep=0pt,minimum size=5pt,label=below:{}] at (3,0) {};
       \node at (4,0) {$\ldots$};
       
       \node[rectangle,fill=black,inner sep=0pt,minimum size=5pt,label=above:{$s_1$}] at (0,1) {};
       \node[rectangle,fill=black,inner sep=0pt,minimum size=5pt,label=above:{$s_2$}] at (1,2) {};
       \node[rectangle,fill=black,inner sep=0pt,minimum size=5pt,label=above:{$s_3$}] at (2,3.5) {};
       \node[rectangle,fill=black,inner sep=0pt,minimum size=5pt,label=above:{$s_4$}] at (3,5) {};
    
       \node[circle,fill=black,inner sep=0pt,minimum size=5pt,label=below:{}] at (0,.5) {};
        
       \node[circle,fill=black,inner sep=0pt,minimum size=5pt,label=below:{}] at (1,.5) {};
       \node[circle,fill=black,inner sep=0pt,minimum size=5pt,label=below:{}] at (1,1) {};
       \node[circle,fill=black,inner sep=0pt,minimum size=5pt,label=below:{}] at (1,1.5) {};;
       
       \node[circle,fill=black,inner sep=0pt,minimum size=5pt,label=below:{}] at (2,.5) {};
       \node[circle,fill=black,inner sep=0pt,minimum size=5pt,label=below:{}] at (2,1) {};
       \node[circle,fill=black,inner sep=0pt,minimum size=5pt,label=below:{}] at (2,1.5) {};
       \node[circle,fill=black,inner sep=0pt,minimum size=5pt,label=below:{}] at (2,2) {};
       \node[circle,fill=black,inner sep=0pt,minimum size=5pt,label=below:{}] at (2,2.5) {};
       \node[circle,fill=black,inner sep=0pt,minimum size=5pt,label=below:{}] at (2,3) {};
       
       \node[circle,fill=black,inner sep=0pt,minimum size=5pt,label=below:{}] at (3,.5) {};
       \node[circle,fill=black,inner sep=0pt,minimum size=5pt,label=below:{}] at (3,1) {};
       \node[circle,fill=black,inner sep=0pt,minimum size=5pt,label=below:{}] at (3,1.5) {};
       \node[circle,fill=black,inner sep=0pt,minimum size=5pt,label=below:{}] at (3,2) {};
       \node[circle,fill=black,inner sep=0pt,minimum size=5pt,label=below:{}] at (3,2.5) {};
       \node[circle,fill=black,inner sep=0pt,minimum size=5pt,label=below:{}] at (3,3) {};
       \node[circle,fill=black,inner sep=0pt,minimum size=5pt,label=below:{}] at (3,3.5) {};
       \node[circle,fill=black,inner sep=0pt,minimum size=5pt,label=below:{}] at (3,4) {};
       \node[circle,fill=black,inner sep=0pt,minimum size=5pt,label=below:{}] at (3,4.5) {};
       
    \end{tikzpicture}
    \caption{Beginning of the construction of a tree with $p$ former sites in $S$ and ${\cal O}(p^2)$ nodes. There is a central path with $p$ nodes whose first node is $v$. Each former site $s_i$, with $2 \leq i \leq p$, is connected to the central path by a path of length $3i+1$. The thick edges represent the paths between the new site $v$ and every node in its territory. In particular, for every $i$ with $2 \leq i \leq p$, exactly one node is removed from the territory of $s_i$.}
    \label{fig:tree}
\end{figure}

\begin{lemma}\label{lem:tree-1}
Let $s$ be an arbitrary node in a tree $T=(V,E)$.
Let $\pi : V \to \mathbb{R}_{\geq 0}$ be any cost function.
We can compute $\alpha(v) = \pi\left(W(v,s)\right)$, for every node $v \in V$, in ${\cal O}(n)$ time.
\end{lemma}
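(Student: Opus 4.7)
The plan is to root $T$ at $s$ and to identify, for every vertex $v$, the set $W(v,s)$ with a single rooted subtree; then $\alpha(v)$ reduces to a single precomputed subtree sum, which I can look up in constant time during one DFS.

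First, I would root $T$ at $s$, denote $\mathrm{depth}(u) = d(s,u)$, and let $T_u$ be the subtree of the rooted tree consisting of $u$ together with all its descendants. A single postorder traversal computes $\sigma(u) = \pi(T_u)$ for every $u \in V$ in $O(n)$ time.

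The key structural observation is that for every vertex $w$, if $\ell$ denotes the depth of the lowest common ancestor of $v$ and $w$ in the tree rooted at $s$, then $d(v,w) = \mathrm{depth}(v) + \mathrm{depth}(w) - 2\ell$ while $d(s,w) = \mathrm{depth}(w)$. Writing $d = \mathrm{depth}(v)$, this yields $w \in W(v,s)$ if and only if $2\ell > d$, equivalently $\ell \geq \lfloor d/2 \rfloor + 1$. Let $u^*(v)$ denote the ancestor of $v$ at depth exactly $\lfloor d/2 \rfloor + 1$ (which exists whenever $d \geq 1$, since $\lfloor d/2 \rfloor + 1 \leq d$). Since the LCA of $v$ and $w$ is an ancestor of $w$ of depth at least $\lfloor d/2 \rfloor + 1$ if and only if $w \in T_{u^*(v)}$, we obtain $W(v,s) = T_{u^*(v)}$ for every $v$ with $d \geq 1$, and $W(s,s) = \emptyset$. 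Consequently, $\alpha(v) = \sigma(u^*(v))$ if $v \neq s$ and $\alpha(s) = 0$.

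It remains to retrieve $u^*(v)$ in constant time for every $v$. For that I would perform a DFS from $s$ while maintaining an array $A$ storing the current root-to-vertex path, so that $A[j]$ is the ancestor at depth $j$ of the vertex currently being visited. When a vertex $v$ at depth $d \geq 1$ is discovered, we read off $u^*(v) = A[\lfloor d/2 \rfloor + 1]$ by a single array lookup, and the array is updated at each DFS step in $O(1)$. This yields the claimed $O(n)$ total running time. The main conceptual step, and the only nontrivial one, is the identification of $W(v,s)$ with a single rooted subtree $T_{u^*(v)}$; once that is established, the algorithmic part is a routine combination of subtree sums and the DFS ancestor stack.
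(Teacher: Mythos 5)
Your proof is correct and takes essentially the same approach as the paper: both root the tree at $s$, identify $W(v,s)$ with the rooted subtree hanging off the ancestor of $v$ at depth $\lfloor d(v,s)/2 \rfloor + 1$ (the paper writes $d(v,s)=2q+\varepsilon$ and uses the ancestor at depth $q+1$, which is the same node), and then look up a precomputed subtree sum $\pi(T_x)$. The only difference is implementation: the paper retrieves that ancestor via a constant-time level-ancestor data structure, whereas you read it off the DFS ancestor stack; both run in ${\cal O}(n)$ total and your variant is arguably the more elementary and self-contained.
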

\begin{proof}
We root $T$ at $s$. Then, we preprocess $T$ in ${\cal O}(n)$ time so that the following holds for every node $v$:
\begin{itemize}
    \item We store $d_T(v,s)$ --- the level of node $v$;
    \item We store $\pi(T_v)$ --- the cost of all nodes in the subtree $T_v$ rooted at $v$;
    \item For every $i$ with $0 \leq i \leq d_T(v,s)$, we can compute the $i^{th}$-level ancestor of $v$ in ${\cal O}(1)$ time~\cite{BeF04}.
\end{itemize}
We consider each node $v \in V$ sequentially.
Let $d_T(v,s) = 2q+\varepsilon$, with $\varepsilon \in \{0,1\}$.
In this situation, $W(v,s) = T_x$, with $x$ the $(q+1)^{th}$-level ancestor of $v$.
In particular, we can compute $x$, and so, $\pi(T_x)=\alpha(v)$, in ${\cal O}(1)$ time.
Overall, the total running time is in ${\cal O}(n)$.
\end{proof}

Let $c$ be an arbitrary cut-vertex in a tree $T$ (in our Theorem~\ref{thm:tree}, this node $c$ will always be a centroid of the tree).
The next two lemmas allow us to process, for all potential locations $v$ for a new site, all former sites in $S$ whose territory is disconnected from $v$ in $T \setminus \{c\}$.
In particular, using Lemma~\ref{lem:tree-2} we can compute the cost of all nodes in those territories that are closer to $v$ than to $S$ (and so, that would end up in $v$'s territory).
Using Lemma~\ref{lem:tree-3}, we can compute the maximum load of those sites in the Voronoi diagram of $T$ with respect to $S+v$.

\begin{lemma}\label{lem:tree-2}
Let $c$ be an internal node in a tree $T=(V,E)$, and let $S$ be a proper vertex list.
We can compute $\beta(v) = \sum\{\pi(x) \mid c \in I(v,x) \ \text{and} \ d(x,v) < d(x,S)\}$, for every node $v \in V \setminus \{c\}$, in ${\cal O}(n)$ time.
\end{lemma}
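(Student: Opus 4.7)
My plan is to root $T$ at $c$ and let $T_1,T_2,\ldots,T_r$ be the subtrees hanging off $c$ (there are $r \geq 2$ since $c$ is internal). For any $v \neq c$, say $v \in T_i$, the condition $c \in I(v,x)$ holds exactly when $x = c$ or $x$ lies in some $T_j$ with $j \neq i$, and in that case $d(v,x) = d(v,c) + d(c,x)$. Setting $h(x) = d(x,S) - d(c,x)$ for every $x \neq c$, the condition $d(x,v) < d(x,S)$ becomes $d(v,c) < h(x)$. Hence
\[
\beta(v) \;=\; \pi(c)\cdot[d(v,c) < d(c,S)] \;+\; \sum_{j\neq i}\,\sum_{x\in T_j} \pi(x)\cdot[h(x) > d(v,c)].
\]
Nodes with $h(x) \leq 0$ can never contribute, and in particular nodes of $S$ (for which $d(x,S) = 0$) are silently discarded.

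First I would precompute, in ${\cal O}(n)$ time: $d(x,S)$ for every $x$ via a multi-source BFS from $S$; $d(c,x)$ for every $x$ via BFS from $c$; and, via the same BFS, the index of the subtree $T_i$ containing each $x \neq c$. From this I get $h(x)$ for all $x \neq c$. Next, I build a global suffix-sum array $G[k] = \sum_{x \neq c,\, h(x) > k} \pi(x)$ for $k = 0, 1, \ldots, n-1$: bucket-sort the $x$'s by $h(x)$ (discarding $h(x) \leq 0$), then compute suffix sums. This takes ${\cal O}(n)$ time.

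The crux is computing, for each subtree $T_i$, the local array $G_i[k] = \sum_{x \in T_i,\, h(x) > k} \pi(x)$ at the values $k = d(v,c)$ we will need. A naive implementation would fill $G_i$ over a range of size $\Theta(n)$ per subtree and blow up to ${\cal O}(nr)$ time. The fix is a capping trick: let $D_i$ be the maximum of $d(v,c)$ over $v \in T_i$, which is at most $|T_i|$; lump every $x \in T_i$ with $h(x) > D_i$ into a single base contribution $U_i$, bucket the remaining $x \in T_i$ by $h(x) \in \{1,\ldots,D_i\}$, and build suffix sums of length $D_i$ on top of $U_i$. This costs ${\cal O}(|T_i|)$ per subtree, summing to ${\cal O}(n)$ across all subtrees. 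Finally, for each $v \in V \setminus \{c\}$, writing $v \in T_i$ and $d = d(v,c)$, I return $\pi(c)\cdot[d < d(c,S)] + G[d] - G_i[d]$ in ${\cal O}(1)$, giving the claimed ${\cal O}(n)$ total.

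The main potential obstacle is exactly the per-subtree step: keeping its cost proportional to $|T_i|$ rather than $n$ requires the depth-capping observation above. Correctness is then immediate from the identity $d(v,x) = d(v,c) + d(c,x)$ that characterizes the vertices $x$ with $c \in I(v,x)$, together with the fact that $G[d] - G_i[d]$ restricts the global suffix sum to the complement of $v$'s own subtree, which is precisely the set of candidate $x$'s outside $T_i$.
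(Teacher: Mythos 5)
Your proposal is correct and follows essentially the same route as the paper's proof: compute a global suffix-sum array indexed by $d(v,c)$, and for each component $C_k$ of $T\setminus\{c\}$ compute a local suffix-sum array that is subtracted off, giving $\beta(v)=a[d(v,c)]-a_k[d(v,c)]$ (plus the separate handling of $c$ itself, which the paper folds into the global array rather than isolating as you do). The one place you are more explicit than the paper is the capping step: the paper simply states that the local arrays $a_k[\cdot]$ can be computed in ${\cal O}(|C_k|)$ time ``by using the exact same approach'' as for the global array, but a literal repetition would bucket values $d(x,S)-d(x,c)$ in a range of size $\Theta(n)$, costing $\Theta(n)$ per component. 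Your observation that one only needs suffix sums up to $D_i=\max_{v\in T_i}d(v,c)\leq |T_i|$, lumping all larger offsets into a single base term $U_i$, is precisely the missing detail that makes the per-component cost ${\cal O}(|T_i|)$, and you are right to flag it as the crux. Apart from naming and bookkeeping ($h(x)$ vs.\ the paper's $b[j]/a[i]$, rooting at $c$ vs.\ enumerating components), the two arguments are the same.
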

\begin{proof}
Let us first describe our algorithm:
\begin{enumerate}
    \item In ${\cal O}(n)$ time, we compute $d(x,S)$ and $d(x,c)$ for every node $x$.
    \item For every $i$ with $0 \leq i \leq n$, we compute $a[i] = \sum\{ \pi(x) \mid d(x,c) + i < d(x,S) \}$.
    For that, we start computing $b[j] = \sum\{ \pi(x) \mid d(x,S) = d(x,c) + j \}$, for every $j$ with $0 \leq j \leq n$, simply by scanning once all nodes $x$.
    Since $a[i] = \sum_{j > i}b[j]$, we are done in additional ${\cal O}(n)$ time by dynamic programming.
    \item We consider each connected component $C_1,C_2,\ldots,C_q$ of $T \setminus \{c\}$ sequentially. Let $k$ with $1 \leq k \leq q$ be fixed.
    For every $i$ with $0 \leq i \leq |C_k|$, we compute $a_k[i] = \sum\{ \pi(x) \mid x \in C_k, \ d(x,c) + i < d(x,S)  \}$.
    This can be done in ${\cal O}(|C_k|)$ time by using the exact same approach as for the prior step of the algorithm.
    Finally, for every node $v \in C_k$, we set: $$\beta(v) =  a[d(v,c)] - a_k[d(v,c)].$$
    This can be computed in ${\cal O}(1)$ time per node of $C_k$, and therefore, in ${\cal O}(|C_k|)$ time.
\end{enumerate}
The total running time of the algorithm is in ${\cal O}(n)$.
In order to prove correctness of the algorithm, let $v \in V \setminus \{c\}$ be arbitrary.
We have $v \in C_k$ for some $k$ with $1 \leq k \leq q$.
For every node $x$, we have $c \in I(x,v)$ if and only if $x \notin C_k$.
Furthermore, in this situation we have $d(x,v) < d(x,S)$ if and only $d(x,c) + d(c,v) < d(x,S)$. 
The sum of the costs of all nodes $x$ such that $d(x,c) + d(c,v) < d(x,S)$ is exactly $a[d(v,c)]$.
In order to compute $\beta(v)$ from $a[d(v,c)]$, we need to remove from the latter the sum of the costs of all such nodes $x$ with $x \in C_k$, that is exactly $a_k[d(v,c)]$.
\end{proof}

\begin{lemma}\label{lem:tree-3}
Let $c$ be an internal node in a tree $T=(V,E)$, and let $S$ be a proper vertex list.
We can compute $\gamma(v) = \max\{ \ell_\pi(s,S+v) \mid c \notin \mathbb{T}(s,S), \ c \in I(v,s)  \}$, for every node $v \in V \setminus \{c\}$, in ${\cal O}(n)$ time.
\end{lemma}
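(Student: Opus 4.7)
The argument starts from a structural simplification. Since $\mathbb{T}(s,S)$ is connected (Corollary~\ref{cor:connected-cells}) and contains $s$, whenever $c \notin \mathbb{T}(s,S)$ this territory lies entirely inside one connected component $C_{k(s)}$ of $T \setminus \{c\}$. For $v$ in a component $C_k$, the condition $c \in I(v,s)$ then becomes simply $k(s) \neq k$. Moreover, for such a valid site $s$ and any $x \in \mathbb{T}(s,S)$, the unique $xv$-path goes through $c$, so $d(x,v) = d(x,c) + d(c,v)$, and hence $x$ belongs to $\mathbb{T}(s,S+v)$ if and only if $h(x) := d(x,s) - d(x,c) \leq d(c,v)$. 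Setting $f_s(t) = \sum\{\pi(x) : x \in \mathbb{T}(s,S),\ h(x) \leq t\}$, we obtain $\ell_\pi(s,S+v) = f_s(d(c,v))$, a non-decreasing step function of the integer parameter $t$.

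The plan is then to aggregate by component. Define $g_{k'}(t) = \max\{f_s(t) : s \text{ valid},\ k(s) = k'\}$, let $G_1(t)$ and $k^*(t)$ denote the largest $g_{k'}(t)$ and the component realizing it, and let $G_2(t)$ denote the largest $g_{k'}(t)$ over components $k' \neq k^*(t)$. Then $\gamma(v) = G_1(d(c,v))$ when $k(v) \neq k^*(d(c,v))$, and $\gamma(v) = G_2(d(c,v))$ otherwise. Once the triples $(G_1(t), G_2(t), k^*(t))$ are tabulated for every relevant integer $t$, one final pass over $v \in V \setminus \{c\}$ produces every $\gamma(v)$ in $O(1)$, for $O(n)$ in total.

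To build that table, I would first preprocess in $O(n)$ time: compute $\mathbf{Vor}(T,S)$ via Proposition~\ref{prop:compute-voronoi}, the distances $d(x,c)$ and $d(x,r(x))$ for every $x$, the component label $k(x)$, and the value $h(x)$ for every $x$ whose site $r(x)$ is valid. Bucket-sorting these $x$'s by $h(x)$ in $O(n)$ time, I would then sweep $t$ from $0$ upward, maintaining the array $\mathrm{load}[s] = f_s(t)$ and, for each component $k'$, the value $g_{k'}(t) = \max_{s \in C_{k'}} \mathrm{load}[s]$. At time $t$, every $x$ in the $h(x) = t$ bucket contributes an increment of $\pi(x)$ to $\mathrm{load}[r(x)]$, which in turn may lift $g_{k(r(x))}$ and possibly refresh the pair $(G_1, G_2)$.

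The one delicate point, and the bottleneck for linearity, is to carry out each update to $(G_1, G_2, k^*)$ in $O(1)$ time. Here I would exploit monotonicity: every increment is positive, so every $g_{k'}$ is non-decreasing and a component can only move \emph{upward} in the ranking. A short case analysis, based on whether the updated component equals $k^*$, is the current runner-up, or is neither, and on whether its new value overtakes $G_1$ or $G_2$, resolves each update with a constant number of comparisons. The key observation that avoids any global rescan is that whenever an increment promotes a new component above $G_1$, the previously top component automatically becomes the new runner-up, since before the update it dominated every other component. Summed over the $O(n)$ increments the sweep runs in $O(n)$ time; the output phase is also linear, giving the claimed total of $O(n)$.
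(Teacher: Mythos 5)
Your proposal is correct and follows essentially the same route as the paper: both rewrite $\ell_\pi(s,S+v)$ for a site $s$ whose territory misses $c$ and lies on the far side of $c$ from $v$ as a nondecreasing step function of $d(c,v)$ keyed on $d(x,s)-d(x,c)$, bucket-sort by this key, sweep the threshold upward while accumulating costs, and maintain the best value together with the best value over components other than the current leader so that $\gamma(v)$ is read off by a table lookup at $t=d(c,v)$. The only cosmetic difference is that the paper batches the top-two update once per sweep step (comparing against the sites $S_j$ updated at that step plus $s_{j-1},s_{j-1}'$) while you update per increment using the monotonicity of the $g_{k'}$'s; both give $O(1)$ amortized work per increment and $O(n)$ overall.
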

\begin{proof}Let $C_1,C_2,\ldots,C_q$ be the connected components of $T \setminus \{c\}$.
For convenience, given any node $v$ with $v \neq c$, we denote by $C(v)$ the unique component $C_i$ that contains $v$, for some $i$ with $1 \leq i \leq q$.
We denote by $S'$ the sublist obtained from $S$ by erasing the unique site $s_c$ such that $c \in \mathbb{T}(s_c,S)$.
For every $j$ with $0 \leq j \leq n$, let us define, for every node $s \in S'$, $$\ell_j(s) = \ell_\pi(s,S) - \sum \{ \pi(x) \mid x \in \mathbb{T}(s,S) \ \text{and} \ d(x,s) > d(x,c) + j \}.$$
Intuitively, $\ell_j(s)$ would be the load of $s$ upon insertion of a new node $v$, with $v \notin C(s)$ and $d(v,c) = j$.
In particular, the following holds for every node $v$, with $v \neq c$ and $d(v,c) = j$: $\gamma(v) = \max\{ \ell_j(s) \mid s \in S' \setminus C(v) \}$.
Therefore, in order to compute $\gamma(v)$, for every node $v$ with $v \neq c$, it suffices to compute the following information for every $j$ with $0 \leq j \leq n$:
\begin{itemize}
    \item a node $s_j$ maximizing $\ell_j(s_j)$ within $S'$;
    \item a node $s_j'$ maximizing $\ell_j(s_j')$ within $S' \setminus C(s_j)$.
\end{itemize}
For that, we first compute $\mathbf{Vor}(T,S)$, and the distances $d(x,S)$ and $d(x,c)$ for every node $x$.
By Proposition~\ref{prop:compute-voronoi}, this can be done in ${\cal O}(n)$ time.
We order the nodes $x$, with $x \neq c$, by nondecreasing values $d(x,S) - d(x,c)$, which can also be done in ${\cal O}(n)$ time by using counting sort.
For every node $s \in S'$, let $\lambda(s) = \ell_{\pi}(s,S)$. Intuitively, at the end of each step $j$ of our main procedure (presented next), we must have $\lambda(s) = \ell_j(s)$.
We now proceed as follows for every $j$ with $0 \leq j \leq n$:
\begin{itemize}
    \item {\it Case $j = 0$ (first step).} Let $X_0 = \{ x \in V \setminus \{c\} \mid d(x,S) > d(x,c) \}$. We consider each node $x \in X_0$ sequentially. Let $s \in S$ be such that $x \in \mathbb{T}(s,S)$. If $c \notin \mathbb{T}(s,S)$, then we set $\lambda(s) = \lambda(s) - \pi(x)$.
    Finally, we compute a vertex $s_0$ ($s_0'$, resp.) that maximizes $\lambda(s_0)$ ($\lambda(s_0')$, resp.) within $S'$ (within $S' \setminus C(s_0)$, resp.).
    \item {\it Case $j > 0$.} Let $X_j = \{ x \in V \setminus \{c\} \mid d(x,S) - d(x,c) = j\}$. We scan each node in $X_j$, computing along the way a subset $S_j$ of relevant nodes in $S'$. Initially, we set $S_j = \emptyset$.  We consider each node $x \in X_j$ sequentially. Let $s \in S$ be such that $x \in \mathbb{T}(s,S)$. If $c \notin \mathbb{T}(s,S)$, then we add $s$ in $S_j$, then we set $\lambda(s) = \lambda(s) + \pi(x)$. Finally, we compute a vertex $s_{j}$ ($s_j'$, resp.) that maximizes $\lambda(s_{j})$ ($\lambda(s_j')$, resp.) within $S_j \cup \{s_{j-1}\}$ (within $\left(S_j \cup \{s_{j-1},s_{j-1}'\}\right) \setminus C(s_{j})$, resp.).
\end{itemize}
Step $0$ can be done in ${\cal O}(n)$ time.
Furthermore, since we ordered all nodes $x$ with $x \neq c$ by nondecreasing values $d(x,S)-d(x,c)$, all subsets $X_1,X_2,\ldots,X_n$ can be computed in ${\cal O}(n)$ time.
Being given $X_j$, for some $j$ with $1 \leq j \leq n$, Step $j$ can be done in ${\cal O}(|X_j|+|S_j|) = {\cal O}(|X_j|)$ time, provided we can compute in ${\cal O}(1)$ time, for every $x \in X_j$ the unique $s \in S$ such that $x \in \mathbb{T}(s,S)$.
That is indeed the case because we precomputed $\mathbf{Vor}(T,S)$.
Overall, the running time is in ${\cal O}(\sum_j |X_j|) = {\cal O}(|X_0|) = {\cal O}(n)$.

\smallskip
In order to prove correctness of the algorithm, we prove (as claimed above) that at the end of each step $j$, we have $\lambda(s) = \ell_j(s)$, for every $s \in S'$. 
We prove it by induction on $j$.
If $j=0$, then at the end of step $0$ we have:
\begin{align*}
    \lambda(s) &= \ell_\pi(s) - \sum\{\pi(x) \mid x \in X_0 \cap \mathbb{T}(s,S)\} \\
    &= \ell_\pi(s) - \sum\{\pi(x) \mid x \in \mathbb{T}(s,S), \ d(x,S) > d(x,c)\} \\
    &= \ell_\pi(s) - \sum\{\pi(x) \mid x \in \mathbb{T}(s,S), \ d(x,s) > d(x,c)\} \\
    &= \ell_0(s)
\end{align*}
From now on, we assume that $j > 0$. By the induction hypothesis, we have $\lambda(s) = \ell_{j-1}(s)$ before we start step $j$.
Therefore, at the end of step $j$ we have:
\begin{align*}
    \lambda(s) &= \ell_{j-1}(s) + \sum\{\pi(x) \mid x \in X_j \cap \mathbb{T}(s,S)\} \\
    &= \ell_{j-1}(s) + \sum\{\pi(x) \mid x \in \mathbb{T}(s,S), \ d(x,S) = d(x,c) + j\} \\
    &= \ell_\pi(s) - \sum\{\pi(x) \mid x \in \mathbb{T}(s,S), \ d(x,s) > d(x,c) + j-1\} + \sum\{\pi(x) \mid x \in \mathbb{T}(s,S), \ d(x,S) = d(x,c) + j\} \\
    &= \ell_\pi(s) - \sum\{\pi(x) \mid x \in \mathbb{T}(s,S), \ d(x,s) > d(x,c) + j\} \\
    &= \ell_j(s)
\end{align*}
Finally, we need to prove the maximality of the selected nodes $s_{j},s_{j}'$ at each step $j$.
We also prove it by induction on $j$.
This is straightforward if $j=0$ because we consider all nodes of $S'$ (of $S' \setminus C(s_0)$, resp.) in order to compute $s_0$ ($s_0'$, resp.).
From now on, we assume that $j > 0$.
Suppose by contradiction the existence of some node $s \in S'$, with $s \neq s_{j}$ and $\lambda_j(s) > \lambda_j(s_{j})$.
In particular, $s \notin S_j$, which implies $\lambda_j(s) = \lambda_{j-1}(s)$.
But then, by induction on $j$,
\begin{align*}
    \lambda_j(s) = \lambda_{j-1}(s) \leq \lambda_{j-1}(s_{j-1}) \leq \lambda_j(s_{j-1}) \leq \lambda_j(s_j).
\end{align*}
A contradiction.
We can prove similarly as above that $s_j'$ maximizes $\ell_j(s_j')$ within $S' \setminus C(s_j)$.
\end{proof}

We are now ready to prove the main result in this subsection:

\begin{theorem}\label{thm:tree}
We can solve {\sc Balanced Vertex} in ${\cal O}(n\log{n})$ time for trees.
\end{theorem}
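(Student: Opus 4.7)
The plan is to combine centroid decomposition of $T$ with the three auxiliary lemmas \ref{lem:tree-1}, \ref{lem:tree-2}, \ref{lem:tree-3} developed above. First I would precompute $\mathbf{Vor}(T,S)$, the values $\ell_\pi(s,S)$ for every $s\in S$, and the auxiliary structures required by Lemma~\ref{lem:tree-1} (a rooting, subtree-mass sums, and constant-time level-ancestor queries). By Lemma~\ref{lem:struct-territory}, for each candidate $v\notin S$ the two quantities I need are
\[
\ell_\pi(v,S+v) \;=\; \pi(v)+\sum_{j=1}^{p}\pi\bigl(W(v,s_j)\cap\mathbb{T}(s_j,S)\bigr),
\qquad
\ell_\pi(s_j,S+v)\;=\;\ell_\pi(s_j,S)-\pi\bigl(W(v,s_j)\cap\mathbb{T}(s_j,S)\bigr),
\]
so in both cases everything reduces to measuring, for each territory, how much mass is ``stolen'' by $v$.

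Next I would construct a centroid decomposition of $T$. The key idea is that for every pair of nodes $(u,v)$ there is a unique centroid $c$ on the path joining them in the decomposition, and the work involving that pair can be charged to the recursive call whose subtree $T'$ has $c$ as centroid. At such a call I would perform three steps. (1) Identify the site $s_c$ with $c\in\mathbb{T}(s_c,S)$ and invoke Lemma~\ref{lem:tree-1} with $s=s_c$ to obtain, for every $v\in T'$, the mass $\pi(W(v,s_c))$ taken from $s_c$'s territory via the cut through $c$; this is used both to augment the running estimate of $\ell_\pi(v,S+v)$ and to update $\ell_\pi(s_c,S+v)$. (2) Invoke Lemma~\ref{lem:tree-2} with cut-vertex $c$ to obtain $\beta(v)$, the contribution to $\ell_\pi(v,S+v)$ from nodes of $T'$ whose path to $v$ crosses $c$; add $\beta(v)$ to a running sum $\Lambda(v)$ for $\ell_\pi(v,S+v)$. (3) Invoke Lemma~\ref{lem:tree-3} with cut-vertex $c$ to obtain $\gamma(v)$, the maximum of $\ell_\pi(s,S+v)$ over sites $s$ with $c\notin\mathbb{T}(s,S)$ and $c\in I(v,s)$; update a running maximum $M(v)$ with $\gamma(v)$. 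After recursing on every component of $T'\setminus\{c\}$, the values $\Lambda(v)$ and $M(v)$ capture, respectively, the contributions from all pairs in which $v$ participates, and a final linear scan over $V\setminus S$ outputs a minimizer of $\max\{\Lambda(v),M(v)\}$.

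The running time follows from the standard centroid-decomposition accounting: each of the three lemmas runs in $O(|T'|)$ per recursive call, the subtree sizes at a fixed level of the decomposition sum to $O(n)$, and there are $O(\log n)$ levels, yielding $O(n\log n)$ in total. The main obstacle I anticipate is the bookkeeping around the distinguished site $s_c$: it is excluded from the max computed by Lemma~\ref{lem:tree-3} and from the sum computed by Lemma~\ref{lem:tree-2} (because $c\in\mathbb{T}(s_c,S)$), so I need to verify that its contribution is fed in exactly once via Lemma~\ref{lem:tree-1} at each level where $v$'s centroid path meets a centroid whose enclosing territory is that of $s_c$. A secondary care point is that the distances $d(\cdot,S)$ and the territories $\mathbb{T}(\cdot,S)$ passed to the three lemmas must always refer to the global tree $T$, so that the aggregation over the $O(\log n)$ levels is consistent; this is handled by feeding each recursive call the globally precomputed quantities rather than recomputing them inside the subtree.
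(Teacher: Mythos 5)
Your high-level strategy --- centroid decomposition combined with Lemmas~\ref{lem:tree-1}, \ref{lem:tree-2} and~\ref{lem:tree-3}, and per-vertex accumulators for the new site's load and the maximum former-site load --- is the same as the paper's. However, the ``bookkeeping obstacle'' you flag at the end is a real gap that your sketch misdiagnoses and does not close. Two of your claims are incorrect as stated. First, Lemma~\ref{lem:tree-1} returns $\pi(W(v,s_c))$, whereas what is needed is $\pi(W(v,s_c)\cap\mathbb{T}(s_c,S))$; the lemma must be applied after temporarily setting the cost to zero outside $\mathbb{T}(s_c,S)$. Second, Lemma~\ref{lem:tree-2} does \emph{not} automatically exclude $\mathbb{T}(s_c,S)$ just because $c\in\mathbb{T}(s_c,S)$: the filter in $\beta(v)$ is $c\in I(v,x)$, not $c\notin\mathbb{T}(s_x,S)$, and the connected territory $\mathbb{T}(s_c,S)$ can spill into several components of $T\setminus\{c\}$, so its vertices on the far side of $c$ from $v$ would be counted by both $\alpha(v)$ and $\beta(v)$.

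The device the paper uses to guarantee each territory is charged exactly once is precisely what your proposal is missing: after computing $\alpha(v)$, replace $\pi$ by a cost function $\pi'$ that vanishes on $\mathbb{T}(s_c,S)$ and equals $\pi$ elsewhere; compute $\beta(v)$ with $\pi'$; and pass to each recursive call on a component $C_j$ the restricted cost $\pi_j=\pi'|_{C_j}$ together with the shortened site list $S_j=(S\setminus\{s_c\})\cap C_j$, with accumulators augmented by $\alpha(v)+\beta(v)$, by $\ell_\pi(s_c,S)-\alpha(v)$, and by $\gamma(v)$. Site removal ensures $s_c$ never again becomes the distinguished site; cost-zeroing ensures that vertices which would change territory when $s_c$ is dropped (all of which lie in $\mathbb{T}(s_c,S)$) contribute nothing, so the Voronoi diagram computed locally in $C_j$ with respect to $S_j$ agrees with the intended global contribution. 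This is actually the opposite of your stated plan of always feeding the recursion global $d(\cdot,S)$ and global $\mathbb{T}(\cdot,S)$: without the cost-zeroing and site removal, global quantities double-count $\mathbb{T}(s_c,S)$ across $\alpha$, $\beta$, and deeper levels of the decomposition, so $\Lambda(v)$ and the former-site loads come out too large.
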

\begin{proof}
Let $T=(V,E)$ be an $n$-node tree, let $\pi$ be any nonnegative cost function and let $S$ be a proper vertex list.
We consider a more general problem, where every node $v \notin S$ is further assigned nonnegative values $\lambda(v),\Lambda(v)$ (initially, both values equal $0$).
Intuitively, we account for sites and territories in a supertree of $T$ but {\em not} in the tree $T$ that we are currently considering.
Let us define: $$R(v) = \max\{\ell_\pi(v,S+v)+\lambda(v),\Lambda(v)\} \cup \{\ell_\pi(s,S+v) \mid s \in S\}.$$
Our algorithm in what follows compute {\em all} values  $R(v)$, for every $v \notin S$.
We may assume $S \neq \emptyset$ (otherwise, $R(v) = \max\{\lambda(v)+\pi(T),\Lambda(v)\}$ for every node $v$).
If $V\setminus S$ is reduced to a unique node $v$, then we first compute $\mathbf{Vor}(T,S+v)$, which by Proposition~\ref{prop:compute-voronoi} can be done in ${\cal O}(n)$ time.
Doing so, we can compute in ${\cal O}(n)$ time the values $\ell_\pi(v,S+v)$ and $\ell_\pi(s,S+v)$, for every $s \in S$, and so, $R(v)$.
Thus, from now on we assume that $|V \setminus S| > 1$.
Our algorithm goes as follows:
\begin{enumerate}
    \item We compute $\mathbf{Vor}(T,S)$. By Proposition~\ref{prop:compute-voronoi}, this can be done in ${\cal O}(n)$ time.
    \item We compute a {\em centroid}, {\it i.e.} a node $c$ such that every connected component of $T \setminus \{c\}$ contains at most $n/2$ nodes. It can be done in ${\cal O}(n)$ time~\cite{Gol71}. 
    If furthermore $c \notin S$, then we compute $R(c)$. It can be done in ${\cal O}(n)$ time if we are given $\mathbf{Vor}(T,S+c)$. Therefore, by Proposition~\ref{prop:compute-voronoi}, we can compute $R(c)$ in ${\cal O}(n)$ time. 
    \item Let $s_c \in S$ be such that $c \in \mathbb{T}(s_c,S)$. We compute $\alpha(v) = \sum\{ \pi(x) \mid x \in W(v,s_c) \cap \mathbb{T}(s_c,S)\}$, for every $v \notin S$, which can be done in ${\cal O}(n)$ time as follows: we set the cost of every node in $V \setminus \mathbb{T}(s_c,S)$ to $0$, then we apply Lemma~\ref{lem:tree-1}. Afterwards, we replace $\pi$ with some new cost function $\pi'$ so that: $$\pi'(x) = \begin{cases} \pi(x) \ \text{if} \ x \notin \mathbb{T}(s_c,S) \\ 0 \ \text{otherwise.}\end{cases}$$
    \item For every $v$ with $v \notin S \cup \{c\}$, we compute the cost of all vertices $x \in \mathbb{T}(v,S+v)$ such that $v,x$ are in separate connected components of $T \setminus \{c\}$. Equivalently, we aim at computing $\beta(v) = \sum\{ \pi'(x) \mid c \in I(v,x) \ \text{and} \ d(x,v) < d(x,S) \}$, which can be done in ${\cal O}(n)$ time if we apply Lemma~\ref{lem:tree-2}.
    For every $v$ with $v \notin S \cup \{c\}$, we further compute the maximum cost of a subset $\mathbb{T}(s,S+v)$, amongst all sites $s$ with $s \neq s_c$ and $v,\mathbb{T}(s,S)$ are in separate connected components of $T \setminus \{c\}$. Equivalently, we aim at computing $\gamma(v) = \max\{ \ell_\pi(s,S+v) \mid c \notin \mathbb{T}(s,S), \ c \in I(v,s)  \}$, which can be done in ${\cal O}(n)$ time if we apply Lemma~\ref{lem:tree-3}.
    \item Let $C_1,C_2,\ldots,C_q$ be the connected components of $T \setminus \{c\}$. We consider each component $C_j$ sequentially, for every $j$ with $1 \leq j \leq q$.
    Let $S_j = (S \setminus \{s_c\}) \cap C_j$. Let $\pi_j$ be the restriction of $\pi'$ to $C_j$.
    For every $v \in C_j$, we define:
    \begin{align*}
        \lambda_j(v) &= \lambda(v) + \alpha(v) + \beta(v)\\
        \Lambda_j(v) &= \max\{\Lambda(v), \ell_\pi(s_c,S) - \alpha(v), \gamma(v)\} \\
        R_j(v) &= \max\{\lambda_j(v)+\ell_{\pi_j}(v,S_j+v),\Lambda_j(v)\} \cup \{ \ell_{\pi_j}(s,S_j+v) \mid s \in S_j \}
    \end{align*}
    We compute $R_j(v)$ for every $v \in C_j$. For that, it suffices to apply our algorithm recursively on $C_j,\pi_j,S_j,\lambda_j,\Lambda_j$.
    Finally, we set $R(v) = R_j(v)$ for every $v \in C_j$.
\end{enumerate}
The recursion depth is in ${\cal O}(\log{n})$ because at each stage the maximum number of nodes in a subtree considered is halved.
Since all subtrees considered at any recursion stage are node disjoint, the running time of the stage is in ${\cal O}(n)$.
Hence, the total running time is in ${\cal O}(n\log{n})$.

\smallskip
In order to prove correctness of the algorithm, it suffices to prove that for every $j$ with $1 \leq j \leq q$, for every $v \in C_j$, $R_j(v) = R(v)$.
For that, by Lemma~\ref{lem:struct-territory}:
\begin{align*}
    \mathbb{T}(v,S+v) &= \bigcup\{ x \in V \mid d(x,v) < d(x,S) \} \\
    &= \bigcup\{ W(v,s) \cap \mathbb{T}(s,S) \mid s \in S \}
\end{align*}
In particular,
\begin{align*}
    \lambda(v) + \ell_\pi(v,S+v) &= \lambda(v) + \sum_{s \in S}\pi\left(W(v,s) \cap \mathbb{T}(s,S)\right) \\
    &= \lambda(v) + \pi\left(W(v,s_c) \cap \mathbb{T}(s_c,S)\right) + \sum_{s \in S \setminus \{s_c\}}\pi\left(W(v,s) \cap \mathbb{T}(s,S)\right) \\
    &= \lambda(v) + \alpha(v) + \sum_{s \in S \setminus \{s_c\}}\pi'\left(W(v,s) \cap \mathbb{T}(s,S)\right) \\
    &= \lambda(v) + \alpha(v) + \sum_{s \in S \setminus (S_j \cup \{s_c\})}\pi'\left(W(v,s) \cap \mathbb{T}(s,S)\right) + \sum_{s \in S_j}\pi'\left(W(v,s) \cap \mathbb{T}(s,S)\right) \\
    &= \lambda(v) + \alpha(v) + \beta(v) + \sum_{s \in S_j}\pi_j\left(W(v,s) \cap \mathbb{T}(s,S_j)\right) \\
    &= \lambda_j(v) + \ell_{\pi_j}(v,S_j+v)
\end{align*}
For every $s \in S$, we have that $\mathbb{T}(s,S+v) = \mathbb{T}(s,S) \setminus W(v,s)$.
Therefore,
\begin{align*}
    \max\{\Lambda(v)\}\cup\{\ell_\pi(s,S+v) \mid s \in S\} &= \max\{\Lambda(v),\ell_\pi(s_c,S+v)\} \cup \{\ell_\pi(s,S+v) \mid s \in S\setminus\{s_c\}\} \\
    &= \max\{\Lambda(v),\ell_\pi(s_c,S) - \alpha(v)\} \cup \{\ell_\pi(s,S+v) \mid s \in S\setminus (S_j \cup \{s_c\})\} \\ &\cup \{\ell_\pi(s,S+v) \mid s \in S_j \} \\
    &= \max\{\Lambda(v),\ell_\pi(s_c,S) - \alpha(v),\gamma(v)\} \cup \{\ell_{\pi}(s,S+v) \mid s \in S_j \} \\
    &= \max\{\Lambda_j(v)\} \cup  \{\ell_{\pi_j}(s,S_j+v) \mid s \in S_j \}
\end{align*}
By combining both equalities, we obtain as desired $R(v) = R_j(v)$.
\end{proof}

\begin{remark}\label{rk:tree}
In the recursive algorithm presented for trees, a site $s_c$ is discarded from $S$ at every recursion stage. Therefore, the running time of the algorithm is in ${\cal O}(n|S|)$, that is linear if $|S|$ is a constant. This can be improved to ${\cal O}(n\log{|S|})$ as follows: at every recursion stage, we compute a node $c$ so that there are at most $|S|/2$ sites in every connected component of $T \setminus \{c\}$. It can be done in ${\cal O}(n)$ time because it is a special case of {\em weighted centroid} computation~\cite{Gol71}. In doing so, the recursion depth goes down to ${\cal O}(\log{|S|})$.
\end{remark}

\subsection{Bounded-treewidth graphs}\label{sec:tw}

We generalize the results of the prior subsections to some increasing hierarchy of graph classes, namely, graphs of bounded {\em treewidth}.
Recall that a tree decomposition of a graph $G$ is a pair $(T,{\cal X})$, where $T$ is a tree and ${\cal X} : V(T) \mapsto 2^{V(G)}$ is a mapping that satisfies the following properties:
\begin{itemize}
    \item For every vertex $v \in V(G)$, there exists a node $t \in V(T)$ such that $v \in {\cal X}_t$;
    \item For every edge $uv \in E(G)$, there exists a node $t \in V(T)$ such that $u,v \in {\cal X}_t$;
    \item For every vertex $v \in V(G)$, the node subset $\{ t \in V(T) \mid v \in {\cal X}_t \}$ induces a connected subtree of $T$.
\end{itemize}
The subsets ${\cal X}_t$, for every node $t \in V(T)$, are called the bags of the tree decomposition.
We define the width of a tree decomposition as the largest size of its bags minus one.
The treewidth of $G$ is the minimum width of its tree decompositions.
In particular, graphs of unit treewidth are exactly the forests, and every cycle has treewidth two.

\medskip
Cabello and Knauer were the first to propose a reduction from some polynomial-time solvable distance problems on bounded-treewidth graphs to {\em orthogonal range queries}~\cite{CaK09}.
We propose a novel application of their framework to the {\sc Balanced Vertex} problem with constant number of sites in $S$.

We only consider counting queries in what follows.
In particular, a range query will always refer to a counting orthogonal range query.
Let ${\cal P}$ denote a static $n$-set of $k$-dimensional points in $\mathbb{R}^k$.
Let also $f : {\cal P} \mapsto \mathbb{R}$ be fixed.
A box is the Cartesian product of $k$ intervals.
Finally, a range query asks to compute $$\sum\{f(\overrightarrow{p}) \mid \overrightarrow{p} \in {\cal P} \cap {\cal B}\}$$ for some box ${\cal B}$.
Bentley presented the $k$-range tree data structure in order to answer to range queries~\cite{Ben79}.
The performance analysis of the latter has been recently improved:

\begin{lemma}[\cite{BHM20}]\label{lem:rq}
Let $B(n,k) = \binom{k+\log{n}}{k}$. 
Being given a fixed $n$-set of $k$-dimensional points, and any weight function $f$, a $k$-range tree can be constructed in ${\cal O}(k^2B(n,k)n)$ time.
Moreover, a range query can be answered in ${\cal O}(2^kB(n,k))$ time.
\end{lemma}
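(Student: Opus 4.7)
The plan is to use the standard recursive-on-dimension construction of range trees due to Bentley, with a sharper inductive analysis yielding the binomial bound $B(n,k) = \binom{k+\log n}{k}$ in place of the cruder $\log^k n$ that the classical analysis produces.

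For the construction, I would recurse on the outermost coordinate: build a balanced BST (the \emph{primary tree}) on the points sorted by their first coordinate, and at every internal node $v$ attach a secondary $(k-1)$-range tree built recursively on the points in the subtree rooted at $v$, using only the remaining $k-1$ coordinates. The base case $k=0$ stores the aggregate weight $\sum_{p \in \mathcal{P}} f(p)$. Since every point lies in exactly one subtree per level of the primary tree, the construction time satisfies $T(n,k) = 2\,T(n/2,k) + T(n,k-1) + O(n)$. Induction on the pair $(k,\log n)$ using Pascal's identity $\binom{k+L}{k} = \binom{k+L-1}{k} + \binom{k+L-1}{k-1}$ then yields $T(n,k) = O(k^2 B(n,k)\, n)$, the factor $k^2$ absorbing the per-call bookkeeping.

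For a range query on a box $\mathcal{B} = I_1 \times \cdots \times I_k$, one descends the primary tree to carve $I_1$ into $O(\log n)$ maximal canonical subtrees, and then queries the attached $(k-1)$-range tree at each such subtree recursively on the restricted box $I_2 \times \cdots \times I_k$. Summing $Q(n_v,k-1)$ over the canonical subtrees (of size $2^{\log n - d}$ at depth $d$) and invoking the hockey-stick identity $\sum_{d=0}^{L} \binom{k-1+d}{k-1} = \binom{k+L}{k}$ gives $Q(n,k) = O(2^k B(n,k))$ by induction, the $2^k$ factor being just what is needed so that at each of the $k$ recursion levels the additive $O(\log n)$ cost of walking the primary tree can be absorbed into the binomial.

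The main obstacle will be to set up the recurrences precisely enough that their exact solutions align with Pascal's identity, rather than bounding each recursive branch by the trivial $O(\log n)$ factor which would only recover the classical $O(\log^k n)$ guarantee. Once the right recurrences are isolated, both claims then follow by a straightforward induction on $(k,\log n)$.
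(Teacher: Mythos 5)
The paper states Lemma~\ref{lem:rq} without proof, citing it directly from~\cite{BHM20}, so there is no in-paper argument to compare against; your reconstruction is correct and is indeed the argument of that reference. You set up the standard Bentley recursion and then resolve the divide-and-conquer recurrences exactly against Pascal's identity $\binom{k+L}{k}=\binom{k+L-1}{k}+\binom{k+L-1}{k-1}$ (for construction) and the hockey-stick identity (for queries), which is precisely the sharpening that replaces the classical $\log^k n$ bound with $B(n,k)$. The only point worth making explicit is the base of the induction: for $k=1$ the query cost $Q(n,1)=O(\log n)$ already matches $2\,B(n,1)=2(\log n+1)$ with no leftover $O(\log n)$ term to absorb, and $T(n,1)=O(n\log n)\leq c\,B(n,1)\,n$; from $k\geq 2$ onward $\binom{k+L-1}{k-1}\geq L+1$, so the additive $O(\log n)$ of the primary-tree walk is swallowed as you claim.
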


Throughout this subsection, we write all running times as some functions of $B(n,k)$. 
We refer to Remark~\ref{rk:tw} for a more detailed analysis of $B(n,k)$.

\medskip
Roughly, our strategy consists in mimicking our previous approach for trees.
Specifically, centroids in a tree decomposition of width at most $k$ are balanced separators of size at most $k$.
We generalize Lemma~\ref{lem:tree-2}, for cut-vertices in trees, to the following Lemma~\ref{lem:sep} for $k$-separators.
Our generalization makes use of range queries.
Intuitively, the function $D_S$ in the statement of Lemma~\ref{lem:sep} below represents the distances to the sites of $S$ in some supergraph of $G$.

\begin{lemma}\label{lem:sep}
Being given a graph $G=(V,E,\lambda)$ with positive integer edge weights, and two nonnegative cost functions $\pi$ and $D_S$, let $A$ and $B$ satisfy: $A \cup B = V$, $|A\cap B| \leq k$.
We can compute $\delta(a,B) = \sum\{\pi(b) \mid b \in B \setminus A \ \text{and} \ d(b,a) < D_S(b)\}$, for every $a \in A \setminus B$, in ${\cal O}(km+k2^kB(n,k)n)$ time.
\end{lemma}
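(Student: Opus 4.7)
The plan is to reduce the computation to a single batch of $k$-dimensional orthogonal range queries. The key structural observation is that $A\cap B=\{v_1,\ldots,v_k\}$ separates $A\setminus B$ from $B\setminus A$ in $G$ (an implicit assumption consistent with the tree-decomposition setting of the intended application in the next subsection); thus, for every $a\in A\setminus B$ and $b\in B\setminus A$, every shortest $ab$-path meets $A\cap B$, and
\[
d(a,b)=\min_{1\leq i\leq k}\bigl(d(a,v_i)+d(v_i,b)\bigr).
\]

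First I would run $k$ single-source shortest-path computations, one from each $v_i$, in total time ${\cal O}(km)$, so that $d(v_i,u)$ is known for every $u\in V$ and every $i\in\{1,\ldots,k\}$. Next, I would rewrite the condition $d(a,b)<D_S(b)$ defining $\delta(a,B)$ as ``there exists $i\leq k$ such that $d(v_i,b)-D_S(b)<-d(v_i,a)$''. The complementary event $d(a,b)\geq D_S(b)$ then takes the form of a conjunction of $k$ linear inequalities
\[
\forall i\leq k,\quad d(v_i,b)-D_S(b)\geq -d(v_i,a),
\]
which is precisely an orthogonal range condition on the $k$-dimensional point whose $i$-th coordinate is $d(v_i,b)-D_S(b)$.

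Accordingly, I would associate to every $b\in B\setminus A$ the point $p(b):=\bigl(d(v_1,b)-D_S(b),\ldots,d(v_k,b)-D_S(b)\bigr)\in\mathbb{R}^k$ with weight $\pi(b)$, and build a \emph{single} $k$-range tree on this point set using Lemma~\ref{lem:rq}, in ${\cal O}(k^2 B(n,k)n)$ time. For each $a\in A\setminus B$ I would then query the box ${\cal B}_a:=\prod_{i=1}^{k}\bigl[-d(v_i,a),+\infty\bigr)$ to obtain $\bar\delta(a):=\sum\{\pi(b)\mid b\in B\setminus A,\ d(a,b)\geq D_S(b)\}$ in ${\cal O}(2^kB(n,k))$ time per query, and finally return $\delta(a,B)=\pi(B\setminus A)-\bar\delta(a)$. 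Summing up: ${\cal O}(km+k^2 B(n,k)n+n\cdot 2^k B(n,k))={\cal O}(km+k\,2^k B(n,k)n)$, as desired.

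The only delicate point is the reformulation step: recognizing that the existential condition defining $\delta(a,B)$ becomes a universal (conjunctive) one upon complementation, so that a \emph{single} $k$-dimensional orthogonal range query suffices for each $a$. Attacking the original ``$\exists i$'' form directly by inclusion-exclusion over the $2^k-1$ non-empty subsets $T\subseteq[k]$ would require building that many range trees of dimensions $|T|$, which multiplies the preprocessing by a factor $2^k$ and exceeds the target running time; the complement trick bypasses this.
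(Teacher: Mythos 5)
Your proof is correct and takes a genuinely cleaner route than the paper's. Both proofs rely on the same implicit (but unstated) assumption that $A\cap B$ separates $A\setminus B$ from $B\setminus A$, so that $d(a,b)=\min_{1\leq i\leq k}\bigl(d(a,x_i)+d(x_i,b)\bigr)$; you are right to flag this explicitly, since the lemma as written only assumes $A\cup B=V$ and $|A\cap B|\leq k$, and the separator property is what actually enables the distance factorization. Where you diverge: the paper attacks the ``$\exists i$'' form head-on by a first-passage decomposition --- for each $i$ it builds a \emph{separate} $k$-range tree and counts only those $b$ for which $x_i$ is the first separator vertex (in a fixed ordering) achieving $\min_j\bigl(d(a,x_j)+d(x_j,b)\bigr)$, encoding this disjointness via the pairwise coordinates $d(b,x_j)-d(b,x_i)$ and summing the $k$ partial answers. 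You instead complement the condition: $\{b \mid d(a,b)\geq D_S(b)\}$ is a \emph{conjunction} $\forall i:\ d(x_i,b)-D_S(b)\geq -d(a,x_i)$, hence a single axis-aligned box, so one range tree on the points $p(b)=\bigl(d(x_1,b)-D_S(b),\ldots,d(x_k,b)-D_S(b)\bigr)$ plus the subtraction $\delta(a,B)=\pi(B\setminus A)-\bar\delta(a)$ suffices. Your variant saves roughly a factor $k$ in both construction ($k^2$ vs.\ $k^3$ times $B(n,k)n$) and per-vertex query cost ($2^k$ vs.\ $k2^k$ times $B(n,k)$); this is absorbed by the stated ${\cal O}(km+k2^kB(n,k)n)$ bound either way, but it is the more economical argument and avoids the bookkeeping around the ``no earlier $x_j$ on a shortest path'' constraints. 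Your closing remark about inclusion-exclusion over subsets is a fair contrast, though the paper does not actually resort to it --- it uses the first-passage trick, which is linear in $k$ rather than exponential; still, the complement trick you use dominates both.
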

\begin{proof}
For convenience, let $X = A \cap B$. We further assume for simplicity that $|X| = k$.
We first compute a shortest-path tree rooted at $x$, for every $x \in X$. Since all edge weights are positive integer, this can be done in ${\cal O}(m)$ time per vertex in $X$, using Thorup's single-source shortest-path algorithm~\cite{Tho99}. Let $X=(x_1,x_2,\ldots,x_k)$ be totally ordered in some arbitrary way. We proceed as follows for every $i$ with $1 \leq i \leq k$.
\begin{enumerate}
    \item For every $b \in B \setminus X$, we create a $k$-dimensional point $\overrightarrow{p_i(b)} = (p_{i,1}(b),p_{i,2}(b),\ldots,p_{i,k}(b))$, where:
    $$\begin{cases}
    p_{i,i}(b) = D_S(b) \\
    p_{i,j}(b) = d(b,x_j) - d(b,x_i) \ \text{for every} \ j \ \text{with} \ 1 \leq j \leq k, \ j \neq i
    \end{cases}$$
    Let $f(\overrightarrow{p_i(b)}) = \pi(b)$.
    We add all these points and their weights in some $k$-range tree. By Lemma~\ref{lem:rq}, this can be done in ${\cal O}(k^2B(n,k)n)$ time.
    \item For every $a \in A \setminus X$, we aim at computing the cost $\delta_i(a,B)$ of all vertices $b \in B \setminus X$ with the following properties: $D_S(b) > d(b,a)$, $x_i$ is on a shortest $ab$-path; no vertex of $\{x_1,x_2,\ldots,x_{i-1}\}$ is on a shortest $ab$-path. We can formulate these properties as range constraints for each coordinate of $\overrightarrow{p_i(b)}$, as follows: $$\begin{cases}
    p_{i,j}(b) = d(b,x_j) - d(b,x_i) > d(a,x_i) - d(a,x_j), \ \text{for every} \ j \ \text{with} \ 1 \leq j < i \\
    p_{i,i}(b) = D_S(b) > d(a,b) \\
    p_{i,j}(b) = d(b,x_j) - d(b,x_i) \geq d(a,x_i) - d(a,x_j), \ \text{for every} \ j \ \text{with} \ i < j \leq k
    \end{cases}$$
    As a result, for a fixed $a$, we can compute the cost of all corresponding vertices of $B \setminus X$ with a single range query, which by Lemma~\ref{lem:rq} can be answered in ${\cal O}(2^kB(n,k))$ time. 
\end{enumerate}
Finally, we compute $\delta(a,B) = \sum_{i=1}^k \delta_i(a,B)$, for every $a \in A \setminus X$.
\end{proof}

Being given a tree decomposition $(T,{\cal X})$ of width ${\cal O}(k)$, the following result is a combination of Lemma~\ref{lem:sep} with a centroid decomposition of $T$.

\begin{lemma}\label{lem:wall}
Let $S$ be an arbitrary vertex subset in a graph $G=(V,E)$, and let $\pi$ be any nonnegative cost function.
If $G$ has treewidth at most $k$, then we can compute $\delta(v) = \sum\{\pi(u) \mid d(u,v) < d(u,S)\}$, for every $v \in V \setminus S$, in ${\cal O}(k2^{{\cal O}(k)}B(n,k)n\log{n})$ time.
\end{lemma}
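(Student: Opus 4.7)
My plan is to mimic the recursive centroid-based scheme of Theorem~\ref{thm:tree}, substituting Lemma~\ref{lem:sep} for the tree-specific aggregation lemmas of Section~\ref{sec:tree}. First, in FPT time by Bodlaender's theorem, I compute a tree decomposition $(T,{\cal X})$ of $G$ of width at most $k$; by standard transformations I may assume $T$ is a binary tree (every node has degree at most three) with $O(n)$ bags and without increasing the width. I also precompute $D_S(u):=d_G(u,S)$ for every $u\in V$ with a single multi-source BFS in $O(n+m)$ time.

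Next I recursively traverse $T$ via its centroid decomposition. At a recursive call on a subtree $T'\subseteq T$ with ``active'' vertex set $V':=\bigcup_{t\in T'}{\cal X}_t$, I pick a centroid $t^*$ of $T'$ and set $X:={\cal X}_{t^*}$, of size at most $k+1$. Because $T$ is binary, $T'\setminus\{t^*\}$ has at most three subtrees $T_1,\ldots,T_q$ with vertex sets $V_1,\ldots,V_q$, and by the connectivity property of tree decompositions $X$ separates $V_i\setminus X$ from $V_j\setminus X$ in $G$ whenever $i\neq j$. I compute shortest-path trees rooted at each $x\in X$ in $G$ in $O(km)$ time~\cite{Tho99}. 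Then, for every $v\in V'$ whose shallowest bag in the centroid tree of $T$ is exactly $t^*$ (the \emph{home} of $v$), I scan $V'$ once using the SSSP rooted at $v$ to accumulate into $\delta(v)$ the contributions from every $u\in V'$. For every other $v\in V'\setminus X$, I iterate over $u\in X$ and add $\pi(u)\cdot[d(u,v)<D_S(u)]$ to $\delta(v)$. Finally, for every subtree index $i$, I invoke Lemma~\ref{lem:sep} with $A=V_i\cup X$, $B=V\setminus(V_i\setminus X)$, and with $\pi$ zeroed outside $\bigcup_{j\neq i}(V_j\setminus X)$, which adds to $\delta(v)$, for each $v\in V_i\setminus X$, the cross-subtree contributions from $\bigcup_{j\neq i}(V_j\setminus X)$. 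Then I recurse on each $T_j$.

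Correctness follows from a standard centroid-tree charging argument. Letting $t(w)$ denote the shallowest node of the centroid tree of $T$ whose bag contains $w$, I charge each pair $(v,u)$ to the level of the lowest common ancestor of $t(v)$ and $t(u)$. A case analysis shows that at that level, either one of $v,u$ is homed at the centroid $t^*$, in which case the pair is processed by the ``home'' step or by the ``iterate over $X$'' step, or else neither is and $v,u$ lie in distinct subtrees of $T'\setminus\{t^*\}$, in which case the pair is processed by Lemma~\ref{lem:sep}; and no other level triggers a second processing, because once recursion enters any strict descendant of $t^*$ in the centroid tree, at most one of $v,u$ remains in the active vertex set. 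For the running time, the binarity of $T$ ensures that Lemma~\ref{lem:sep} is invoked $O(1)$ times per level, for a per-level cost of $O(km+k\,2^k B(n,k)\,n)$. The ``home'' and ``iterate over $X$'' steps contribute $O(k|V'|)$ per level, and $\sum_\ell |V'^{(\ell)}|=O(n\log n)$ because the centroid tree has depth $O(\log n)$. Combined with $m=O(kn)$ for treewidth-$k$ graphs, this gives the claimed $O(k\,2^{O(k)}B(n,k)\,n\log n)$ bound.

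The main obstacle, beyond the bookkeeping of the charging scheme, is the need to binarize $T$ before the recursion: without it, the $k$-range tree construction inside each application of Lemma~\ref{lem:sep} would be repeated once per subtree of $T'\setminus\{t^*\}$, and the resulting factor equal to the degree of the centroid in $T'$ could far exceed $\log n$, breaking the overall bound.
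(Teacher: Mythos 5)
Your high-level plan — recurse via a balanced separator from the tree decomposition and offload the cross-separator aggregation to Lemma~\ref{lem:sep} with orthogonal range queries — matches the intent of the paper. However, there is a decisive gap in how you make the recursion cheap enough.

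You never shrink the input to Lemma~\ref{lem:sep} as the recursion deepens. In your description, every invocation of Lemma~\ref{lem:sep} is with $A=V_i\cup X$ and $B=V\setminus(V_i\setminus X)$, both defined in terms of the global vertex set $V$, and the shortest-path trees are computed in the original $G$. Zeroing $\pi$ outside a subset changes which costs are accumulated, but it does not reduce the number of points inserted into the $k$-range tree nor the $O(km)$ cost of the per-separator SSSPs. Each invocation therefore costs $\Theta(km + k2^kB(n,k)n)$ with the global $n$ and $m$, and a centroid decomposition of $T$ has $\Theta(n)$ recursive calls in total (one per bag), so the overall time is $\Theta(k2^{O(k)}B(n,k)n^2)$ — quadratic, not the claimed $O(k2^{O(k)}B(n,k)n\log n)$. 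Your per-level accounting ``Lemma~\ref{lem:sep} is invoked $O(1)$ times per level, for a per-level cost of $O(km+k2^kB(n,k)n)$'' conflates ``per call'' with ``per level'': a single level of the centroid tree contains many calls, and each one pays the full global cost under your scheme.

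The paper avoids this by recursing on \emph{smaller graphs}, not on subtrees of $T$ with the ambient $G$ fixed. After the first call to Lemma~\ref{lem:sep} with a bag ${\cal X}_t$ that is a balanced $G$-separator, it constructs two weighted graphs $G_A=G[A]$ and $G_B=G[B]$ and adds, between every pair $x,x'\in{\cal X}_t$, a shortcut edge of weight $d_G(x,x')$; this is exactly what guarantees that distances inside $G_A$ (resp.\ $G_B$) coincide with $G$-distances. Since $\max\{|A|,|B|\}\le 2n/3$, each recursive subproblem is a constant factor smaller, so the work on a fixed recursion level sums to $O(k^22^kB(n,k)n)$ and the total is $O(k2^{O(k)}B(n,k)n\log n)$. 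This shrinking (and the accompanying shortcut edges, which you omit and which are what make restricting to $G[V']$ legitimate) is the missing idea. Once you recurse on $G_A$ and $G_B$ rather than on subtrees of $T$, the binarization of $T$, the ``home'' bookkeeping, and the charging of pairs to lowest common ancestors all become unnecessary: the paper simply groups the components of $G\setminus{\cal X}_t$ into two balanced sides and applies Lemma~\ref{lem:sep} twice per call.
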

\begin{proof}
We consider a more general case where we are also given a positive integral edge weight function $\lambda$ (initially, we have $\lambda(e) = 1$ for every $e \in E$).
We may assume that $n > k+1$ (otherwise, we can compute all values $\delta(v)$ in ${\cal O}(k^2)$ time, if we use a brute-force algorithm).
We may further assume that all distances $D_S(v) = d(v,S)$, for every vertex $v$, are known throughout the entire procedure, which can always be ensured by running a BFS on $S$ before the algorithm starts.
Let $(T,{\cal X})$ be a tree decomposition of $G$ of width ${\cal O}(k)$. Such a tree decomposition can be computed in ${\cal O}(2^{{\cal O}(k)}n)$ time~\cite{Kor22}.
We often use that $m = {\cal O}(kn)$.
\begin{enumerate}
    \item We compute some bag ${\cal X}_t$, with $t \in V(T)$, such that every connected component of $G \setminus {\cal X}_t$ contains at most $n/2$ vertices. Such a bag always exists and it can be computed in ${\cal O}(kn)$ time if $(T,{\cal X})$ is given. For every $x \in {\cal X}_t \setminus S$, we can compute $\delta(x)$ in ${\cal O}(m) = {\cal O}(kn)$ time if we run a BFS on $x$.
    \item We group  the connected components of $G \setminus {\cal X}_t$ in two subsets $A$ and $B$ so that: $A \cup B = V$, $A \cap B = {\cal X}_t$ and $\max\{|A|,|B|\} \leq 2n/3$.
    We apply Lemma~\ref{lem:sep} twice in order to compute $\delta(a,B)$ for every $a \in A \setminus {\cal X}_t$ ($\delta(b,A)$, for every $b \in B \setminus {\cal X}_t$, resp.). This can be done in ${\cal O}(km+k2^kB(n,k)n) = {\cal O}(k2^kB(n,k)n)$ time.
    \item Let $G_A$ ($G_B$, resp.) be obtained from $G[A]$ ($G[B]$, resp.) by adding edges $xx'$ of weights $d(x,x')$, for every two $x,x' \in {\cal X}_t$.
    The computation of all distances $d(x,x')$ can be done in ${\cal O}(km) = {\cal O}(k^2n)$ time, if we run a BFS on every vertex of ${\cal X}_t$.
    We apply our algorithm recursively to $G_A,G_B$ in order to compute the values $\delta_A(a) = \sum\{\pi(a') \mid a' \in A, \ d(a,a') < D_S(a')\}$, for every $a \in A$, and the values $\delta_B(b) = \sum\{\pi(b') \mid b' \in B, \ d(b,b') < D_S(b')\}$, for every $b \in B$.
    \item For every $a \in A \setminus {\cal X}_t$, we set $\delta(a) = \delta(a,B) + \delta_A(a)$. In the same way, for every $b \in B \setminus {\cal X}_t$, we set $\delta(b) = \delta(b,A) + \delta_B(b)$.
\end{enumerate}
The recursion depth is in ${\cal O}(\log{n})$. Each recursive stage can be done in ${\cal O}(k2^kB(n,k)n)$ time (the running time is dominated by Step 2).
Therefore, the running time of the algorithm is in ${\cal O}(k2^{{\cal O}(k)}B(n,k)n\log{n})$.
\end{proof}

Finally, we give a one-to-many reduction from {\sc Balanced Vertex} with constant number of sites to Lemma~\ref{lem:wall}.

\begin{theorem}\label{thm:tw}
We can solve {\sc Balanced Vertex} in ${\cal O}(k2^{{\cal O}(k)}B(n,k)n|S|\log{n})$ time if $G$ has treewidth at most $k$.
\end{theorem}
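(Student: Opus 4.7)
The plan is to reduce {\sc Balanced Vertex} to $|S|$ applications of Lemma~\ref{lem:wall}, one per former site. The key observation is that, thanks to the priority given to previous sites, the load of every former site in the new diagram can be recovered from purely local information regarding that single site.

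More precisely, I first compute $\mathbf{Vor}(G,S)$ and the loads $\ell_\pi(s_j,S)$ for every $j$ with $1 \leq j \leq p$; by Proposition~\ref{prop:compute-voronoi} this takes ${\cal O}(n+m) = {\cal O}(kn)$ time. I also compute a tree decomposition of width ${\cal O}(k)$ once and for all, reusing it across the sub-procedures. Next, for every $j$ with $1 \leq j \leq p$, I define the cost function
\[
\pi_j(x) = \begin{cases} \pi(x) & \text{if } x \in \mathbb{T}(s_j,S) \\ 0 & \text{otherwise,} \end{cases}
\]
and I apply Lemma~\ref{lem:wall} to $G$ with the singleton vertex subset $\{s_j\}$ and the cost function $\pi_j$. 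This yields, for every $v \in V \setminus \{s_j\}$, the quantity
\[
\delta_j(v) = \sum\{\pi_j(u) \mid d(u,v) < d(u,s_j)\} = \pi\bigl(W(v,s_j) \cap \mathbb{T}(s_j,S)\bigr).
\]
Each such call costs ${\cal O}(k2^{{\cal O}(k)}B(n,k)n\log n)$ time, so together the $|S|$ calls account for the claimed running time.

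It remains to combine the $\delta_j(v)$'s into $L_\pi(S+v)$. Since the new site has least priority, for every $j$ with $1 \leq j \leq p$ we have $\mathbb{T}(s_j,S+v) = \mathbb{T}(s_j,S) \setminus W(v,s_j)$, hence
\[
\ell_\pi(s_j,S+v) = \ell_\pi(s_j,S) - \delta_j(v).
\]
Moreover, Lemma~\ref{lem:struct-territory} implies that the subsets $W(v,s_j) \cap \mathbb{T}(s_j,S)$ partition $\mathbb{T}(v,S+v)$, so
\[
\ell_\pi(v,S+v) = \sum_{j=1}^p \delta_j(v),
\]
and no extra call to Lemma~\ref{lem:wall} is needed. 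For every $v \in V \setminus S$, I then compute
\[
L_\pi(S+v) = \max\Bigl\{\ell_\pi(v,S+v)\Bigr\} \cup \Bigl\{\ell_\pi(s_j,S+v) \,\Big|\, 1 \leq j \leq p\Bigr\}
\]
in ${\cal O}(|S|)$ time, and output a vertex $v$ minimizing this quantity. The total post-processing takes ${\cal O}(n|S|)$ time, which is dominated by the cost of the $|S|$ range-query phases.

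The only delicate point is checking that the cost function $\pi_j$ and the singleton $\{s_j\}$ are admissible inputs for Lemma~\ref{lem:wall}: $\pi_j$ is nonnegative by construction, $G$ still has treewidth at most $k$, and setting the ``source set'' to $\{s_j\}$ is explicitly allowed. The rest is bookkeeping; the conceptual work has already been done in Lemma~\ref{lem:wall}, and the per-site decomposition above is exactly the ``sublists of size one with actualized cost functions'' strategy announced in the introduction.
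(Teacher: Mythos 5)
Your proposal is correct and follows essentially the same strategy as the paper: compute $\mathbf{Vor}(G,S)$ once, then invoke Lemma~\ref{lem:wall} with singleton source $\{s_j\}$ and a cost function zeroed outside $\mathbb{T}(s_j,S)$, recovering $\ell_\pi(s_j,S+v)$ by subtraction. The one difference is a small refinement: the paper makes a separate (extra) call to Lemma~\ref{lem:wall} with the full list $S$ to obtain $\ell_\pi(v,S+v)$ directly, for a total of $|S|+1$ calls, whereas you observe (correctly, since the sets $W(v,s_j)\cap\mathbb{T}(s_j,S)$ are pairwise disjoint and, by Lemma~\ref{lem:struct-territory}, union to $\mathbb{T}(v,S+v)$) that $\ell_\pi(v,S+v)=\sum_j \delta_j(v)$ can be aggregated for free, saving one call.
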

\begin{proof}
For every $v \notin S$, our algorithm in what follows computes $\ell_\pi(v,S+v)$ and all values $\ell_\pi(s,S+v)$, for every $s \in S$.
Note that doing so, we can compute $L_\pi(S+v)$, for every $v \notin S$, in additional ${\cal O}(n|S|)$ time.
\begin{enumerate}
    \item We compute $\mathbf{Vor}(G,S)$. By Proposition~\ref{prop:compute-voronoi}, this can be done in ${\cal O}(m)$ time, that is in ${\cal O}(kn)$ if $G$ has treewidth $k$.
    Doing so, we can compute the loads $\ell_\pi(s,S)$, for every $s \in S$, in additional ${\cal O}(n)$ time.
    \item We apply Lemma~\ref{lem:wall} in order to compute $\ell_\pi(v,S+v)$, for every $v \notin S$.
    \item We then consider each $s \in S$ sequentially. We replace $S$ by the list $(s)$ with a unique site, and the cost function $\pi$ by a new cost function $\pi_s$ such that
    $$\begin{cases}
    \pi_s(u) = 0 \ \text{if} \ u \notin \mathbb{T}(s,S) \\
    \pi_s(u) = \pi(u) \ \text{otherwise.}
    \end{cases}$$
    Note that we can compute the new cost function $\pi_s$ in ${\cal O}(n)$ time, that is because we are given $\mathbf{Vor}(G,S)$.
    We apply Lemma~\ref{lem:wall} in order to compute, for every $v \notin S$, $\pi(W(v,s) \cap \mathbb{T}(s,S))$.
    Doing so, we can compute $\ell_\pi(s,S+v) = \ell_\pi(s,S) - \pi(W(v,s) \cap \mathbb{T}(s,S))$.
\end{enumerate}
Overall, since we apply Lemma~\ref{lem:wall} $|S|+1$ times, the running time is in ${\cal O}(k2^{{\cal O}(k)}B(n,k)n|S|\log{n})$.
\end{proof}

\begin{remark}\label{rk:tw}
Bringmann et al. proved that for every $\varepsilon > 0$, there exists a constant $c$ such that $B(n,k) = 2^{c k} n^\varepsilon$~\cite{BHM20}.
In particular, for every $\varepsilon > 0$, there exists a constant $c'$ such that the running time of our algorithm for bounded treewidth graphs is in ${\cal O}(2^{c'k}|S|n^{1+\varepsilon})$.
The exponential dependency on the treewidth is necessary, even for one site, if one assumes the Hitting Set Conjecture. Indeed, the graph constructed in the proof of Theorem~\ref{thm:hs-hard} has treewidth ${\cal O}(\log{n})$. 
\end{remark}

\subsection{Proper interval graphs}\label{sec:interval}

We devote our last and arguably most technical subsection to proper interval graphs.
Recall that an interval graph is the intersection graph of a family of intervals on the real line.
A realization of an interval graph $G$ is a mapping of its vertices to closed intervals on the real line, so that two vertices are adjacent in $G$ if and only if their corresponding intervals are intersecting.
It is a {\em proper realization} if there are no two vertices such that the interval of one is contained in the interval of the other.
A {\em proper interval graph} is one admitting a proper realization.

Our scheme only applies to proper interval graphs, not to interval graphs, and it is good to explain why it is so.
Roughly, we would like to mimic the scheme for paths (see Lemma~\ref{lem:path}).
A natural path representation for the interval graphs is to order the vertices by nondecreasing left boundary, within some fixed realization.
Then, what we would like to prove is the existence of some constant $\kappa$ (ideally, $\kappa = 1$) such that if we insert a new site $v$ at the end of $S$, then only the $\kappa$ closest sites to $v$ on the left and right of the path representation could have their respective territories modified.
Unfortunately, this is impossible due to the potential existence of vertices whose interval covers an arbitrary number of other intervals.
Therefore, we restrict ourselves to proper realizations, for which the latter case cannot happen.
Some important properties of proper realizations are captured by the following distance formula:

\begin{lemma}[\cite{GaP08}]\label{lem:pties-proper-int}
Let $G=(V,E)$ be a proper interval graph, and let $x_0$ be the vertex with the minimum left boundary in some proper realization of $G$.
We can compute in ${\cal O}(n)$ time a total ordering $\sigma : V \to \{1,\ldots,n\}$ so that, for every vertices $u$ and $v$ with $d(u,x_0) \leq d(v,x_0)$:
$$d(u,v) = \begin{cases}
d(v,x_0) - d(u,x_0) \ \text{if} \ d(u,x_0) < d(v,x_0) \ \text{and} \ \sigma(v) < \sigma(u) \\
d(v,x_0) - d(u,x_0) + 1 \ \text{otherwise}.
\end{cases}$$
\end{lemma}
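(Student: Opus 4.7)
The plan is to derive $\sigma$ from a proper realization of $G$ and then check the distance formula using the clique structure of BFS layers together with a ``suffix'' property of the metric intervals $I(x_0,v)$. First, I would use a linear-time recognition algorithm to obtain a proper realization and sort $V$ by left boundary as $v_1 < v_2 < \cdots < v_n$, so that $x_0 = v_1$. Two standard facts follow: (i) each BFS-layer $L_k$ from $x_0$ is a clique, so $d(u,v) \in \{j-i,\, j-i+1\}$ whenever $u \in L_i,\, v \in L_j,\, i \leq j$; and (ii) the layers $L_0, L_1, \ldots, L_D$ are contiguous intervals in the left-boundary order, hence can be computed in $\mathcal{O}(n)$ time by pointer-jumping through a precomputed rightmost-neighbor function. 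A third key property, which I would prove by downward induction on $k$, is that for every $v \in L_j$ and every $k < j$, the set $I(x_0,v) \cap L_k$ is a suffix of $L_k$ in the left-boundary order; this is because in a proper realization, ``having a neighbor whose left boundary exceeds a fixed threshold'' is a suffix condition on the ordering.

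Next, I would build the BFS tree $T'$ in which the parent of every $v \in L_{k+1}$ is the \emph{leftmost} element of $N(v) \cap L_k$, and define $\sigma$ as the post-order traversal of $T'$ with children visited in left-boundary order. In particular, $x_0$ receives the largest $\sigma$-value, and the subtree of every node occupies a contiguous block of $\sigma$-values ending at the node itself. A short induction, using the suffix property above together with the leftmost-parent rule, shows that for every $v \in L_j$ the $T'$-ancestor $a_k$ of $v$ at level $k < j$ is precisely the leftmost element of $I(x_0,v) \cap L_k$.

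The distance formula is then routine. By (i) it suffices to prove, for $u \in L_i$ and $v \in L_j$ with $i < j$, that $d(u,v) = j-i$ iff $\sigma(v) < \sigma(u)$. Since $d(u,v) = j-i$ is equivalent to $u \in I(x_0,v) \cap L_i$, one only needs $u \in I(x_0,v) \cap L_i \Leftrightarrow \sigma(v) < \sigma(u)$. By the characterisation above, $I(x_0,v) \cap L_i$ is the suffix of $L_i$ starting at $a_i$: if $u$ is strictly to the left of $a_i$ in $L_i$ then $u$'s subtree of $T'$ is visited entirely before $a_i$'s subtree (which contains $v$), giving $\sigma(u) < \sigma(v)$; if $u = a_i$ then $v$ is a $T'$-descendant of $u$, and if $u$ is strictly to the right of $a_i$ in $L_i$ then $u$'s subtree is visited after $a_i$'s, in both cases yielding $\sigma(v) < \sigma(u)$. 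The main obstacle is establishing the suffix property cleanly together with the claim that $a_k$ is always its leftmost element; both rely on the same monotonicity of the rightmost-reach function in the proper realization, and they intertwine in the induction, which is the most delicate bookkeeping of the proof.
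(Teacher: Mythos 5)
The paper does not actually prove this lemma; it is imported as a black box from Gavoille and Paul~\cite{GaP08}, whose distance-labeling scheme for interval graphs is built on precisely this kind of BFS-from-an-extremal-vertex structure with a tie-breaking order. Your blind reconstruction is correct and self-contained, and it fills exactly the gap left open by the citation. Each ingredient checks out: the BFS layers from $x_0$ are cliques and occupy contiguous index blocks in a proper ordering (a consequence of the neighbourhoods $N[v_i]=\{v_{l_i},\dots,v_{r_i}\}$ having both $l_i$ and $r_i$ nondecreasing in $i$); the set $I(x_0,v)\cap L_k$ is a suffix of $L_k$ (your downward induction works because $w\in L_k$ lies in it iff $r_w$ reaches the start of the level-$(k{+}1)$ suffix, a threshold condition on the nondecreasing $r_w$); the leftmost-parent BFS tree in fact satisfies $I(x_0,v)\cap L_k = N(a_{k+1})\cap L_k$, so the level-$k$ ancestor $a_k$ of $v$ is indeed the leftmost element of that suffix; and since the leftmost-parent map is itself nondecreasing along the ordering, a post-order traversal with children visited left to right places $\sigma(v)$ strictly between the vertices of $L_k$ to the left of $a_k$ (all with smaller $\sigma$) and those at or to the right of $a_k$ (all with larger $\sigma$), for every level $k<j$, which is exactly what the displayed formula requires. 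The one thing worth flagging is that the stated ${\cal O}(n)$ bound tacitly assumes a proper ordering (and the neighbourhood endpoints $l_v,r_v$) are already in hand; that is the usual convention and matches how Theorem~\ref{thm:proper-int} invokes the lemma after a separate ${\cal O}(n+m)$-time proper-realizer computation.
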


First, we combine Lemma~\ref{lem:pties-proper-int} with orthogonal range queries (see Sec.~\ref{sec:tw}) in order to precompute the load of every new site in a prioritized Voronoi diagram. 
We could easily do so with $3$-dimensional range queries because, for every vertex $u$, we only need to memorize $\sigma(u)$ and $d(u,x_0),d(u,S)$.
However, we can shave some polylogarithmic factors by reducing further down to $1$-dimensional range queries.
Indeed, a $1$-range tree is just a balanced binary search tree\footnote{Values are stored at the leaves, but this feature is not important in our case.}.
In particular, by using a self-balanced binary search tree implementation, we can serve range queries on a dynamic collection of values. 
Doing so, we only need to store $\sigma(u)$, for every vertex $u$, while other relevant distance information can be deduced contextually for the vertices $u$ that are currently stored in the search tree.

\begin{lemma}\label{lem:compute-territory}
Let $G=(V,E)$ be a proper interval graph, let $x_0$ be the vertex with the minimum left boundary in some proper realization of $G$, and let $\sigma$ be as in Lemma~\ref{lem:pties-proper-int}.
For every nonnegative cost function $\pi$, and every proper vertex list $S$, we can compute the values $\ell_\pi(v,S+v)$, for every $v \notin S$, in ${\cal O}(m+n\log{n})$ time.
\end{lemma}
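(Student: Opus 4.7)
The plan is to reduce the computation of all values $\ell_\pi(v, S+v)$ to $O(n)$ dynamic one-dimensional orthogonal range queries on point sets keyed by $\sigma$, executed during a sweep of the vertices in nondecreasing $d_0$-order. First I would compute, in $O(m)$ time, $d_0(u) := d(u, x_0)$ and $D_S(u) := d(u, S)$ for every $u$ by two BFS, as well as $\sigma$ via Lemma~\ref{lem:pties-proper-int}. By Lemma~\ref{lem:struct-territory}, $\ell_\pi(v, S+v) = \pi(v) + \pi(\{u \mid u \neq v,\ d(u,v) < D_S(u)\})$, so it suffices to evaluate the second term. Plugging in the distance formula of Lemma~\ref{lem:pties-proper-int}, and introducing $E(u) := d_0(u) + D_S(u)$ and $L(u) := d_0(u) - D_S(u)$, the inequality $d(u,v) < D_S(u)$ splits into five mutually exclusive cases according to the sign of $d_0(u) - d_0(v)$ and the relative order of $\sigma(u), \sigma(v)$ (the fifth being the equality $d_0(u) = d_0(v)$, for which the distance is always $1$):
\begin{align*}
(C_1) &:\ d_0(u) < d_0(v),\ \sigma(u) > \sigma(v),\ E(u) > d_0(v); \\
(C_2) &:\ d_0(u) < d_0(v),\ \sigma(u) < \sigma(v),\ E(u) > d_0(v) + 1; \\
(C_3) &:\ d_0(u) > d_0(v),\ \sigma(u) < \sigma(v),\ L(u) < d_0(v); \\
(C_4) &:\ d_0(u) > d_0(v),\ \sigma(u) > \sigma(v),\ L(u) < d_0(v) - 1; \\
(C_5) &:\ d_0(u) = d_0(v),\ D_S(u) \geq 2.
\end{align*}
Writing $\alpha_i(v)$ for the total $\pi$-weight of the $u$'s satisfying $(C_i)$, we obtain $\ell_\pi(v, S+v) = \pi(v) + \sum_{i=1}^{5} \alpha_i(v)$.

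Next, each of $\alpha_1, \ldots, \alpha_4$ is to be computed by a sweep through the vertices in nondecreasing $d_0$-order, maintaining a self-balancing BST indexed by $\sigma$ with $\pi$-values as weights. For $\alpha_1$, for example, I would maintain a tree $T_1$ whose content at sweep parameter $d$ is precisely $\{u \mid d_0(u) < d < E(u)\}$: when the sweep enters time $d$, first delete from $T_1$ every $u$ with $E(u) = d$, then answer $\alpha_1(v) = \sum\{\pi(u) \mid u \in T_1,\ \sigma(u) > \sigma(v)\}$ for every $v$ with $d_0(v) = d$ (a single prefix query in $T_1$), then insert every $u \notin S$ with $d_0(u) = d$. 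The sweeps for $\alpha_2, \alpha_3, \alpha_4$ are analogous, with appropriately shifted insertion/deletion thresholds (for $\alpha_3$, for example, $u$ enters $T_3$ when $d_0(v) > L(u)$ and leaves $T_3$ at $d_0(v) = d_0(u)$) and the appropriate prefix or suffix range in $\sigma$. Finally, $\alpha_5(v)$ can be read off in $O(1)$ per vertex after one linear-time pass that tabulates $\Sigma_d := \sum\{\pi(u) \mid d_0(u) = d,\ D_S(u) \geq 2\}$ for every integer $d$, correcting for $v$ itself.

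Each of the four sweeps performs $O(n)$ BST insertions, deletions, and range queries, each in $O(\log n)$ time; so the total work beyond the initial BFS is $O(n \log n)$, for an overall running time of $O(m + n \log n)$ as claimed. The main obstacle will be the detailed bookkeeping: the insertion and deletion schedules of the four BSTs must be aligned with the $\pm 1$ offsets coming from the $\epsilon$-term in Lemma~\ref{lem:pties-proper-int}, and vertices with $d_0(u) = d_0(v)$ must be routed through $\alpha_5$ only, to avoid any double-counting at the boundaries of the four contribution windows.
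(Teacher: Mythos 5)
Your proposal is correct and takes essentially the same approach as the paper's proof: both apply the distance formula of Lemma~\ref{lem:pties-proper-int} to reduce the territory computation to a sweep over the distance layers from $x_0$, handling the $\sigma$-dependent boundary contributions via balanced BSTs indexed by $\sigma$ with subtree-weight aggregation. The paper merely organizes the sweep slightly differently --- two BSTs (one forward pass, one backward pass), each paired with a running counter $\Lambda^{\mp}$ for the unconditional contributions --- instead of your four forward sweeps, but the case analysis and the $O(m + n\log n)$ accounting are the same.
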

\begin{proof}
We first describe the algorithm.
\begin{enumerate}
    \item We compute all distances $d(v,x_0), d(v,S)$, for every vertex $v$.
    Doing so, we partition $V$ in distance layers $X_0=\{x_0\}, X_1 = N(x_0), X_2, \ldots, X_{e(x_0)}$, where $e(x_0) = \max_{v \in V}d(x_0,v)$ and, for every $j$ with $0 \leq j \leq e(x_0)$, $X_j = \{v \in V \mid d(x_0,v) = j \}$.
    This can be done in ${\cal O}(n+m)$ time.
    Let us further compute $\mathbf{Vor}(G,S)$, which by Proposition~\ref{prop:compute-voronoi} can also be done in ${\cal O}(n+m)$ time.
    For convenience, for every vertex $v$, we denote in what follows by $s_v$ the unique site of $S$ such that $v \in \mathbb{T}(s_v,S)$.
    
    \item We consider each distance layer $X_0,X_1,\ldots, X_{e(x_0)}$ sequentially. Throughout all iterations of this phase, we maintain a counter $\Lambda^-$ and a balanced binary search tree $T^-$ such that the following invariants hold before considering any layer $X_j$:
    \begin{itemize}
        \item $\Lambda^- = \sum\{ \pi(u) \mid d(x_0,u) < j, \ \text{and} \ d(u,s_u) > j - d(x_0,u) + 1  \}$.
        
        \smallskip
        Since the cost $\pi(u)$ of every vertex $u$ needs to be added and removed from $\Lambda^-$ at most once, we can maintain the counter $\Lambda^-$ throughout this whole phase in total ${\cal O}(n)$ time. 
        \item we store in $T^-$ all vertices $u$ such that: $d(x_0,u) < j$, and $d(u,s_u) = j - d(x_0,s_u) + 1$. A vertex $u$ is identified in $T^-$ with the value $\sigma(u)$. Furthermore, at each vertex $u$ in $T^-$ we store the sum of the costs $\pi(u')$ of all vertices $u'$ in its rooted subtree.
        
        Each vertex $u$ is inserted and deleted from $T^-$ at most once. Standard implementations such as AVL~\cite{AVL62} can be easily modified in order to dynamically maintain the cost of rooted subtrees after each insertion/deletion. As a result, we can maintain $T^-$ throughout this whole phase in total ${\cal O}(n\log{n})$ time. 
    \end{itemize}
    For every $v \in X_j$, we compute the cumulative cost of all vertices $u \in T^-$ such that $\sigma(v) < \sigma(u)$. Since every node of $T^-$ stores the cost of its rooted subtree, this can be done in ${\cal O}(\log{n})$ time by using range searching techniques, namely: we search for the vertex $u$ of $T^-$ such that $\sigma(u) > \sigma(v)$ and $\sigma(u)$ is minimized, then we process the costs of all vertices on the path to the root and of their respective right subtrees (details left out can be found, {\it e.g.}, in~\cite{Ben79,BHM20}). Let $\lambda^-(v) = \Lambda^- + \sum\{ \pi(u) \mid u \in T^-, \sigma(v) < \sigma(u) \}$. 
    
    \item Roughly, we do the reverse of the previous phase. Specifically, we consider each distance layer $X_{e(x_0)},X_{e(x_0)-1},\ldots,X_0$ sequentially. Throughout all iterations of this phase, we maintain a counter $\Lambda^+$ and a balanced binary search tree $T^+$ such that the following invariants hold before considering any layer $X_j$:
    \begin{itemize}
        \item $\Lambda^+ = \sum\{ \pi(u) \mid d(x_0,u) > j, \ \text{and} \ d(u,s_u) > d(x_0,u) - j + 1  \}$.
    
        \item we store in $T^+$ all vertices $u$ such that: $d(x_0,u) > j$, and $d(u,s_u) = d(x_0,s_u)-j + 1$. A vertex $u$ is identified in $T^+$ with the value $\sigma(u)$. Furthermore, at each vertex $u$ in $T^+$ we store the sum of the costs $\pi(u')$ of all vertices $u'$ in its rooted subtree.
    \end{itemize}
    For every $v \in X_j$, we compute the cumulative cost of all vertices $u \in T^+$ such that $\sigma(v) > \sigma(u)$. Let $\lambda^+(v) = \Lambda^+ + \sum\{ \pi(u) \mid u \in T^+, \sigma(v) > \sigma(u) \}$. 
\end{enumerate}
The total running time is in ${\cal O}(m+n\log{n})$.

\medskip
Let $j$ with $0 \leq j \leq e(x_0)$ be fixed.
We set $\Pi(j) = \sum\{ \pi(u) \mid u \in X_j, \ d(u,s_u) > 1 \}$.
For every $v \in X_j \setminus S$, we claim that we have
$$\ell_\pi(v,S+v) = \begin{cases}
\lambda^-(v) + \lambda^+(v) + \Pi(j) \ \text{if} \ d(v,s_v) > 1 \\
\lambda^-(v) + \lambda^+(v) + \Pi(j) + \pi(v) \ \text{else}
\end{cases}$$
Indeed, let $u \in V \setminus \{v\}$ be arbitrary. 
There are three different cases:
\begin{itemize}
    \item If $d(u,x_0) = j$, then by Lemma~\ref{lem:pties-proper-int}, $d(u,v) = 1$. In particular, $u \in \mathbb{T}(v,S+v)$ if and only if $d(u,s_u) > 1$. Therefore, $\ell_\pi(v,S+v) \geq \Pi(j)$, and if $d(v,s_v) =1$ then $\ell_\pi(v,S+v) \geq \Pi(j)+\pi(v)$. 
    \item If $d(u,x_0) < j$, then by Lemma~\ref{lem:pties-proper-int}, $d(u,v) \in \{j-d(u,x_0),j-d(u,x_0)+1\}$. In particular, $u \notin \mathbb{T}(v,S+v)$ if $d(u,s_u) \leq j-d(u,x_0)$. Furthermore, $u \in \mathbb{T}(v,S+v)$ if $d(u,s_u) > j-d(u,x_0)+1$. Otherwise, $d(u,s_u) = j-d(u,x_0)+1$, and so we have $u \in \mathbb{T}(v,S+v)$ if and only if $d(u,v) = j-d(u,x_0)$, which by Lemma~\ref{lem:pties-proper-int} is equivalent to $\sigma(v) < \sigma(u)$. As a result, $\ell_\pi(v,S+v) \geq \lambda^-(v)$. 
    \item Otherwise, $d(u,x_0) > j$. By Lemma~\ref{lem:pties-proper-int}, $d(u,v) \in \{d(u,x_0)-j,d(u,x_0)-j+1\}$. In particular, $u \notin \mathbb{T}(v,S+v)$ if $d(u,s_u) \leq d(u,x_0)-j$. Furthermore, $u \in \mathbb{T}(v,S+v)$ if $d(u,s_u) > d(u,x_0)-j+1$. Otherwise, $d(u,s_u) = d(u,x_0)-j+1$, and so we have $u \in \mathbb{T}(v,S+v)$ if and only if $d(u,v) = d(u,x_0)-j$, which by Lemma~\ref{lem:pties-proper-int} is equivalent to $\sigma(u) < \sigma(v)$. As a result, $\ell_\pi(v,S+v) \geq \lambda^+(v)$. 
\end{itemize}
Our claim directly follows from this case analysis.
\end{proof}

From another application of Lemma~\ref{lem:pties-proper-int}, where we process distance layers from $x_0$ in a different way than what we did for Lemma~\ref{lem:compute-territory}, we obtain the following complementary result:

\begin{lemma}\label{lem:process-clique}
Let $G=(V,E)$ be a proper interval graph, let $x_0$ be the vertex with the minimum left boundary in some proper realization of $G$, and let $\sigma$ be as in Lemma~\ref{lem:pties-proper-int}.
For every nonnegative cost function $\pi$, and every proper vertex list $S$, we can compute the values $\Lambda(v) = \max\{\ell_\pi(s,S+v) \mid s \in S\}$, for every vertex $v \notin S$, in ${\cal O}(m+n\log{n})$ time.
\end{lemma}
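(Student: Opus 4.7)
The plan is to emulate the layer-by-layer sweep of Lemma~\ref{lem:compute-territory}, but now to keep track, for each former site $s \in S$, of the loss $m_s(v) := \pi(\mathbb{T}(s,S) \cap W(v,s))$ that $v$ would inflict on $s$. Since $\ell_\pi(s, S+v) = \ell_\pi(s, S) - m_s(v)$, the desired quantity is $\Lambda(v) = \max_{s \in S}(L_s - m_s(v))$, writing $L_s := \ell_\pi(s, S)$. I will maintain this maximum in a segment tree ${\cal M}$ indexed by the sites, each leaf storing $L_s - m_s$, so that a point update or a global max query takes ${\cal O}(\log |S|)$ time.

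The heart of the algorithm is a classification of each $u$ relative to the candidate $v$. Setting $j = d(v, x_0)$, $j_u = d(u, x_0)$, and $\ell_u = d(u, s_u)$, and applying Lemma~\ref{lem:pties-proper-int}, I observe that $u \in W(v, s_u) \cap \mathbb{T}(s_u, S)$ iff one of three disjoint conditions holds: \emph{Type A}, $\ell_u \geq |j_u - j| + 2$, with the contribution independent of $\sigma(v)$; \emph{Type $B_-$}, $j_u < j$ and $\ell_u = j - j_u + 1$, contributing iff $\sigma(v) < \sigma(u)$; or \emph{Type $B_+$}, $j_u > j$ and $\ell_u = j_u - j + 1$, contributing iff $\sigma(v) > \sigma(u)$. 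The vertex $v$ itself is captured by Type A whenever $\ell_v \geq 2$; the lone case $\ell_v = 1$ will be handled manually by a temporary $\pm \pi(v)$ bump of $m_{s_v}$ around the query.

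A short case analysis then shows that, sweeping the query point $(j, \sigma(v))$ in lexicographic order, the contribution of $u$ to $m_{s_u}$ is a $\{0, \pi(u)\}$-valued step function with unique ``up'' step at $(j_u - \ell_u + 1, \sigma(u))$ and unique ``down'' step at $(j_u + \ell_u - 1, \sigma(u))$: indeed, at the $B_+ \to A$ and $A \to B_-$ layer transitions, the state at the end of one layer's $\sigma$-sweep coincides with the state at the start of the next, leaving exactly these two events. Hence each $u$ with $\ell_u \geq 2$ produces exactly two events, an increment and a decrement of $m_{s_u}$ by $\pi(u)$, yielding ${\cal O}(n)$ events overall. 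I bucket all events together with all queries (one per $v \notin S$) in lex-$(j, \sigma)$ order via counting sort in ${\cal O}(n)$ time, then process them in one pass: each event triggers a point update of ${\cal M}$, and each query performs the $\pm \pi(v)$ correction (when $\ell_v = 1$) and reads the global max. Together with the ${\cal O}(n + m)$ preprocessing for $\mathbf{Vor}(G, S)$ (Proposition~\ref{prop:compute-voronoi}) and the BFS from $x_0$, this yields the promised ${\cal O}(m + n \log n)$ running time.

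The main obstacle is exactly this two-event reduction. One has to verify, case by case over the three Types, that $u$'s contribution to $m_{s_u}$ stays at $\pi(u)$ throughout every layer strictly between its two events, including the boundary situations where $j_u - \ell_u + 1 < 0$ (in which case $m_{s_u}$ must already include $\pi(u)$ before the sweep starts at layer $0$). Once this bookkeeping is nailed down, correctness follows directly from Lemma~\ref{lem:struct-territory}, which gives $\ell_\pi(s, S+v) = L_s - m_s(v)$ for every $s \in S$.
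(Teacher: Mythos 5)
Your proof is correct and follows the same fundamental strategy as the paper: sweep the query points $(j,\sigma(v))$ in lexicographic order while maintaining, in a max-aggregating structure over $S$, the current value of $\ell_\pi(s,S)-m_s$ for each site $s$. What differs is the bookkeeping. The paper maintains a coarser heap invariant \emph{between} layers (only the Type~A contributions relative to the current $j$), and then within each layer $j$ performs a temporary $O(|F_j|)$ deduction for $F_j^<$, a $\sigma$-ordered sweep of $F_j\cup X_j$, and an explicit $O(|F_j^>|)$ \emph{reset} of the keys at the end of the layer (its Step~4). Your formulation replaces this reset-and-restore dance with the sharper observation that each vertex $u$ with $\ell_u\ge 2$ contributes to $m_{s_u}$ along a single lex-contiguous interval of the sweep, bounded by exactly two events at $(j_u-\ell_u+1,\sigma(u))$ and $(j_u+\ell_u-1,\sigma(u))$; the state at the end of layer $j$ coincides with the state at the start of layer $j+1$ precisely because the Type~$B_+$-to-A and A-to-$B_-$ transitions are already accounted for by those two events. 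This yields the same $O(m+n\log n)$ bound (and correctly handles the $j_u-\ell_u+1<0$ boundary and the $\ell_v=1$ query bump, as the paper's Step~3 does), but is tighter and somewhat more elegant: exactly $2$ data-structure updates per vertex, no per-layer key reset, and a single counting sort over events and queries rather than a per-layer sort of $F_j\cup X_j$.
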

\begin{proof}
We need some notations for what follows.
Let $e(x_0)  = \max_{v \in V}d(x_0,v)$.
For every $j$ with $0 \leq j \leq e(x_0)$, let $X_j = \{ v \in V \mid d(x_0,v) = j \}$.%; let $S_j = S \cap V_j$. 
Finally, for every vertex $v$, let $s_v$ be the unique site of $S$ such that $v \in \mathbb{T}(s_v,S)$.

Roughly, the algorithm consists of two phases.
During its first phase, we compute the Voronoi diagram of $G$ with respect to $S$, and we create a max-heap ${\cal H}$ storing all sites in $S$.
Then, during the second and main phase, we process the heap in order to compute $\Lambda(v)$, for every $v \in V \setminus S$.
More precisely, we iterate over the distance layers $X_0,X_1,\ldots,X_{e(x_0)}$ and, for every $j$ with $0 \leq j \leq e(x_0)$, we compute all values $\Lambda(v)$, for $v \in X_j \setminus S$, sequentially.
Intuitively, when we consider a vertex $v \in X_j \setminus S$, the key of every site $s$ in ${\cal H}$ must be equal to $\ell_\pi(s,S+v)$.

\bigskip
\noindent
{\bf First Phase.}
We compute $\mathbf{Vor}(G,S)$, which by Proposition~\ref{prop:compute-voronoi} can be done in ${\cal O}(n+m)$ time.
We compute, also in ${\cal O}(n+m)$ time, the distances $d(x_0,v), d(v,S)$ for every vertex $v$.
Doing so, we can partition $V$ in $X_0,X_1,\ldots,X_{e(x_0)}$.
Finally, we create a max-heap ${\cal H}$ and, for every site $s$ of $S$, we insert $s$ into ${\cal H}$ with key equal to 
$$key(s) = \ell_\pi(s,S)$$
Being given $\mathbf{Vor}(G,S)$, we can compute all the keys in ${\cal O}(n)$ time.
Every insertion in a heap can be done in ${\cal O}(\log{n})$ time.
Therefore, we can initialize the heap ${\cal H}$ in ${\cal O}(n\log{n})$ time.

\smallskip
Throughout every loop $j$ of the second phase, we maintain the following invariant for ${\cal H}$: for every site $s$ of $S$,
\[key(s) = \ell_\pi(s,S) - \sum\{ \pi(u) \mid u \in \mathbb{T}(s,S) \ \text{and} \ d(u,s_u) > |j-d(u,x_0)| +1 \}\]
Intuitively, this is because any such vertex $u$ {\em must} be in the territory of $v \notin S$, provided $v \in X_j$. 
For that, before we start the $j^{th}$ loop, we need to actualize the keys of all sites.
More precisely, for every vertex $u$:
\begin{itemize}
    \item We set $key(s_u) = key(s_u)-\pi(u)$ if the following conditions hold:
    $$\begin{cases}
        d(u,s_u) \geq 2 \\
        j = \max\left\{0, d(u,x_0) - d(u,s_u) + 2\right\}
    \end{cases}$$
    Indeed, if $d(u,s_u) \leq 1$, then there is no $j'$ such that $d(u,s_u) > |j'-d(u,x_0)| + 1$.
    Otherwise, for $j \leq d(u,x_0)$, we have $d(u,s_u) > d(u,x_0) - j + 1$ if and only if $j > d(u,x_0)-d(u,s_u)+1$.
    \item We set $key(s_u) = key(s_u)+\pi(u)$ if the following conditions hold:
    $$\begin{cases}
        d(u,s_u) \geq 2 \\
        j = d(u,s_u) + d(u,x_0) - 1
    \end{cases}$$
    Indeed, if $d(u,s_u) \geq 2$ and $j \geq d(u,x_0)$, then we have $d(u,s_u) > j - d(u,x_0) + 1$ if and only if $j < d(u,s_u) + d(u,x_0) - 1$.
\end{itemize}
For every vertex $u$, there are at most two loops such that we modify $key(s_u)$.
Therefore, we need to modify any key ${\cal O}(n)$ times during the second phase.
Overall, we can maintain our invariant throughout the second phase in ${\cal O}(n\log{n})$ time.

\bigskip
\noindent
{\bf Second Phase.}
We are now ready to describe the main loop of the algorithm.
We consider each distance layer $X_0,X_1,\ldots,X_{e(x_0)}$ sequentially.
Let $j$ with $0 \leq j \leq e(x_0)$ be fixed.
We compute $\Lambda(v)$ for every vertex $v \in X_j \setminus S$.

For that, we define the {\em frontier} $F_j$, as follows:
$$F_j = \{ u \in V \setminus X_j \mid d(u,s_u) = |j-d(u,x_0)| + 1\}.$$ 
Intuitively, being given a vertex $v \in X_j \setminus S$, the vertices $u$ in the frontier may or may not be in $\mathbb{T}(v,S+v)$ depending on whether $\sigma(u)$ is bigger or smaller than $\sigma(v)$. 
Determining, for every site $s$, the vertices in $F_j \cap \mathbb{T}(s,S)$ that belong to $v$'s territory, is the main difficulty in our algorithm.
In what follows, we find it convenient to bipartition $F_j$ in $F_j^<, F_j^>$ such that:
\begin{align*}
    F_j^< &= \{ u \in F_j \mid d(x_0,u) < j \} \\
    F_j^> &= \{ u \in F_j \mid d(x_0,u) > j \}
\end{align*}
We subdivide the loop in many intermediate steps:
\begin{enumerate}
    \item For every site $s$ of $S$, we set: $$key(s) = key(s) - \sum\{\pi(u) \mid u \in F_j^< \cap \mathbb{T}(s,S)\} $$
    Since we are given $\mathbf{Vor}(G,S)$, this can be done in ${\cal O}(|F_j^<|\log{n})$ time by scanning the vertices of $F_j^<$. 
    \item We order the vertices of $F_j \cup X_j$ according to $\sigma$. This can be done in ${\cal O}(|F_j|\log{n}+|X_j|\log{n})$ time by using any fast sorting algorithm.
    \item We scan the vertices of $F_j \cup X_j$ by increasing $\sigma$ values:
    \begin{itemize}
        \item {\it Case of a vertex $u \in F_j^<$}. We set $key(s_u) = key(s_u) + \pi(u)$.
        \item {\it Case of a vertex $u \in F_j^>$}. We set $key(s_u) = key(s_u) - \pi(u)$.
        \item {\it Case of a vertex $v \in X_j \setminus S$}. First, if $d(s_v,v) = 1$, then we set $key(s_v) = key(s_v) - \pi(v)$.
        We set $\Lambda(v)$ to be the maximum key in ${\cal H}$.
        Finally, if $d(s_v,v) = 1$, then we reset $key(s_v) = key(s_v) + \pi(v)$.
    \end{itemize}
    This can be done in ${\cal O}(\log{n})$ time per vertex considered.
    \item We reset the keys in ${\cal H}$, namely, for every site $s$ of $S$, we set: $$key(s) = key(s) + \sum\{\pi(u) \mid u \in F_j^> \cap \mathbb{T}(s,S)\} $$
    It takes ${\cal O}(|F_j^>|\log{n})$ time.
\end{enumerate}
The total running time for the $j^{th}$ loop is in ${\cal O}(|F_j|\log{n}+|X_j|\log{n})$.
Since every vertex $u$ can appear in the frontier $F_{j'}$ for at most two indices $j'$, the overall running time of the second phase is in ${\cal O}(n \log{n})$.

\medskip
In order to prove correctness of our algorithm, let $j$ with $0 \leq j \leq e(x_0)$ be fixed.
Let $v \in X_j \setminus S$ be arbitrary.
In order to prove that we correctly computed $\Lambda(v)$, it suffices to prove that at the time we consider this vertex during the $j^{th}$ loop we have $key(s) = \ell_\pi(s,S+v)$ for every site $s$ of $S$.
That is indeed the case because:
\begin{align*}
    key(s) &= \ell_\pi(s,S) - \sum\{ \pi(u) \mid u \in \mathbb{T}(s,S) \ \text{and} \ d(u,s_u) > |j-d(u,x_0)| +1 \} \\
    &\hspace{50pt}- \sum\{\pi(u) \mid u \in F_j^< \cap \mathbb{T}(s,S)\} \\
    &\hspace{50pt}+ \sum\{\pi(u) \mid u \in F_j^< \cap \mathbb{T}(s,S) \ \text{and} \ \sigma(u) < \sigma(v)\} \\
    &\hspace{50pt}- \sum\{\pi(u) \mid u \in F_j^> \cap \mathbb{T}(s,S) \ \text{and} \ \sigma(u) < \sigma(v)\} \\
    &\hspace{50pt}-\mathbf{1}[s = s_v \ \text{and} \ d(s,v)=1] \cdot \pi(v) \\
    &= \ell_\pi(s,S) - \sum\{ \pi(u) \mid u \in \mathbb{T}(s,S) \setminus \{v\} \ \text{and} \ d(u,s_u) > |j-d(u,x_0)| +1 \} \\
    &\hspace{50pt}- \sum\{\pi(u) \mid u \in F_j^< \cap \mathbb{T}(s,S) \ \text{and} \ \sigma(u) > \sigma(v) \} \\
    &\hspace{50pt}- \sum\{\pi(u) \mid u \in F_j^> \cap \mathbb{T}(s,S) \ \text{and} \ \sigma(u) < \sigma(v)\} \\
    &\hspace{50pt}-\mathbf{1}[s = s_v] \cdot \pi(v) \\
     &= \ell_\pi(s,S) - \sum\{ \pi(u) \mid u \in \mathbb{T}(s,S) \setminus \{v\} \ \text{and} \ d(u,s_u) > |j-d(u,x_0)| +1 \} \\
    &\hspace{50pt}- \sum\{\pi(u) \mid u \in F_j^< \cap \mathbb{T}(s,S) \ \text{and} \ d(u,v) = j-d(u,x_0) \} \\
    &\hspace{50pt}- \sum\{\pi(u) \mid u \in F_j^> \cap \mathbb{T}(s,S) \ \text{and} \ d(u,v) = d(u,x_0)-j\} \\
    &\hspace{50pt}-\mathbf{1}[s = s_v] \cdot \pi(v) \\
    &= \ell_\pi(s,S) - \sum\{ \pi(u) \mid u \in \mathbb{T}(s,S) \ \text{and} \ d(u,v) < d(u,s) \} \\
    &= \ell_\pi(s,S+v)
\end{align*}
\end{proof}

The following theorem summarizes our results in this subsection:

\begin{theorem}\label{thm:proper-int}
We can solve {\sc Balanced Vertex} in ${\cal O}(m+n\log{n})$ time if $G$ is a proper interval graph.
\end{theorem}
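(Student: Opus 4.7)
The plan is to assemble the pieces already built up in Lemmas~\ref{lem:pties-proper-int}, \ref{lem:compute-territory} and~\ref{lem:process-clique}. By definition,
\[
L_\pi(S+v) \;=\; \max\bigl\{\ell_\pi(v,S+v),\; \max_{s\in S}\ell_\pi(s,S+v)\bigr\} \;=\; \max\bigl\{\ell_\pi(v,S+v),\; \Lambda(v)\bigr\},
\]
so to solve \textsc{Balanced Vertex} it suffices to compute both quantities for every $v\notin S$, combine them by a single $\max$, and return a minimizer.

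First, I would detect in linear time whether $G$ is a proper interval graph and, if so, compute a proper realization together with the vertex $x_0$ of minimum left boundary and the ordering $\sigma$ of Lemma~\ref{lem:pties-proper-int}; this step is ${\cal O}(n+m)$. Then I would invoke Lemma~\ref{lem:compute-territory} on $(G,\pi,S,\sigma,x_0)$ to obtain the values $\ell_\pi(v,S+v)$ for every $v\notin S$ in ${\cal O}(m+n\log n)$ time, and then invoke Lemma~\ref{lem:process-clique} on the same inputs to obtain $\Lambda(v)$ for every $v\notin S$, again in ${\cal O}(m+n\log n)$ time.

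Given these two tables, a single linear scan over $V\setminus S$ computes $L_\pi(S+v)=\max\{\ell_\pi(v,S+v),\Lambda(v)\}$ for each candidate and returns a vertex $v^\star$ minimizing this value. The total running time is dominated by the two lemma invocations, hence ${\cal O}(m+n\log n)$ as claimed. There is no real obstacle here beyond the two lemmas themselves; the content of the theorem is essentially the observation that the load of the new Voronoi diagram neatly splits into the load of the fresh site (handled by Lemma~\ref{lem:compute-territory}) and the worst load among the old sites (handled by Lemma~\ref{lem:process-clique}), and both can be batched over all potential insertion points $v$ within the same time budget thanks to the distance characterization of Lemma~\ref{lem:pties-proper-int} and the balanced binary search tree / heap machinery developed in the previous proofs.
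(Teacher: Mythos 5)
Your proposal matches the paper's proof: compute a proper realization, $x_0$, and $\sigma$ in linear time, then invoke Lemma~\ref{lem:compute-territory} to get $\ell_\pi(v,S+v)$ and Lemma~\ref{lem:process-clique} to get $\Lambda(v)=\max_{s\in S}\ell_\pi(s,S+v)$, and combine by a final scan. You simply make explicit the combining step that the paper leaves implicit; the substance is identical.
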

\begin{proof}
We compute a proper realizer, which can be done in ${\cal O}(n+m)$ time~\cite{Cor04}.
Then, in additional ${\cal O}(n)$ time, we compute the vertex $x_0$ of minimum left boundary and the ordering $\sigma$ of Lemma~\ref{lem:pties-proper-int}.
We are done applying Lemmas~\ref{lem:compute-territory} and~\ref{lem:process-clique}.
\end{proof}

\section{Conclusion}\label{sec:ccl}
In this paper, we introduced a new problem on graph Voronoi diagrams, and we completely settled its complexity for general graphs under plausible complexity assumptions.
We left open whether our conditional time lower bound could also hold under weaker complexity hypotheses.

We proposed various ways to tackle with our new problem on tree-like and path-like topologies.
In this respect, two immediate questions following our work could be whether our results can be extended to bounded-treewidth graphs with nonconstant number of sites, and to interval graphs.

\medskip
\noindent\textbf{Acknowledgement}\textit{.} This work was supported by a grant of the Ministry of Research, Innovation and Digitalization, CCCDI - UEFISCDI, project number PN-III-P2-2.1-PED-2021-2142, within PNCDI III.

\bibliographystyle{abbrv}
\bibliography{biblio}

\end{document}